\theoremstyle{plain}
\newtheorem{thm}{Theorem}
\newtheorem{proposition}[thm]{Proposition}
\newtheorem{corollary}[thm]{Corollary}
\newtheorem{lemma}[thm]{Lemma}
\newtheorem{definition}[thm]{Definition}
\theoremstyle{remark}
\newtheorem{rem}{Remark}
\newtheorem*{note}{Note}
\newcommand{\bb}{\begin{equation}}
\newcommand{\ee}{\end{equation}}
\newcommand{\bbb}{\begin{equation*}}
\newcommand{\eee}{\end{equation*}}
\newcommand{\ketbra}[2]{\ket{#1}\!\!\bra{#2}}
\newcommand{\hr}[1]{\hat{\rho}_{#1}}
\newcommand{\bs}[1]{\boldsymbol{#1}}
\newcommand*{\coloneqq}{\mathrel{\vcenter{\baselineskip0.5ex \lineskiplimit0pt \hbox{\scriptsize.}\hbox{\scriptsize.}}} =}
\newcommand{\tclass}[1]{\mathcal{T}(\mathcal{H}_{#1})}
\newcommand{\id}{\mathds{1}}
\newcommand{\texteq}[1]{\overset{\text{#1}}{=}}
\newcommand{\textleq}[1]{\overset{\text{#1}}{\leq}}
\newcommand{\textgeq}[1]{\overset{\text{#1}}{\geq}}
\newcommand\xxrightarrow[2][]{\mathrel{%
  \setbox2=\hbox{\stackon{\scriptstyle#1}{\scriptstyle#2}}%
  \stackunder[0pt]{%
    \xrightarrow{\makebox[\dimexpr\wd2\relax]{$\scriptstyle#2$}}%
  }{%
   \scriptstyle#1\,%
  }%
}}
\newcommand{\tends}[2]{\xxrightarrow[\! #2 \!]{\mathrm{#1}}}
\newcommand{\tendsk}[1]{\xxrightarrow[\! k\rightarrow \infty\!]{\mathrm{#1}}}
\DeclareMathOperator{\Tr}{Tr}
\begin{document}
\title{All phase-space linear bosonic channels are approximately Gaussian dilatable}

\author{Ludovico Lami}
\affiliation{School of Mathematical Sciences and Centre for the Mathematics and Theoretical Physics of Quantum Non-Equilibrium Systems, University of Nottingham, University Park, Nottingham NG7 2RD, United Kingdom}
\email{ludovico.lami@gmail.com}

\author{Krishna Kumar Sabapathy}
\affiliation{Quantum Information and Communication, \'{E}cole polytechnique de Bruxelles, CP 165, Universit\'{e} libre de Bruxelles, 1050 Bruxelles, Belgium}
\affiliation{Xanadu, 372 Richmond St W, Toronto ON, M5V 2L7, Canada}
\email{krishnakumar.sabapathy@gmail.com}

\author{Andreas Winter}
\affiliation{F\'{\i}sica Te\`{o}rica: Informaci\'{o} i Fen\`{o}mens Qu\`{a}ntics, Departament de F\'{i}sica, Universitat Aut\`{o}noma de Barcelona, ES-08193 Bellaterra (Barcelona), Spain}
\affiliation{ICREA -- Instituci\'o Catalana de Recerca i Estudis Avan\c{c}ats, Pg.~Lluis Companys 23, ES-08010 Barcelona, Spain}
\email{andreas.winter@uab.cat}

\begin{abstract}
We compare two sets of multimode quantum channels acting on a finite collection of harmonic oscillators: (a) the set of linear bosonic channels, whose action is described as a linear transformation at the phase space level; and (b) Gaussian dilatable channels, that admit a Stinespring dilation involving a Gaussian unitary.
Our main result is that the set  (a) coincides with the closure of (b) with respect to the strong operator topology. We also present an example of a channel in (a) which is not in (b), implying that taking the closure is in general necessary. This provides a complete resolution to the conjecture posed in Ref.\ [K.K.~Sabapathy and A.~Winter,~\href{https://journals.aps.org/pra/abstract/10.1103/PhysRevA.95.062309}{Phys.\ Rev.\ A 95, 062309 (2017)}]. Our proof technique is constructive, and yields an explicit procedure to approximate a given linear bosonic channel by means of Gaussian dilations. It turns out that all linear bosonic channels can be approximated by a Gaussian dilation using an ancilla with the same number of modes as the system. We also provide an alternative dilation where the unitary is fixed in the approximating procedure. Our results apply to a wide range of physically relevant channels, including all Gaussian channels such as amplifiers, attenuators, phase conjugators, and also non-Gaussian channels such as additive noise channels and photon-added Gaussian channels.  The method also provides a clear demarcation of the role of Gaussian and non-Gaussian resources in the context of linear bosonic channels. 
Finally, we also obtain independent proofs of classical results such as the quantum Bochner theorem, and  develop some tools to deal with convergence of sequences of quantum channels on continuous variable systems that may be of independent interest. 
\end{abstract}

\maketitle 

\tableofcontents

\section{Introduction} \label{sec intro}

In this work we investigate fundamental properties of quantum transformations on modes of electromagnetic radiation field, an example of a continuous variable system, i.e.\ systems whose degrees of freedom are continuous in nature~\cite{loock-review}. In particular, we study optical implementations of what are known as linear bosonic channels, as introduced by Holevo and Werner~\cite{hw01}. The term `linear' has many connotations in quantum optics, but in this article linear channels are those for which the input signal undergoes a linear transformation when described at the level of phase space characteristic functions, i.e.\ phase-space linear (see Eq.~\eqref{linear chi} below). Linearity at the level of density operators is always assumed. 

Although the class of linear bosonic channels is special in many respects, it turns out to encompass many examples of physically relevant channels. For instance, all bosonic Gaussian channels~\cite{raul-rmp} are linear, as well as some non-Gaussian operations such as general additive classical noise channels~\cite{idler}, photon-added Gaussian channels~\cite{pang}, to list a few.  Linear channels are instrumental in obtaining benchmarks for teleportation and storage of squeezed states~\cite{wolfbench}, and have been investigated from an information-theoretic point of view with respect to reversibility~\cite{Shirokov2013} and extremality~\cite{Holevo2013}. Capacities of the general additive noise channels have been studied in both the classical~\cite{verdu} and quantum settings~\cite{stefan}.

An important observation is that the set of linear channels contains all so-called `Gaussian dilatable channels', defined as those that can be obtained through quadratic interactions between the system and an ancillary environment prepared in an arbitrary state~\cite{pang}. We will explore this connection in detail in this article. Note that Gaussian dilatable channels can be implemented in a relatively easy way when compared to general non-Gaussian transformations, while still retaining some of the interesting features of the latter. This is particularly important as it is known that non-Gaussian resources, notwithstanding the complexity of their harnessing, are indispensable for many quantum information processing and quantum computation protocols~\cite{andersen2015hybrid}.

To motivate the need to go beyond the Gaussian formalism in quantum optics, consider e.g.\ that in spite of the rich structure Gaussian entanglement exhibits~\cite{simon00, Werner-Wolf, Giedke2001, revisited, LL-log-det}, it turns out that it can not be distilled using Gaussian operations alone~\cite{ent-gauss1, ent-gauss2, ent-gauss3}. More generally, a similar no-go result holds for generic state conversion tasks in arbitrary Gaussian resource theories~\cite{G-resource-theories}. 
On the other hand, non-Gaussian operations are provably necessary to realize universal quantum computation~\cite{seth} and many other quantum information processing tasks~\cite{cerf2005non, adesso2009optimal,ohliger2010limitations, sabapathy2011robustness, andersen2013high, sabapathy2018states, dell2007continuous, su2018implementing,walschaers2018tailoring}.
In view of these limitations, it will not come as a surprise that a substantial effort has been put into developing a consistent resource theory of non-Gaussianity. Many non-Gaussianity measures have been proposed and studied in the past decade~\cite{Genoni2008, Genoni2010, ivan2012measure,Marian2013, Takagi2018,albarelli2018resource}, that can be applied e.g.\ to bound the conversion rates between arbitrary states by means of Gaussian operations~\cite{Takagi2018}. Recently, a resource theory of non-Gaussianity for channels has also been put forth~\cite{quntao18}.

Another reason to study Gaussian dilatable channels besides the fact that they constitute one of the few analytically treatable classes of non-Gaussian operations, is that they provide a systematic way of investigating classes of operations with certain physically meaningful properties, e.g.\ those that can be implemented by means of passive optics and arbitrary states~\cite{idler} or passive optics with passive ancillary states~\cite{Jabbour2018}, the latter being motivated from a thermodynamic context, and also for obtaining the operator-sum representations of the corresponding channels \cite{ivan2011operator,pang}.

In view of their operational and theoretical importance, in this paper we study the set of linear bosonic channels in great detail. Our main result establishes that every linear bosonic channel can be approximated by a sequence of Gaussian dilatable channels.
Mathematically speaking, we prove that \emph{the closure of the set of Gaussian dilatable channels in the strong operator topology coincides with the set of linear bosonic channels} (Theorem~\ref{approx n modes thm}). Taking the closure is necessary, as we show that there are linear bosonic channels that have no Gaussian dilations, even when one allows the ancillary state to have infinite energy. An example is provided by the `binary displacement channel', which displaces the input state by $+s$ or $-s$ (with $s$ fixed) with equal probabilities (Corollary~\ref{Es not G dilatable cor}). Our results solve the question posed in~\cite[Conjecture~1]{pang} (see also~\cite[Remark~5]{Shirokov2013}).

Remarkably, our solution is entirely explicit, and in fact it gives also a feasible experimental procedure to approximate the action of any desired linear bosonic channel by means of a single (possibly non-Gaussian) state and Gaussian unitary dynamics. If this unitary is allowed to vary with the degree of approximation, the ancillary system can be chosen to have the same number of modes as the system on which the channel acts (Corollary~\ref{approx n modes cor}). Incidentally, this also entails that although $2n$ modes may be required for an exact dilation of an $n$-mode Gaussian channel~\cite{min-dilations}, only $n$ ancillary modes suffice if we choose to approximate instead (Remark~\ref{approx dilations G channels rem}). In order to accommodate possible experimental feasibility of our approximation procedure, we also consider the case when the Gaussian unitary is necessarily fixed, and one can only vary the ancillary state. In this setting we are able to construct strong operator approximations of any given $n$-mode linear bosonic channel that require an ancillary system with $n+k$ modes, where $k$ is a number that depends only on the matrix that implements the (linear) phase space transformation induced by the channel (Proposition~\ref{approx n+k modes prop}).

Our exposition is meant to be entirely self-contained. Along the way, we give independent and simplified proofs of many classical results such as the quantum Bochner theorem (Lemma~\ref{quantum Bochner}) and the complete positivity condition for linear bosonic channels (Lemma~\ref{lemma cp linear channels}), which may be of independent interest. Our aim is to guide the reader through some of the subtleties of convergence in infinite dimension, and to provide a small handbook of handy convergence results that are of broad applicability in quantum optics. Most notably, we dig out of previous literature a very handy lemma to establish convergence of sequences of density operators in various topologies (the `SWOT convergence' Lemma~\ref{SWOT convergence lemma}), and we use it to give an analogous criterion for the convergence of sequences of bosonic channels (the `SWOTTED convergence' Lemma~\ref{SWOTTED convergence channels}). We demonstrate how this tool can be used to verify convergence in the strong operator topology -- equivalently, uniformly on energy-bounded states -- almost effortlessly by applying it to the family of Gaussian additive noise channels that model the transformations induced by the Braunstein-Kimble~\cite{BK-teleportation} continuous variable teleportation protocol (Remark~\ref{BK convergence rem}).

The rest of the paper is structured as follows. In \S~\ref{sec bosonic channels} we introduce the basic formalism (\S~\ref{subsec phase space}), define linear bosonic channels (\S~\ref{subsec linear channels}), and discuss operator topologies (\S~\ref{subsec operator topologies}). The following \S~\ref{sec main} is devoted to the presentation of our main results. In \S~\ref{subsec linear channels closed} we start by showing that linear bosonic channels form a strong operator closed set, and in \S~\ref{subsec approximate} we construct Gaussian dilatable approximations for all such channels. In \S~\ref{subsec closure necessary} we discuss an example of a linear bosonic channel that is not exactly Gaussian dilatable, while in \S~\ref{subsec number auxiliary} we provide a more experimentally feasible procedure to implement the above approximations. Finally, in \S~\ref{sec conclusions} we summarise our contributions and highlight some open problems.


\section{Bosonic states and channels} \label{sec bosonic channels}

\subsection{Phase-space formalism} \label{subsec phase space}

Let us consider a system of $n$ electromagnetic modes described as quantum harmonic oscillators. The associated Hilbert space is the space of square integrable functions in $2n$ real variables, denoted by $\mathcal{H}_n\coloneqq L^2(\mathbb{R}^{2n})$. The quadrature operators $x_j,p_k$ ($j,k=1,\ldots, n$) can be conveniently grouped together to form the vector $r \coloneqq (x_1,p_1,x_2,p_2,\cdots,x_n,p_n)^\intercal$. The canonical commutation relations then take the form 
\bb
[r_j,r_k] = i\Omega_{jk}\, ,
\label{CCR}
\ee
where
\bb
\Omega \coloneqq \begin{pmatrix} 0 & 1 \\ -1 & 0 \end{pmatrix}^{\oplus n}
\label{Omega}
\ee
is the standard symplectic form. In what follows we denote by $\mathcal{T}(\mathcal{H}_n)$ the space of trace-class operators acting on $\mathcal{H}_n$.

A particularly important role is played by the family of unitary Weyl--Heisenberg displacement operators, defined as
\bb
D(\xi) \coloneqq e^{i \xi^\intercal \Omega r}\, ,
\label{D}
\ee
for all $\xi\in \mathbb{R}^{2n}$. Applying the Baker--Campbell--Hausdorff formula and making use of Eq.~\eqref{CCR} we see that
\bb
D(\xi_{1}) D(\xi_{2}) = e^{-\frac{i}{2} \xi_{1}^\intercal\Omega \xi_{2}} D(\xi_{1}+\xi_{2})\, ,
\label{Weyl}
\ee
referred to as the Weyl form of the canonical commutation relations. In fact, the formulation in Eq.~\eqref{Weyl} is preferable to that in Eq.~\eqref{CCR} in many respects, not least because it involves bounded (unitary) operators instead of unbounded ones.

Among the important properties of these operators, we recall that the associated coherent states $\ket{\lambda}\coloneqq D(\lambda)\ket{0}$ satisfy the completeness relation
\bb
\braket{\alpha|\beta} = \int \frac{d^{2n}\lambda}{(2\pi)^{n}}\, \braket{\alpha|\lambda}\braket{\lambda|\beta}\qquad \forall\ \ket{\alpha},\, \ket{\beta}\in \mathcal{H}_n\, ,
\label{completeness explicit}
\ee
which we can also symbolically write as
\bb
\int \frac{d^{2n}\lambda}{(2\pi)^{n}}\, \ket{\lambda}\!\!\bra{\lambda} = I\, ,
\label{completeness}
\ee
where $I$ is the identity operator. At this level, we regard Eq.~\eqref{completeness} as a purely formal representation of Eq.~\eqref{completeness explicit}, so that we do not need to worry about the convergence of the above integral in the operator sense. Mathematically, one says that the integral in Eq.~\eqref{completeness} is understood to converge in the weak operator topology. For a brief introduction to operator topologies, we refer the reader to \S~\ref{subsec operator topologies}.

Quantum states are described by density operators, i.e.\ positive semidefinite trace-class\footnote{A positive operator $A$ acting on a Hilbert space is said to be trace-class if it has finite trace, i.e.\ if \unexpanded{$\sum_{m} \braket{m|A|m} <\infty$} for some (and thus all) orthonormal bases \unexpanded{$\{\ket{m}\}_m$}.} operators of unit trace. The set of all density operators on a Hilbert space $\mathcal{H}$ will be denoted by $\mathcal{D}(\mathcal{H})$. To every trace-class operator $T\in \mathcal{T}(\mathcal{H}_n)$ we can associate a \textbf{characteristic function} $\chi_T:\mathbb{R}^{2n}\rightarrow \mathbb{C}$, defined as 
\bb
\chi_T(\xi)\coloneqq \Tr \left[ T D(\xi)\right] .
\label{chi}
\ee
Characteristic functions are important because they encode all the information about their parent operator, which can be reconstructed as~\cite[Corollary 5.3.5]{HOLEVO}
\bb
T = \int \frac{d^n\xi}{(2\pi)^n}\, \chi_T(\xi) D(-\xi)\, ,
\label{integral}
\ee
where again the integral converges in the weak operator topology. As originally proved in~\cite{Pool1966} (see also~\cite[\S~5.3]{HOLEVO}), the above correspondence $T\leftrightarrow \chi_{T}$, which we have defined only for trace-class $T$, can in fact be extended to an isometry between the Hilbert space of Hilbert--Schmidt operators on $\mathcal{H}_n$ and that of square-integrable functions on $\mathbb{R}^{2n}$, denoted by $L^2(\mathbb{R}^{2n})$. The fact that the mapping is an isometry can be expressed through the noncommutative Parseval's identity~\cite[Eq.~(5.3.22)]{HOLEVO}
\bb
\Tr \left[ T_{1}^{\dag} T_{2} \right] = \int \frac{d^{2n}\xi}{(2\pi)^{n}} \, \chi_{T_{1}}(\xi)^{*} \chi_{T_{2}}(\xi)\, .
\label{Parseval}
\ee

We can ask what are the conditions a given function $f:\mathbb{R}^{2n}\rightarrow\mathbb{C}$ must satisfy to ensure that it is the characteristic function of a quantum state. To answer this question we introduce some terminology.

\begin{definition} \emph{\cite{Kastler65}.} \label{A positivity def}
Given a skew-symmetric $2n\times 2n$ matrix $A$, a function $f:\mathbb{R}^{2n}\rightarrow \mathbb{C}$ is said to be $A$-positive if
for all finite collections of vectors $\xi_1,\ldots, \xi_N\in \mathbb{R}^{2n}$ one has 
\bb
    \left( f(\xi_\mu - \xi_\nu) e^{\frac{i}{2}\xi_\mu^\intercal A \xi_\nu} \right)_{\mu,\nu=1,\ldots, N}\geq 0\, ,
    \label{A positivity}
\ee
meaning that the matrix on the l.h.s.\ is positive semidefinite.
\end{definition}

\begin{rem} \label{inversion A positive rem}
It is elementary to observe that any $A$-positive function $f:\mathbb{R}^{2n}\rightarrow \mathbb{C}$ (for $A$ skew-symmetric) must satisfy $f(-\xi) = f(\xi)^*$ for all $\xi\in \mathbb{R}^{2n}$, and in particular $f(0)$ must be real (and nonnegative). In fact, it can be seen that an $A$-positive function $f$ remains such under inversion of the argument, i.e.\ the new function $g$ defined by $g(\xi)\coloneqq f(-\xi)$ is again $A$-positive.
\end{rem}

The answer to the above question regarding the physical validity of a characteristic function is then given in terms of the following `quantum Bochner theorem', established in~\cite{Kastler65, Loupias66} (see also~\cite[Theorem~5.4.1]{HOLEVO}). In Appendix~\ref{app quantum Bochner} we provide a direct proof which is independent of the analogous result for classical probability theory and does not seem to have appeared in the literature before.

\begin{lemma}[Quantum Bochner Theorem] \emph{\cite{Kastler65,Loupias66}.} \label{quantum Bochner}
A complex-valued function $f:\mathbb{R}^{2n}\rightarrow \mathbb{C}$ on $\mathbb{R}^{2n}$ is the characteristic function of a density operator if and only if the following three conditions are satisfied:
\begin{enumerate}[(i)]
    \item $f(0)=1$;
    \item $f$ is continuous at $0$; and
    \item $f$ is $\Omega$-positive in the sense of Definition~\ref{A positivity def}, where $\Omega$ is given by Eq.~\eqref{Omega}.
\end{enumerate}
\end{lemma}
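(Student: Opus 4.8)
The forward implication is a short computation that I would handle first. If $f=\chi_\rho$ then $f(0)=\Tr[\rho D(0)]=\Tr\rho=1$, giving (i), and $\xi\mapsto\Tr[\rho D(\xi)]$ is continuous because $\xi\mapsto D(\xi)$ is strongly continuous and $\rho$ is trace-class, giving (ii). For (iii) I would rewrite the Weyl relation \eqref{Weyl} as $D(\xi_\mu)D(\xi_\nu)^\dagger=e^{\frac{i}{2}\xi_\mu^\intercal\Omega\xi_\nu}D(\xi_\mu-\xi_\nu)$, so that each entry of the matrix in \eqref{A positivity} becomes $f(\xi_\mu-\xi_\nu)\,e^{\frac{i}{2}\xi_\mu^\intercal\Omega\xi_\nu}=\Tr[\rho\,D(\xi_\mu)D(\xi_\nu)^\dagger]$. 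For arbitrary coefficients $c_\mu$ the associated quadratic form then equals $\Tr[\rho\, B^\dagger B]\geq 0$ with $B\coloneqq\sum_\nu c_\nu D(\xi_\nu)^\dagger$, which establishes $\Omega$-positivity.

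The substance lies in the converse, and here my plan is a Gelfand--Naimark--Segal (GNS) construction followed by the Stone--von Neumann uniqueness theorem. Consider the Heisenberg group $G=\mathbb{R}^{2n}\times\mathbb{R}$ with product $(\xi,s)(\eta,t)=\big(\xi+\eta,\,s+t-\tfrac{1}{2}\xi^\intercal\Omega\eta\big)$, and set $F(\xi,s)\coloneqq e^{is}f(\xi)$. Using Remark~\ref{inversion A positive rem}, condition (iii) is precisely the statement that $F$ is a positive-definite function on $G$, the group cocycle matching the phase in \eqref{A positivity}; condition (ii) then upgrades to joint continuity of $F$, since for positive-definite functions continuity at the identity forces continuity everywhere. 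The GNS construction produces a strongly continuous unitary representation $\pi$ of $G$ on a Hilbert space $\mathcal{K}$, together with a cyclic vector $\Psi$ satisfying $\|\Psi\|^2=F(0,0)=f(0)=1$, such that $\langle\Psi,\pi(\xi,s)\Psi\rangle=e^{is}f(\xi)$. Writing $W(\xi)\coloneqq\pi(\xi,0)$ yields unitaries obeying the Weyl relations \eqref{Weyl}.

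The decisive point is that the centre acts by the correct character. Since $\langle\Psi,\pi(0,s)\Psi\rangle=e^{is}$ has modulus one, the equality case of Cauchy--Schwarz forces $\pi(0,s)\Psi=e^{is}\Psi$; as $\pi(0,s)$ is central and $\Psi$ is cyclic, this propagates to $\pi(0,s)=e^{is}I$ on all of $\mathcal{K}$. Hence $W$ is a strongly continuous representation of the canonical commutation relations with the standard central character, and the Stone--von Neumann theorem gives a unitary equivalence of $(\mathcal{K},W)$ with $(\mathcal{H}_n\otimes\mathcal{M},\,D\otimes I)$ for some multiplicity space $\mathcal{M}$. Transporting $\Psi$ to a unit vector $\Phi\in\mathcal{H}_n\otimes\mathcal{M}$ gives $f(\xi)=\langle\Phi,(D(\xi)\otimes I)\Phi\rangle=\Tr[\rho\,D(\xi)]$ with $\rho\coloneqq\Tr_{\mathcal{M}}\proj{\Phi}$, which is positive, of unit trace, and satisfies $\chi_\rho=f$.

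I expect the main obstacle to be the legitimate invocation of Stone--von Neumann: it hinges on both strong continuity (supplied by (ii)) and the exact central character (supplied by the Cauchy--Schwarz argument above), and without either the representation could fail to be of Schr\"odinger type. As an alternative that stays closer to the Fourier--Weyl isometry of \eqref{Parseval}, I would instead regularize by $f_t(\xi)\coloneqq f(\xi)\,e^{-\frac{t}{4}\xi^\intercal\xi}$: the Gaussian factor is classically positive-definite (which one can prove elementarily from its Taylor expansion, thus avoiding the classical Bochner theorem), so by the Schur product theorem $f_t$ remains $\Omega$-positive while now lying in $L^1\cap L^2(\mathbb{R}^{2n})$; through \eqref{Parseval} and \eqref{integral} it defines a Hilbert--Schmidt operator $T_t$ with $\chi_{T_t}=f_t$, which one checks to be positive (again from $\Omega$-positivity, via a Riemann-sum argument) and hence trace-class with $\Tr T_t=\chi_{T_t}(0)=f_t(0)\to 1$. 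The crux of this second route is the limit $t\to 0$: one must exclude escape of mass to infinity, and this is exactly where continuity of $f$ at $0$ is indispensable, through a tightness argument of SWOT type that extracts a limiting density operator $\rho$ with $\chi_\rho=f$.
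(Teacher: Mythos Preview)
Your forward direction matches the paper's. Your main converse route via the GNS construction and the Stone--von~Neumann uniqueness theorem is correct and classical---it is essentially Holevo's proof~\cite[Theorem~5.4.1]{HOLEVO}---but it is precisely the approach the paper sets out to avoid: the whole point of Appendix~\ref{app quantum Bochner} is to give an argument that is independent both of the classical Bochner theorem and of Stone--von~Neumann, using only ``standard analysis and no notion of measure theory''. What your route buys is brevity and conceptual clarity (the density operator drops out as a partial trace of the GNS vector); what it costs is a heavy black box.

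The paper's own argument is closer to your alternative Route~2 but with two notable differences. First, the positivity of the Weyl-inverted operator is not a throwaway Riemann-sum check: the paper introduces a ``diagonal integration lemma'' (Lemma~\ref{lemma single-double}) to rewrite $\int f(\xi)\braket{\psi|D(-\xi)|\psi}\,d^{2n}\xi$ as a limit of double integrals in $f(\xi_1-\xi_2)$, inserts the coherent-state resolution~\eqref{completeness explicit} between the two displacement operators, and only then recognises the Riemann sums of the $\Omega$-positivity matrix inside the $\lambda$-integral. Second, and more strikingly, the paper does \emph{not} take a limit of density operators $\rho_{f_t}\to\rho$ at all. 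Instead, having constructed legitimate states $\rho_{f_\epsilon}$ for the Gaussian-regularised $f_\epsilon$, it uses the purity bound $\Tr[\rho_{f_\epsilon}^2]\leq 1$ together with Parseval~\eqref{Parseval} to obtain $\int_{[-L/2,L/2]^{2n}}|f|^2\leq (2\pi)^n$ uniformly, hence $f\in L^2(\mathbb{R}^{2n})$ directly. One then applies the square-integrable construction to $f$ itself, and no tightness or SWOT argument is needed. (Incidentally, your ``$f_t(0)\to 1$'' is $f_t(0)=1$ exactly, so each $T_t$ already has unit trace.)
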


\begin{rem}
It is known that every function that meets requirements (i), (ii) and (iii) of the above Lemma~\ref{quantum Bochner} is necessarily bounded in modulus by $1$ and continuous everywhere (see Lemma~\ref{continuity Omega positive lemma}).
\end{rem}


We conclude this brief exposition by introducing quantum \textbf{Gaussian states}. These can be equivalently defined as thermal states of Hamiltonians that are quadratic in the canonical operators $r_j$, or as those density operators whose characteristic function is a Gaussian. If
\bb
\chi_\rho(\xi) = e^{ - \frac14 \xi^\intercal \Omega^\intercal V\Omega\xi + is^\intercal \Omega \xi}
\label{chi Gaussian}
\ee
for some $2n\times 2n$ real matrix $V$ and some vector $s\in \mathbb{R}^{2n}$, we say that $\rho$ is a Gaussian state with \textbf{covariance matrix} $V$ and \textbf{mean} $s$ (compare with~\cite[Eq.~(4.48)]{BUCCO}). Clearly, Gaussian states are uniquely specified by these two quantities. It can be seen that the function on the r.h.s.\ of Eq.~\eqref{chi Gaussian} is the characteristic function of a quantum state if and only if
\bb
V + i \Omega \geq 0 \, ,
\label{Heisenberg}
\ee
which can be regarded as a manifestation of Heisenberg's uncertainty principle~\cite[Eq.~(3.77)]{BUCCO}.  This well-known fact can also be deduced from Lemma~\ref{A positivity lemma} in Appendix~\ref{app quantum Bochner}. The vacuum state $\ketbra{0}{0}$ is an example of a Gaussian state; its mean is $0$ and its covariance matrix is $V=\id$, hence
\bb
\chi_{\ketbra{0}{0}}(\xi) = e^{- \frac14 \xi^\intercal \xi} .
\label{chi vacuum}
\ee

\subsection{Linear bosonic channels} \label{subsec linear channels}

One of the fundamental notions of quantum information theory is that of a quantum channel. A quantum channel describes transformations induced on open physical systems by unitary interactions with an environment that is subsequently discarded~\cite[Chapter~8]{NC}. In our context, we can define a \textbf{bosonic channel} acting on $n$ modes as a completely positive and trace-preserving linear map $\Phi$ acting on the set of trace-class operators $\tclass{n}$. It is sometimes convenient to describe the action of the channel at the level of characteristic functions, by writing $\Phi:\chi_{\rho}(\xi)\mapsto \chi_{\Phi(\rho)}(\xi)$ and giving an explicit expression for this latter function. Equivalently, we can also switch to the Heisenberg picture and specify instead the action of $\Phi^\dag$ on all displacement operators.

Particularly simple examples of bosonic channels are the so-called \textbf{linear bosonic channels}, that act as
\bb
\Phi_{X,f}^\dag \left( D(\xi) \right) \coloneqq D(X\xi) f(\xi)\, ,
\label{linear dag}
\ee
where $X$ is a $2n\times 2n$ real matrix, and $f:\mathbb{R}^{2n}\rightarrow\mathbb{C}$ is a complex-valued function. We can rewrite this transformation at the level of characteristic functions as
\bb
\chi_\rho(\xi) \mapsto \chi_{\Phi(\rho)}(\xi) = \chi_\rho(X\xi) f(\xi)\, .
\label{linear chi}
\ee
Because of the simplicity of their phase space action, linear bosonic channels have consistently played a major role in theoretical quantum optics~\cite{hw01, Holevo2013, Shirokov2013}. It is natural to ask what conditions on $X$ and $f$ ensure that $\Phi_{X,f}$ is a legitimate quantum channel. The first solution of this problem was put forward in~\cite{Demoen77} (see also~\cite{Evans1977, Demoen79}). We provide a self-contained proof in Appendix~\ref{app cp linear channels}.

\begin{lemma} \emph{\cite{Demoen77}.} \label{lemma cp linear channels}
A map $\Phi_{X,f}$ whose action is given by~\eqref{linear chi} is completely positive and trace-preserving, and hence a linear bosonic channel, if and only if:
\begin{enumerate}[(i)]
    \item $f(0)=1$;
    \item $f$ is continuous at $0$; and
    \item $f$ is $J(X)$-positive according to Definition~\ref{A positivity def}, where
\bb
J(X) \coloneqq \Omega - X^\intercal\Omega X\, .
\label{J(X)}
\ee
\end{enumerate}
\end{lemma}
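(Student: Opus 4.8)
The plan is to treat trace preservation and complete positivity separately and to pivot the whole argument on the quantum Bochner theorem (Lemma~\ref{quantum Bochner}) combined with the Schur product theorem. Trace preservation is immediate: evaluating the action~\eqref{linear chi} at $\xi=0$ gives $\Tr[\Phi_{X,f}(\rho)] = \chi_{\Phi(\rho)}(0) = \chi_\rho(0)\, f(0) = \Tr[\rho]\, f(0)$, so the map preserves the trace if and only if $f(0)=1$, which is condition~(i). The substance of the lemma is the equivalence of complete positivity with (ii) and (iii).

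For sufficiency I would first show that (i)--(iii) force $\Phi_{X,f}$ to send states to states. Given a state $\rho$, the candidate output characteristic function is $\xi\mapsto \chi_\rho(X\xi)f(\xi)$, and I must verify the three hypotheses of Lemma~\ref{quantum Bochner} for it. Its value at $0$ is $\chi_\rho(0)f(0)=1$ by (i), and it is continuous at $0$ since $\chi_\rho$ is continuous and $f$ is continuous at $0$ by (ii). The crux is $\Omega$-positivity. Here I use that $\chi_\rho$ is $\Omega$-positive, so evaluating its Bochner matrix at the shifted points $X\xi_\mu$ gives $\big[\chi_\rho(X(\xi_\mu-\xi_\nu))\, e^{\frac{i}{2}\xi_\mu^\intercal X^\intercal\Omega X\xi_\nu}\big]_{\mu\nu}\geq 0$, while (iii) gives $\big[f(\xi_\mu-\xi_\nu)\, e^{\frac{i}{2}\xi_\mu^\intercal J(X)\xi_\nu}\big]_{\mu\nu}\geq 0$. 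Taking the entrywise (Hadamard) product of these two positive semidefinite matrices and invoking the Schur product theorem produces a positive semidefinite matrix whose exponent carries the combined form $X^\intercal\Omega X + J(X) = \Omega$, by the very definition~\eqref{J(X)} of $J(X)$. This is exactly the $\Omega$-positivity of $\chi_\rho(X\xi)f(\xi)$, so by Lemma~\ref{quantum Bochner} the output is a genuine state.

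The previous paragraph only establishes positivity; promoting it to complete positivity is the step I expect to be the main obstacle, and I would resolve it by noting that the ancilla-extended map is again of the same form. Indeed $\Phi_{X,f}\otimes\mathrm{id}_m$ acts on $(n+m)$-mode characteristic functions as $\chi(\xi\oplus\eta)\mapsto \chi\big((X\oplus I_{2m})(\xi\oplus\eta)\big)\,\widetilde f(\xi\oplus\eta)$ with multiplier $\widetilde f(\xi\oplus\eta)\coloneqq f(\xi)$, i.e.\ it equals $\Phi_{X\oplus I_{2m},\,\widetilde f}$. A direct computation using $\Omega_{n+m}=\Omega_n\oplus\Omega_m$ gives $J(X\oplus I_{2m}) = J(X)\oplus 0_{2m}$, and since both $\widetilde f$ and this degenerate form ignore the ancilla coordinates $\eta$, the $\big(J(X)\oplus 0_{2m}\big)$-positivity of $\widetilde f$ reduces verbatim to the $J(X)$-positivity of $f$ (repetitions among the test points are harmless, as Definition~\ref{A positivity def} quantifies over all finite collections). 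Running the positivity argument of the previous paragraph in $n+m$ modes then shows $\Phi_{X,f}\otimes\mathrm{id}_m$ is positive for every $m$; since $\mathcal{H}_m$ contains every finite-dimensional ancilla as a subspace and positivity is inherited under compression, this is complete positivity.

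It remains to prove necessity of (ii) and (iii). For (iii) I would use the operator-matrix characterization of complete positivity: if $\Phi_{X,f}$ is a channel, then $\big[\Phi_{X,f}^\dag\big(D(\xi_\mu)^\dag D(\xi_\nu)\big)\big]_{\mu\nu}$ is a positive semidefinite operator matrix. Rewriting $D(\xi_\mu)^\dag D(\xi_\nu)$ via the Weyl relation~\eqref{Weyl}, applying~\eqref{linear dag}, and then testing this matrix against the vectors $D(-X\xi_\mu)\ket{\phi}$ (for a fixed unit vector $\ket{\phi}$ and arbitrary scalar weights) collapses every displacement operator to a pure phase; the surviving phases combine, once more through $J(X)=\Omega - X^\intercal\Omega X$, into precisely the quadratic form of Definition~\ref{A positivity def}, so that $f$ is $J(X)$-positive up to the harmless argument inversion of Remark~\ref{inversion A positive rem}. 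Finally, necessity of (ii) follows by applying the channel to the vacuum: by Eq.~\eqref{chi vacuum}, $\chi_{\Phi(\ketbra{0}{0})}(\xi) = e^{-\frac14\xi^\intercal X^\intercal X\xi}\, f(\xi)$, and since the left-hand side is continuous (being the characteristic function of a state) while the Gaussian prefactor is continuous and nowhere vanishing, $f$ must be continuous everywhere, and in particular at $0$.
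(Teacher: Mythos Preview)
Your proof is correct and takes a genuinely different route from the paper's.

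For sufficiency, the paper does not argue directly: it observes (Remarks~\ref{direct cp conditions f rem1} and~\ref{direct cp conditions f rem2}) that the construction in Theorem~\ref{approx n modes thm} exhibits $\Phi_{X,f}$ as a strong-operator limit of Gaussian dilatable channels, and channels form a closed set. Your argument is more self-contained: you verify $\Omega$-positivity of the output characteristic function by taking the Hadamard product of the $\Omega$-positivity matrix of $\chi_\rho$ (evaluated at the shifted points $X\xi_\mu$) with the $J(X)$-positivity matrix of $f$, so that the phases add up to $\Omega$ via $X^\intercal\Omega X + J(X) = \Omega$. The promotion to complete positivity via $\Phi_{X,f}\otimes\mathrm{id}_m = \Phi_{X\oplus I_{2m},\,\widetilde f}$ and $J(X\oplus I_{2m}) = J(X)\oplus 0$ is clean and correct; this direct Schur-product route avoids any appeal to the main theorem and is arguably the more natural proof of this lemma in isolation.

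For the necessity of (iii), the paper works with truncated maximally entangled states $\ket{\phi_m}$ and extracts the $J(X)$-positivity inequality from $\big(I\otimes\Phi_{X,f}\big)(\ketbra{\phi_m}{\phi_m})\geq 0$ in the limit $m\to\infty$. You instead invoke the operator-matrix criterion $\big[\Phi^\dag(D(\xi_\mu)^\dag D(\xi_\nu))\big]_{\mu\nu}\geq 0$ for the completely positive unital dual and sandwich it with the vectors $c_\mu D(-X\xi_\mu)\ket{\phi}$; the displacements cancel pairwise and the residual phase is exactly $e^{\frac{i}{2}\xi_\mu^\intercal J(X)\xi_\nu}$, as in the paper's identity~\eqref{useful identity}. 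The two arguments are morally the same computation, but yours bypasses the $m\to\infty$ limit at the price of quoting the abstract CP characterisation (available e.g.\ in~\cite{PAULSEN}). Your derivation of (ii) from the vacuum output is the same as the paper's.
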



Observe that every Gaussian channel~\cite[\S~5.3]{BUCCO} is a linear bosonic channel, but the converse fails to hold. The simplest example of linear bosonic channel that is not a Gaussian operation is probably the \textbf{additive noise channel}, defined by
\begin{align}
\rho &\longmapsto \int D(-s)\rho D(s)\, \mu\left(d^{2n}s\right) , \label{additive noise p} \\
\chi_\rho(\xi) &\longmapsto \chi_\rho(\xi) f(\xi)\, . \label{additive noise chi}
\end{align}
where $\mu$ is an arbitrary probability measure over $\mathbb{R}^{2n}$, and $f$ is its Fourier transform
\bb
f(\xi) \coloneqq \int e^{i\xi^\intercal \Omega s} \mu\left(d^{2n} s\right) .
\label{Fouries p and f}
\ee

In a way, linear bosonic channels can thus be thought of as a natural generalisation of Gaussian channels. While this point of view may be mathematically well motivated, it is not operationally satisfactory, because Eq.~\eqref{linear chi} does not tell us anything about how to implement a given linear bosonic channel in a physically feasible way. To amend this we can decide to look at \textbf{Gaussian dilatable channels} instead~\cite{pang}. By definition, a Gaussian dilatable channel acts as
\bb
\rho \longmapsto \Tr_E \left[ U_{AE} (\rho_A \otimes \sigma_E) U_{AE}^\dag \right] ,
\label{G dilatable Stinespring}
\ee
where: $E$ is an $m$-mode optical system; $U_{AE}$ is a Gaussian unitary on the bipartite system $AE$ (obtained by combining arbitrary displacements on $A$ and $E$ with symplectic unitaries on $AE$~\cite[\S~5.1.2]{BUCCO}); and $\sigma_E$ is an \emph{arbitrary} state of the system $E$. For a pictorial representation of Eq.~\eqref{G dilatable Stinespring}, see Figure~\ref{fig3}. 
From a practical point of view, remember that the Gaussian unitary $U_{AE}$ can be implemented by means of multimode interferometers (passive optics) and single-mode squeezers~\cite{dutta94}. In turn, multimode interferometer can be decomposed into two-mode beam splitters and phase-shifters~\cite{triangle,square}, so that these two operations together with single-mode squeezers suffice to reproduce the action of $U_{AE}$.
If we want to specify that an $m$-mode ancillary state suffices to implement a Gaussian dilation, we say that the channel is \textbf{Gaussian dilatable on $m$ modes}. The requirement that $U_{SE}$ is a Gaussian unitary here is crucial; in fact, by Stinespring's dilation theorem~\cite{stinespring} every quantum channel can be represented as in Eq.~\eqref{G dilatable Stinespring} for some unitary $U_{AE}$ and some ancillary state $\sigma_E$.

\begin{figure}
\begin{center}
\scalebox{0.6}{\includegraphics[width=0.618\columnwidth]{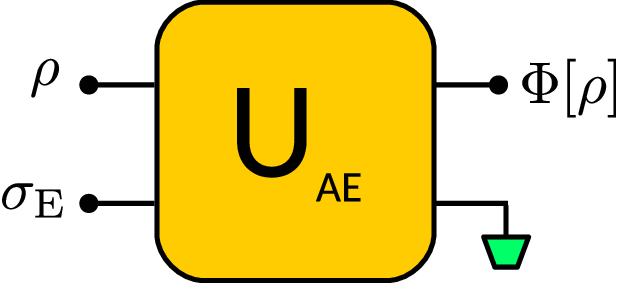}}
\end{center}
\caption{A Gaussian dilatable channel $\Phi$ is one that can be realized through a system-environment Gaussian unitary evolution $U_{AE}$ generated by a quadratic Hamiltonian, with any environment state ${\rho}_{E}$, and the environment degrees of freedom being subsequently discarded. The resulting channel is Gaussian or non-Gaussian depending on whether the environment state $\hr{E}$ is Gaussian or not. }
\label{fig3}
\end{figure}

We now derive the explicit action of a Gaussian dilatable channel on characteristic functions. Remember that Gaussian unitaries can be always factorised as $U_{AE} = \left(D_A(s)\otimes D_E(t)\right)\widetilde{U}_{AE}$, where $\widetilde{U}_{AE}$ is a symplectic unitary whose corresponding symplectic matrix we denote by $S_{AE}$. This means that the transformation $\rho_{AE} \mapsto \widetilde{U}_{AE} \rho_{AE} \widetilde{U}_{AE}^\dag$ translates to $\chi_\rho(\zeta)\mapsto\chi_\rho(S_{AE}\zeta)$ at the phase space level.
Decomposing $S_{AE}$ according to the splitting $A\oplus E$ as
\bb
S_{AE} = \begin{pmatrix} X & Y \\ Z & W \end{pmatrix} ,
\label{symplectic AE}
\ee
where $X$ is $2n\times 2n$ and $Y$ is $2n \times 2m$, it is not difficult to verify that the channel in Eq.~\eqref{G dilatable Stinespring} acts on characteristic functions as~\cite{pang}
\bb
\chi_\rho(\xi) \longmapsto e^{is^\intercal \Omega \xi} \chi_\rho (X\xi)\, \chi_{\sigma}(Y\xi)\, ,
\label{G dilatable chi}
\ee
where 
\bb
X^{\intercal}\Omega X + Y^{\intercal}\Omega Y = \Omega\, .
\label{XY}
\ee
In particular, \emph{every Gaussian dilatable channel is linear bosonic.} As we shall see in what follows, the converse is not true (Corollary~\ref{Es not G dilatable cor}). The main result of the present paper is however that every linear bosonic channel can be approximated by Gaussian dilatable channels to any desired degree of accuracy, in a precise sense (Theorem~\ref{approx n modes thm}). See Figure~\ref{fig1} for a pictorial representation of all the different classes of channels discussed in this paper.

\begin{note}
The two matrices $\Omega$ on the l.h.s.\ of Eq.~\eqref{XY} are of sizes $2n$ and $2m$, respectively. Without further specifying it, in what follows we will always assume that all matrices $\Omega$ are of the correct size.
\end{note}

It is worth noting that Eq.~\eqref{XY} is the only condition to be obeyed in order for the transformation in Eq.~\eqref{G dilatable chi} to derive from a Gaussian dilatable channel. In other words, Eq.~\eqref{XY} implies that one can find matrices $Z$ and $W$ with the property that the matrix in Eq.~\eqref{symplectic AE} is symplectic. This is a consequence of the completion theorem for symplectic bases~\cite[Theorem~1.15]{GOSSON}.

As we mentioned before, Gaussian channels are always Gaussian dilatable, and the corresponding state $\sigma_E$ of Eq.~\eqref{G dilatable Stinespring} can always be chosen to be Gaussian~\cite{min-dilations, partha}. Another important subclass of Gaussian dilatable channels is that composed of \textbf{passive dilatable channels}~\cite{wolfact, idler}, i.e.\ those for which Eq.~\eqref{G dilatable Stinespring} holds with $U_{AE}$ satisfying $[U_{AE}, H_{AE}]=0$ for the free-field Hamiltonian (number operator) $H_{AE}=\frac12 r^\intercal r$.

\begin{figure}
\begin{center}
\includegraphics[width=0.618\columnwidth]{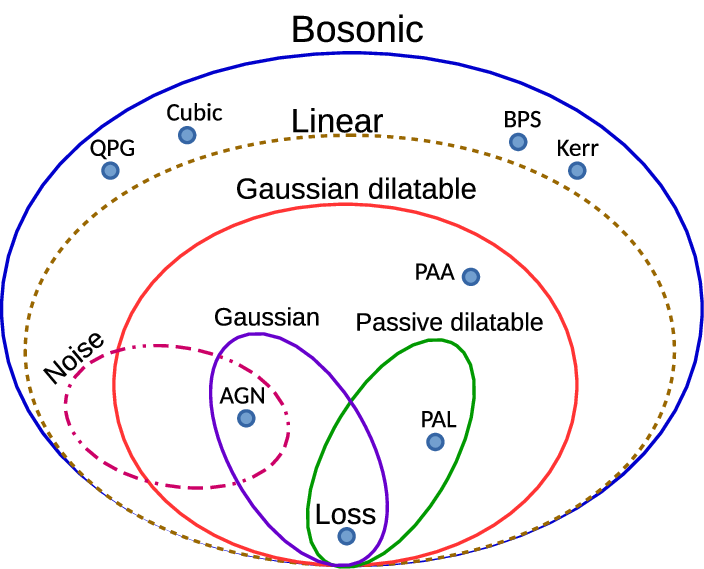}
\end{center}
\caption{The different classes of bosonic channels. Linear channels are those that implement linear transformations on the input signal at the phase space level. Gaussian dilatable channels are realised by Gaussian unitaries, resulting in a Gaussian channel if the environment is a Gaussian state. Restricting the Gaussian unitary to a passive unitary leads to a passive dilatable channel. PAL is the photon-added lossy channel and PAA is the photon-added amplifier channel. Cubic represents a cubic phase gate, QPG for quadrature phase gates (of which cubic phase gate is an example), and Kerr stands for the higher-order non-linear Kerr effect, BPS stands for the binary phase-shift channel, and all of these lie outside linear channels. The lossy channel is Gaussian dilatable and in particular passive dilatable, and Noise represents all additive noise channels of which AGN is the additive Gaussian noise channel. The dashed lines correspond to sets of channels whose relation to the other sets are previous unknown and explored in this article.}
\label{fig1}
\end{figure}

\subsection{Operator topologies in a nutshell} \label{subsec operator topologies}

As discussed above, the main result we present in this article is that every linear bosonic channel can be approximated to any desired degree of accuracy by Gaussian dilatable channels. To make this statement mathematically rigorous, we need to clarify in what sense this approximation holds. This corresponds to introducing a topology, i.e.\ a notion of convergence of (generalised) sequences, on the set of quantum channels. Since channels are (super)operators, the relevant concept is that of operator topology. 

In the present context, the two most important operator topologies to be given to the space $\mathcal{T}(\mathcal{H})$ of trace-class operators on some Hilbert space $\mathcal{H}$ are the \textbf{norm topology} (aka strong topology) and the \textbf{weak operator topology} (WOT). A sequence of trace-class operators $T_k$ is said to converge to $T\in\mathcal{T}(\mathcal{H})$ in the norm topology (or strongly) if $\|T-T_k\|_1 \tendsk{} 0$, where $\|T\|_1\coloneqq \Tr \sqrt{T^\dag T}$ is the trace norm. In this case we write also $T_k \tendsk{s} T$. The same sequence is instead said to converge to $T$ in the weak operator topology, and we write $T_k \tendsk{WOT} T$, if
\bb
\lim_{k\rightarrow\infty} \braket{\alpha| T_k |\beta} = \braket{\alpha|T|\beta} \qquad \forall\ \ket{\alpha},\, \ket{\beta}\in\mathcal{H}\, .
\label{WOT trace-class}
\ee
\begin{rem}It is not difficult to come up with sequences of density operators that do not converge strongly but tend to zero in the weak operator topology. An example is given by the sequence $T_k = \ketbra{k}{k}$ of projectors onto the Fock basis states of a single harmonic oscillator. In particular, neither the trace nor the trace norm are continuous with respect to the weak operator topology! For sequences of trace-class operators, it turns out that the above example captures the only way in which the above two notions of convergence can lead to different conclusions. 
\end{rem}

The following lemma is a well-known result from the theory of operators on Hilbert spaces. The acronym `SWOT' stands for `Strong/Weak Operator Topology'.

\begin{lemma}[SWOT convergence lemma] \label{SWOT convergence lemma}
Let $(\rho_k)_k\in \mathcal{T}(\mathcal{H}_n)$ be a sequence of density operators on the Hilbert space $\mathcal{H}_n$ associated with $n$ harmonic oscillators. Then the following are equivalent:
\begin{enumerate}[(i)]
\item $(\rho_k)_k$ converges to a density operator in the weak operator topology;
\item $(\rho_k)_k$ converges in norm to a trace-class operator;
\item the sequence $\left(\chi_{\rho_k}\right)_k$ of characteristic functions converges pointwise to a function that is continuous at $0$.
\end{enumerate}
If any of the above conditions is met, then the two limits in (i) and (ii) are the same, and their characteristic function coincides with the limit in (iii).
\end{lemma}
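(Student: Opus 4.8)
The plan is to prove the three conditions equivalent via the cycle of implications (i) $\Rightarrow$ (ii) $\Rightarrow$ (iii) $\Rightarrow$ (i), and then to read off the coincidence of the various limits for free. Two of these implications are essentially mechanical given the machinery already set up (the quantum Bochner Lemma~\ref{quantum Bochner}, the reconstruction formula~\eqref{integral}, and the uniform bound $\|\rho_k\|_\infty \leq \|\rho_k\|_1 = 1$); the remaining one, (i) $\Rightarrow$ (ii), carries the real content, and that is where I expect the main difficulty to lie.

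I would dispatch the two easy implications first. For (ii) $\Rightarrow$ (iii): if $\|\rho_k-\rho\|_1\to 0$ with $\rho$ trace-class, then $\rho\geq 0$ and $\Tr\rho=\lim_k\Tr\rho_k=1$ (both positivity and the trace are trace-norm continuous, using $\|\cdot\|_\infty\leq\|\cdot\|_1$), so $\rho$ is a density operator; moreover $|\chi_{\rho_k}(\xi)-\chi_\rho(\xi)|=|\Tr[(\rho_k-\rho)D(\xi)]|\leq\|\rho_k-\rho\|_1$ uniformly in $\xi$, giving pointwise (in fact uniform) convergence, and $\chi_\rho$ is continuous in $0$ by the necessity part of Lemma~\ref{quantum Bochner}. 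For (iii) $\Rightarrow$ (i): let $f=\lim_k\chi_{\rho_k}$ pointwise. Each $\chi_{\rho_k}$ satisfies $\chi_{\rho_k}(0)=1$ and is $\Omega$-positive, and both properties pass to the limit --- $\Omega$-positivity because positive semidefiniteness of the fixed finite matrices in~\eqref{A positivity} survives entrywise limits. Since $f$ is assumed continuous in $0$, Lemma~\ref{quantum Bochner} gives $f=\chi_\rho$ for a unique density operator $\rho$. To upgrade to WOT convergence I would insert~\eqref{integral} into coherent-state matrix elements, $\braket{\alpha|\rho_k|\beta}=\int\frac{d^{2n}\xi}{(2\pi)^n}\,\chi_{\rho_k}(\xi)\,\braket{\alpha|D(-\xi)|\beta}$, where $\braket{\alpha|D(-\xi)|\beta}$ is a Gaussian in $\xi$, hence integrable and, since $|\chi_{\rho_k}|\leq 1$, a $k$-independent dominating function; dominated convergence then yields $\braket{\alpha|\rho_k|\beta}\to\braket{\alpha|\rho|\beta}$ for all coherent states. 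As finite combinations of coherent states are dense in $\mathcal{H}_n$ (completeness relation~\eqref{completeness}) and the $\rho_k$ are uniformly bounded in operator norm, a standard $\epsilon/3$ argument extends this to all vectors, giving (i).

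The crux is (i) $\Rightarrow$ (ii): promoting WOT convergence of density operators to trace-norm convergence. This is a non-commutative Scheff\'e (or Gr\"umm) type statement, and the obstacle is that WOT convergence a priori permits probability mass to escape to infinity --- precisely the behaviour of $\ketbra{k}{k}$ --- which must be excluded using that the limit $\rho$ is itself a density operator. I would fix finite-rank projections $P_N$ (say onto the first $N$ Fock levels) with complements $Q_N=\id-P_N$. WOT convergence forces $\Tr[\rho_k P_N]\to\Tr[\rho P_N]$ for each fixed $N$ (a finite sum of matrix elements), so choosing $N$ with $\Tr[\rho\,Q_N]<\epsilon$ yields a tightness bound $\Tr[\rho_k\,Q_N]<2\epsilon$ for all large $k$. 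Splitting $\rho_k-\rho$ into the four blocks cut out by $P_N,Q_N$, I would bound: the $P_N$-$P_N$ block in trace norm by finite-dimensionality (its entries converge, and on a fixed finite-dimensional range entrywise convergence is trace-norm convergence, so this block $\to 0$ as $k\to\infty$); the $Q_N$-$Q_N$ block by $\Tr[\rho_k Q_N]+\Tr[\rho Q_N]\leq 3\epsilon$ using positivity; and each off-diagonal block via the Cauchy--Schwarz estimate $\|P\sigma Q\|_1\leq\sqrt{\Tr[P\sigma P]}\,\sqrt{\Tr[Q\sigma Q]}$ valid for $\sigma\geq 0$, which gives an $O(\sqrt{\epsilon})$ contribution. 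Taking $\limsup_k$ and then $\epsilon\to 0$ forces $\|\rho_k-\rho\|_1\to 0$.

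The final claim on coincidence of limits is then automatic from the cycle: the density operator produced in (iii) $\Rightarrow$ (i) is the WOT limit, the trace-norm limit of (i) $\Rightarrow$ (ii) agrees with it by uniqueness of WOT limits, and its characteristic function equals the pointwise limit of (iii) by (ii) $\Rightarrow$ (iii).
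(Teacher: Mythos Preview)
Your proof is correct and follows the same cycle (i) $\Rightarrow$ (ii) $\Rightarrow$ (iii) $\Rightarrow$ (i) as the paper. The implication (ii) $\Rightarrow$ (iii) is identical, and your (i) $\Rightarrow$ (ii) is essentially the paper's argument rephrased: where you do the four-block decomposition with the Cauchy--Schwarz bound $\|P\sigma Q\|_1\leq\sqrt{\Tr[P\sigma P]}\sqrt{\Tr[Q\sigma Q]}$, the paper invokes the gentle measurement lemma $\|\rho_k-\Pi\rho_k\Pi\|_1\leq 2\sqrt{1-\Tr[\rho_k\Pi]}$, which is the same estimate packaged differently.

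The one genuine tactical difference is in (iii) $\Rightarrow$ (i). After identifying the limit $f=\chi_\rho$ via Lemma~\ref{quantum Bochner} (as you both do), the paper tests against an arbitrary finite-rank operator $A$, writes $\Tr[(\rho_k-\rho)A]$ via Parseval, and splits the $\xi$-integral into a ball $|\xi|\leq R$ (dominated convergence) and a tail controlled by $\|\chi_A\|_{L^2}$. You instead test only against coherent-state matrix elements, exploit that $\braket{\alpha|D(-\xi)|\beta}$ is explicitly Gaussian in $\xi$ to get a single dominated-convergence step, and then extend to all vectors by density of the coherent span and the uniform bound $\|\rho_k\|_\infty\leq 1$. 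Your route is a bit more elementary in that it avoids the $L^2$ tail estimate, at the cost of the extra density/$\epsilon$-$3$ extension; the paper's route gets convergence for all finite-rank $A$ in one stroke. Both are perfectly valid.
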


The equivalence between conditions (i) and (ii) seems to have been discovered by Davies~\cite[Lemma 4.3]{Davies1969}, and later generalised by Arazy~\cite{Arazy1981}. Cushen and Hudson~\cite[Theorem 2]{Cushen1971} proved that (ii) $\Leftrightarrow$ (iii) using L\'evy's continuity theorem~\cite[Theorem 19.1]{JACOD-PROTTER}. The full power of the implication (iii) $\Rightarrow$ (ii) has been exploited for instance in~\cite{Wolf2006}. In Appendix~\ref{app SWOT lemma} we provide a self-contained proof of Lemma~\ref{SWOT convergence lemma} for the sake of completeness.

In what follows we will often deal with convergence questions also at the level of superoperators acting on $\mathcal{T}(\mathcal{H})$ (e.g.\ quantum channels). The strongest notion of convergence in this context is that of \textbf{uniform convergence}. A sequence $(\Phi_k)_k$ of maps $\Phi_k:\mathcal{T}(\mathcal{H}_A)\rightarrow \mathcal{T}(\mathcal{H}_A)$ is said to converge to $\Phi:\mathcal{T}(\mathcal{H}_A)\rightarrow \mathcal{T}(\mathcal{H}_A)$ uniformly, and we write $\Phi_k \tendsk{u} \Phi$, if $\|\Phi_k - \Phi\|_\diamond \tendsk{} 0$, where the \textbf{diamond norm} is defined by~\cite{Aharonov1998, PAULSEN}
\bb
\left\| \Delta \right\|_\diamond \coloneqq \sup\left\{ \| (\Delta_A \otimes I_B)(\rho_{AB})\|_1:\ \text{$\mathcal{H}_B$ Hilbert space, $\rho_{AB}\in\mathcal{D}(\mathcal{H}_A\otimes \mathcal{H}_B)$} \right\} .
\label{diamond}
\ee
The name `uniform' is justified since the action of the map $\Phi$ is approximated by that of the maps $\Phi_k$ \emph{independently} of the input state and of the correlations it has with any external system. 

As detailed in~\cite{Shirokov2018, VV-E-diamond}, uniform convergence is often too stringent of a requirement to be useful for applications in quantum information theory. A weaker notion of convergence that is still operationally very relevant is the following: $(\Phi_k)_k$ is said to converge to $\Phi$ in the \textbf{strong operator topology}, and we write $\Phi_k \tendsk{SOT} \Phi$, if
\bb
\Phi_k(\rho)\tendsk{s} \Phi(\rho)\qquad \forall\ \rho\in\mathcal{D}(\mathcal{H})\, .
\label{SOT channels}
\ee
Observe that in the above definition we can replace $\mathcal{D}(\mathcal{H})$ with $\mathcal{T}(\mathcal{H})$, as every trace-class operator is a linear combination of at most two quantum states. Strong operator convergence implies that the action of the map $\Phi$ on any \emph{known} input state can be approximated by that of the maps $\Phi_k$; for this reason, it is a strictly weaker condition than uniform convergence. 
The strong operator topology is studied for instance in~\cite{Shirokov2008}. On the mathematical level, one of its main strengths is that the set of completely positive and trace-preserving maps is closed with respect to it, i.e.\ the strong limit of a sequence of quantum channels is another quantum channel. Another important feature is that compositions and tensor products behave well under limits. More precisely, for all pairs of sequences $(\Phi_k)_k, (\Psi_k)_k$ of trace-norm bounded maps $\Phi_k,\Psi_k:\mathcal{T}(\mathcal{H})\to \mathcal{T}(\mathcal{H})$ such that $\Phi_k \tendsk{SOT} \Phi$ and $\Psi_k \tendsk{SOT} \Psi$, one has~\cite{Shirokov2008,Shirokov2018,Wilde2018}
\begin{align}
    \Phi_k \circ \Psi_k &\tendsk{SOT} \Phi\circ \Psi\, , \label{SOT composition} \\
    \Phi_k \otimes \Psi_k &\tendsk{SOT} \Phi \otimes \Psi\, . \label{SOT tensor product}
\end{align}

Recently, the strong operator topology was also shown to be connected with the notion of uniform convergence on the set of quantum states of bounded energy~\cite{Shirokov2018, VV-E-diamond}, a result which greatly bolstered its physical interpretation. In our case, the energy can be defined with respect to the free-field Hamiltonian $H\coloneqq \frac12 r^\intercal r$, and uniform convergence on energy-bounded states is given by
\bb
\|\Phi_k - \Phi\|_{\diamond E} \tendsk{} 0\, ,
\label{uniform convergence energy-bounded}
\ee
where the energy-constrained diamond norm is defined via~\cite[Eq.~(1)]{Shirokov2018}. We remind the reader that for the special case of interest here, i.e.\ that of a free-field Hamiltonian, a related definition was previously put forward in~\cite{PLOB}. The resulting norm is equivalent to the one of~\cite{Shirokov2018, VV-E-diamond}, but because of some desirable properties of the latter, we have chosen that definition.
We briefly summarize the various notions of convergence for operators and superoperators in Table~\ref{table1}.

\begin{table}
\renewcommand{\arraystretch}{2}
\setlength\tabcolsep{1.5ex}
\begin{tabular}{ |c|c|c|c| }
\hline
Space & Topology/convergence &Notation& Definition \\ \hline
\multirow{2}{*}{Operators} & weak & $T_k \tendsk{WOT} T$ & $\braket{\alpha| T_k |\beta} \tendsk{} \braket{\alpha|T|\beta} \ \ \forall\, \ket{\alpha},\ket{\beta}\in\mathcal{H}\, $ \\
 & strong & $T_k \tendsk{s} T$ & $\|T-T_k\|_1 \tendsk{} 0$  \\ \hline
\multirow{3}{*}{Superoperators} & uniform& $\Phi_k \tendsk{u} \Phi$ & $\|\Phi_k - \Phi\|_\diamond \tendsk{} 0$ \\
 & strong & $\Phi_k \tendsk{SOT} \Phi$ & $\Phi_k(\rho)\tendsk{s} \Phi(\rho)\quad \forall\ \rho\in\mathcal{D}(\mathcal{H})\,$  \\
 & $\begin{array}{c} \text{uniform} \\[-3ex] \text{on energy-bounded states} \end{array}$ & $\Phi_k \tendsk{u,E} \Phi$ & $\|\Phi_k - \Phi\|_{\diamond E} \tendsk{} 0$ \\ 
\hline
\end{tabular}
\caption{Summary of the different notions of convergence/operator topologies on the space of trace-class operators and channels (superoperators) acting on a given Hilbert space $\mathcal{H}$. Here $\{T_k\}_k$ and $\{\Phi_k\}_k$ are a sequence of operators and superoperators respectively; $\|\cdot\|_1, \, \|\cdot\|_{\diamond}$, and $\|\cdot\|_{\diamond E}$ denote respectively the trace norm of operators, diamond norm and energy-constrained diamond norm of superoperators. The final column provides the how the sequences $\{T_k\}_k$ and $\{\Phi_k\}_k$ converge respectively to $T$ and $\Phi$. The details of the convergence are explained in this subsection.}
\label{table1}
\end{table}

In this paper we will be mostly interested in the convergence of sequences of bosonic channels to other bosonic channels. In this context a similar result to Lemma~\ref{SWOT convergence lemma} holds, as we now set out to establish. 

\begin{lemma}[SWOTTED convergence lemma] \label{SWOTTED convergence channels}
Let $(\Phi_k)_k$ be a sequence of bosonic channels $\Phi_k: \mathcal{T}(\mathcal{H}_n)\rightarrow \mathcal{T}(\mathcal{H}_n)$. Then the following are equivalent:
\begin{enumerate}[(i)]
\item $(\Phi_k)_k$ converges to a bosonic channel uniformly on energy-bounded states (i.e.\ Eq.~\eqref{uniform convergence energy-bounded} is satisfied);
\item $(\Phi_k)_k$ converges in the strong operator topology;
\item for all $\ket{\psi}\in\mathcal{H}_n$, the sequence of characteristic functions $\left(\chi_{\Phi_k(\ketbra{\psi}{\psi})}\right)_k$ converges pointwise to a function that is continuous at $0$.
\end{enumerate}
If any of the above conditions is met, then the limit in (i) is the same as that in (ii) (call it $\Phi$), and that in (iii) is $\chi_{\Phi(\ketbra{\psi}{\psi})}$.
\end{lemma}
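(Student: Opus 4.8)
The plan is to prove the cycle of implications by establishing the equivalence (ii) $\Leftrightarrow$ (iii) from first principles, which is the genuinely new content, and then to obtain (i) $\Leftrightarrow$ (ii) by invoking the equivalence between strong operator convergence and uniform convergence on energy-bounded states from~\cite{Shirokov2018, VV-E-diamond}. That general result applies here because the free-field Hamiltonian $H=\frac12 r^\intercal r$ satisfies $\Tr e^{-\beta H}<\infty$ for all $\beta>0$, which is precisely the spectral (Gibbs) condition it requires. Throughout I would lean on the SWOT convergence lemma (Lemma~\ref{SWOT convergence lemma}), applied not to the inputs but to the \emph{outputs} $\Phi_k(\ketbra{\psi}{\psi})$, which are bona fide density operators since each $\Phi_k$ is a channel.

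\emph{(ii) $\Rightarrow$ (iii).} Assuming $\Phi_k \tendsk{SOT} \Phi$, for every unit vector $\ket{\psi}$ the outputs $\Phi_k(\ketbra{\psi}{\psi})$ converge in trace norm to the density operator $\Phi(\ketbra{\psi}{\psi})$ (the limit is a state because the strong operator limit of channels is a channel). Trace-norm convergence of density operators is condition (ii) of Lemma~\ref{SWOT convergence lemma}, so its implication (ii) $\Rightarrow$ (iii) yields that $\left(\chi_{\Phi_k(\ketbra{\psi}{\psi})}\right)_k$ converges pointwise to $\chi_{\Phi(\ketbra{\psi}{\psi})}$, which is continuous in $0$ by the quantum Bochner theorem (Lemma~\ref{quantum Bochner}). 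This is exactly (iii).

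\emph{(iii) $\Rightarrow$ (ii).} This is the crux. Fixing a unit vector $\ket{\psi}$, condition (iii) is precisely condition (iii) of Lemma~\ref{SWOT convergence lemma} for the density-operator sequence $\left(\Phi_k(\ketbra{\psi}{\psi})\right)_k$, so by the implication (iii) $\Rightarrow$ (ii) there this sequence converges in trace norm; call its limit $\Phi(\ketbra{\psi}{\psi})$. The task is to promote convergence on pure inputs to convergence on \emph{all} trace-class inputs, and I would argue as follows. The maps $\Phi_k$ are linear and, being completely positive and trace preserving, are contractions in trace norm, so $\sup_k \|\Phi_k\|_{1\to 1}\le 1$ and the family is uniformly bounded. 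By linearity and the polarization identity $\ketbra{u}{w} = \tfrac14 \sum_{j=0}^3 i^j \ketbra{u+i^j w}{u+i^j w}$, convergence on rank-one projectors upgrades to convergence on every operator $\ketbra{u}{w}$, hence on the trace-norm-dense subspace of finite-rank operators. A uniformly bounded sequence of operators that converges on a dense subspace converges everywhere (a standard $\varepsilon/3$ argument using $\|\Phi_k(T)-\Phi_k(T')\|_1 \le \|T-T'\|_1$), so $\left(\Phi_k(T)\right)_k$ is trace-norm Cauchy, hence convergent, for every $T\in\tclass{n}$. Setting $\Phi(T)\coloneqq \lim_k \Phi_k(T)$ defines a linear map that is the strong operator limit of the $\Phi_k$; complete positivity and trace preservation pass to strong limits, so $\Phi$ is a bosonic channel and (ii) holds.

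\emph{(i) $\Leftrightarrow$ (ii) and the main obstacle.} With (i) $\Leftrightarrow$ (ii) in hand the cycle closes, and I would obtain this equivalence directly from~\cite{Shirokov2018, VV-E-diamond}: for a Hamiltonian whose Gibbs states exist at every positive temperature---here guaranteed by $\Tr e^{-\beta H}<\infty$---strong operator convergence of channels is equivalent to convergence in the energy-constrained diamond norm on every energy shell. I expect the genuine obstacle to be the step (iii) $\Rightarrow$ (ii): the hypothesis only controls the outputs of \emph{pure} inputs, whereas strong operator convergence demands control on \emph{all} states, including those of infinite rank and unbounded energy. The resolution is exactly the uniform-boundedness mechanism above---contractivity of channels in trace norm together with density of finite-rank operators---which propagates pointwise convergence from a dense set to the whole trace-class, with no energy assumption entering at this stage.
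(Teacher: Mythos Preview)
Your proof is correct and follows essentially the same route as the paper: (i) $\Leftrightarrow$ (ii) is imported from~\cite{Shirokov2018}, (ii) $\Rightarrow$ (iii) comes from applying the SWOT lemma to outputs, and (iii) $\Rightarrow$ (ii) proceeds by upgrading convergence from pure inputs to finite-rank operators and then to all of $\tclass{n}$ by density and contractivity. The only cosmetic difference is in the last step: the paper truncates a general state $\rho$ to the finite-rank $\Pi\rho\Pi$ and writes out the triangle inequality explicitly, whereas you invoke the abstract ``uniformly bounded on a dense subspace $\Rightarrow$ convergent everywhere'' $\epsilon/3$ argument; these are the same idea in different clothing.
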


\begin{note}
The acronym `SWOTTED' stands for `Strong / Weak Operator Topology / Topology of Energy-constrained Diamond norms'.
\end{note}

\begin{proof}
The fact that (i) and (ii) are equivalent is proved in~\cite[Proposition~3(B)]{Shirokov2018}. Observe that our Hamiltonian $H=\frac12 r^\intercal r$ satisfies the required conditions.
Moreover, applying Lemma~\ref{SWOT convergence lemma} with $\rho_k=\Phi_k(\ketbra{\psi}{\psi})$ we deduce that (ii) implies (iii). All that is left to prove is that (iii) implies (ii).

Start by observing that if (iii) holds then for all finite-rank operators $A$ the sequence $\chi_{\Phi_k(A)}$ converges pointwise to a function that is continuous at $0$. For a given density operator $\rho$ and some fixed $\epsilon>0$, choose a projector $\Pi$ onto a finite-dimensional space such that $\|\rho - \Pi \rho \Pi\|_1\leq \epsilon$ (this is possible because $\rho$ is trace-class). Since $\rho_k \coloneqq \Phi_k(\Pi \rho \Pi)$ satisfies Lemma~\ref{SWOT convergence lemma}(iii), from Lemma~\ref{SWOT convergence lemma}(ii) we deduce that $\Phi_k(\Pi \rho \Pi) \tendsk{s} \Phi(\Pi \rho \Pi)$, i.e.\ $\|\Phi_k(\Pi \rho \Pi) - \Phi(\Pi \rho \Pi)\|_1\leq \epsilon$ for all $k\geq k_0$. Using the fact that quantum channels are trace-norm contractions, we find
\begin{align*}
\|\Phi(\rho) - \Phi_k (\rho)\|_1 &\leq \|\Phi(\rho) - \Phi(\Pi\rho\Pi)\|_1 + \|\Phi(\Pi\rho\Pi) - \Phi_k (\Pi\rho\Pi)\|_1 + \|\Phi_k(\Pi\rho\Pi) - \Phi_k (\rho)\|_1 \\
&\leq 2\|\rho - \Pi\rho\Pi\|_1 + \|\Phi(\Pi\rho\Pi) - \Phi_k (\Pi\rho\Pi)\|_1 \\
&\leq 3\epsilon
\end{align*}
for all $k\geq k_0$. Since $\epsilon$ was arbitrary, we deduce that $\Phi_k(\rho)\tendsk{s}\Phi(\rho)$ for all density operators $\rho$, implying that $\Phi_k \tendsk{SOT}\Phi$.
\end{proof}

\begin{rem} \label{BK convergence rem}
Equipped with Lemma~\ref{SWOTTED convergence channels}, it is really elementary to show that the Braunstein-Kimble continuous variable teleportation protocol~\cite{BK-teleportation} performed using as a resource a two-mode squeezed state of energy $E$ implements a sequence of channels that converge to the identity in the strong operator topology~\cite{BK-teleportation,Wilde2018} or -- equivalently -- uniformly on bounded energy states, but not uniformly on all states~\cite{BK-teleportation,Pirandola2018}. Indeed, the transformation in~\cite[Eq.~(9)]{BK-teleportation} can be rewritten at the level of characteristic functions as
\bbb
\chi_{\rho}(\alpha) \mapsto \chi_\rho(\alpha) e^{-\bar{\sigma}|\alpha|^2} ,
\eee
where we temporarily reverted to the complex notation. Here, $\bar{\sigma}$ is a parameter that will be made to tend to $0$ to increase the accuracy of the approximation. Since the function on the r.h.s.\ clearly converges pointwise to that on the l.h.s.\ for all fixed input states $\rho$, Lemma~\ref{SWOTTED convergence channels} guarantees that the channel convergence happens with respect to the strong operator topology, or uniformly on energy-bounded states. These statements have been the subject of discussion in some recent papers~\cite{Pirandola2018, Wilde2018}.
\end{rem}

\begin{rem} \label{attenuator rem}
Another elementary consequence of Lemma~\ref{SWOTTED convergence channels} concerns the family of Gaussian channels known as quantum limited attenuators. These are channels acting on an arbitrary number of modes and defined by
\bb
    \mathcal{A}_\lambda : \chi_\rho(\xi)\longmapsto \chi_{\rho}\left( \sqrt\lambda \xi\right) e^{-\frac{1-\lambda}{4} \xi^\intercal\xi} ,
\label{attenuator}
\ee
where $0\leq \lambda\leq 1$. A swift application of Lemma~\ref{SWOTTED convergence channels} shows that
\bb
\mathcal{A}_\lambda\tends{SOT}{\lambda\to 1} \mathrm{id}\, ,
\label{strong convergence attenuator}
\ee
with $\mathrm{id}$ being the identity channel, a fact already mentioned in \cite[Section~IV.B]{VV-E-diamond}. In \cite[Proposition~1]{VV-E-diamond} it is also shown that the convergence in Eq.~\eqref{strong convergence attenuator} is not uniform.
\end{rem}


\section{Main results} \label{sec main}

We have seen that linear bosonic channels constitute a mathematically natural generalisation of the set of Gaussian channels. On the other hand, Gaussian dilatable channels are a physically meaningful class of operations for which there exists an operationally feasible implementation that requires only moderate resources (namely, a single non-Gaussian state and a Gaussian unitary). The main result of the present paper is that these two sets of operations are deeply related with each other, namely the latter is the strong operator closure of the former (Theorem~\ref{approx n modes thm}).

\subsection{The set of linear bosonic channels is closed} \label{subsec linear channels closed}

Implicit in the above statement is the claim that linear bosonic channels form themselves a closed set. Let us start by investigating this question, whose affirmative answer is the content of our first result. Our proof technique rests crucially upon the SWOTTED convergence lemma (Lemma~\ref{SWOTTED convergence channels}).

\begin{thm} \label{closed thm}
The set of all linear bosonic channels acting on an $n$-mode system is closed with respect to the strong operator topology.
\end{thm}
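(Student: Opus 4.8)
The plan is to take an arbitrary sequence of linear bosonic channels $\Phi_k = \Phi_{X_k,f_k}$ converging in the strong operator topology to some map $\Phi$, and to show that $\Phi$ is again of the form $\Phi_{X,f}$. By the SWOTTED convergence lemma (Lemma~\ref{SWOTTED convergence channels}) the limit $\Phi$ is automatically a bosonic channel, and for every pure state $\ket{\psi}$ the characteristic functions converge pointwise,
\[
\chi_\psi(X_k\xi)\, f_k(\xi) = \chi_{\Phi_k(\proj{\psi})}(\xi) \longrightarrow \chi_{\Phi(\proj{\psi})}(\xi) =: F_\psi(\xi),
\]
with $F_\psi$ continuous at $0$. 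The whole difficulty is that this gives convergence only of the \emph{product} $\chi_\psi(X_k\xi)f_k(\xi)$, so the first task is to disentangle the matrices $X_k$ from the functions $f_k$.

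To extract a limit for $X_k$ I would feed the channels coherent states $\ket{\lambda} = D(\lambda)\ket{0}$, whose characteristic functions are $\chi_{\proj{\lambda}}(\xi) = e^{i\lambda^\intercal\Omega\xi}\,e^{-\frac14\xi^\intercal\xi}$ (this follows from Eq.~\eqref{chi vacuum} and the Weyl relations Eq.~\eqref{Weyl}). Dividing the limit for $\ket{\lambda}$ by that for the vacuum $\ket{0}$ cancels both $f_k$ and the Gaussian envelope and leaves
\[
e^{i\lambda^\intercal \Omega X_k \xi} \longrightarrow \frac{F_\lambda(\xi)}{F_0(\xi)}
\]
for all $\lambda\in\mathbb{R}^{2n}$ and all $\xi$ in a neighbourhood $U$ of $0$, where $F_0$ stays nonzero because $F_0(0)=\lim_k f_k(0)=1$ and $F_0$ is continuous at $0$. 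Fixing $\xi_0\in U\setminus\{0\}$ and setting $u_k := \Omega X_k\xi_0$, I would then argue that pointwise convergence of the characters $\lambda\mapsto e^{i\lambda^\intercal u_k}$ forces $u_k$ to converge. This is the technical heart of the argument: integrating $e^{i\lambda^\intercal u_k}$ against a Gaussian weight and using dominated convergence shows that $e^{-|u_k|^2/2}$ converges, hence $(u_k)_k$ is bounded; any two subsequential limits $u,u'$ must then satisfy $e^{i\lambda^\intercal u}=e^{i\lambda^\intercal u'}$ for all $\lambda$ and so coincide, giving full convergence $u_k\to u$. Since $\Omega$ is invertible this yields convergence of $X_k\xi_0$ for every $\xi_0\in U$, and by linearity $X_k\to X$ entrywise for some real $2n\times 2n$ matrix $X$. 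I expect this disentangling step to be the main obstacle, precisely because a priori the $X_k$ could drift to infinity, and ruling this out is exactly what the Gaussian-integration (L\'evy-continuity-type) argument accomplishes.

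With $X_k\to X$ in hand, the rest is routine. Characteristic functions of states are continuous everywhere, so $\chi_\psi(X_k\xi)\to\chi_\psi(X\xi)$; taking $\ket{\psi}=\ket{0}$, whose transformed characteristic function $\chi_0(X\xi)=e^{-\frac14\xi^\intercal X^\intercal X\xi}$ never vanishes, I can solve for the pointwise limit $f_k(\xi)\to F_0(\xi)/\chi_0(X\xi)=:f(\xi)$. Passing to the limit in the displayed convergence then gives $\chi_{\Phi(\proj{\psi})}(\xi)=\chi_\psi(X\xi)f(\xi)$ for every pure state; by linearity of $T\mapsto\chi_T$ and of $\Phi$ this extends to all finite-rank operators, and finally to all of $\tclass{n}$ by approximating in trace norm and using that $\Phi$ is a trace-norm contraction (trace-norm convergence implies pointwise convergence of characteristic functions). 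Hence $\chi_{\Phi(\rho)}(\xi)=\chi_\rho(X\xi)f(\xi)$ for all $\rho$, i.e.\ $\Phi=\Phi_{X,f}$ acts exactly as in Eq.~\eqref{linear chi}; being a bosonic channel of this form, $\Phi$ is a linear bosonic channel, which is what we wanted. Its parameters $X,f$ then automatically satisfy the conditions of Lemma~\ref{lemma cp linear channels}, since $\Phi$ is a genuine channel.
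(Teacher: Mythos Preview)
Your argument is sound in outline and close in spirit to the paper's, but one step needs tightening. When you write ``$e^{-|u_k|^2/2}$ converges, hence $(u_k)_k$ is bounded,'' you are tacitly assuming the limit is nonzero; if it were $0$ then $|u_k|\to\infty$ and the conclusion fails. To close this you must use that the limit function $g(\lambda)=F_\lambda(\xi_0)/F_0(\xi_0)$ is continuous at $\lambda=0$ (then integrating against a sufficiently narrow Gaussian gives a limit close to $g(0)=1>0$). This continuity does hold --- $\lambda\mapsto\proj{\lambda}$ is trace-norm continuous and $\Phi$ is a channel, hence $\lambda\mapsto\chi_{\Phi(\proj{\lambda})}(\xi_0)$ is continuous --- but you should say so explicitly. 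There is in fact a cheaper route that bypasses the L\'evy step: since $|f_k|\le 1$ (the $N=2$ case of $J(X_k)$-positivity together with $f_k(0)=1$), one has $e^{-\frac14|X_k\xi_0|^2}\ge\big|e^{-\frac14|X_k\xi_0|^2}f_k(\xi_0)\big|\to|F_0(\xi_0)|>0$ for $\xi_0\in U$, which bounds $|X_k\xi_0|$ directly.

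This shortcut is essentially how the paper proceeds. It first bounds the whole sequence $(X_k)_k$ by the contradiction that otherwise $\chi_{\proj{0}}(X_k\xi)f_k(\xi)$ has a vanishing subsequence for almost every $\xi$, forcing $\chi_{\Phi(\proj{0})}\equiv 0$ by continuity. It then extracts a convergent subsequence and proves uniqueness of the limit by feeding coherent states $\ket{s}$ and observing that the ratio of the two subsequential output characteristic functions cannot depend on $s$ unless the two limit matrices coincide. Your division $F_\lambda/F_0$ is a cleaner packaging of the same coherent-state idea, isolating the phase $e^{i\lambda^\intercal\Omega X_k\xi_0}$ directly rather than through a two-subsequence comparison; conversely the paper's boundedness argument is shorter and avoids the L\'evy machinery. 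Finally, your density extension from pure states to all of $\tclass{n}$ is correct but unnecessary: once $X_k\to X$ and $f_k\to f$ pointwise, the identity $\chi_{\Phi(\rho)}(\xi)=\chi_\rho(X\xi)f(\xi)$ for every state $\rho$ follows at once from $\Phi_k(\rho)\to\Phi(\rho)$ in trace norm (which is the assumed SOT convergence) together with the continuity of $\chi_\rho$.
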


\begin{proof}
Consider a sequence of linear bosonic channels $(\Phi_{X_k, f_k})_k$, and assume that $\Phi_{X_k,f_k}\tendsk{SOT} \Phi$ for some completely positive trace-preserving map $\Phi:\tclass{n}\rightarrow \tclass{n}$. We have to prove that also $\Phi$ is a linear bosonic channel. To this end, start by considering the sequence of matrices $(X_k)_k$. If the sequence $\left((X_k^\intercal X_k)_{jj}\right)_{k}$ is unbounded for some fixed $j=1,\ldots, 2n$, then for almost all $\xi$ the sequence $\xi^\intercal X_k^\intercal X_k \xi$ is also unbounded (this does not happen only on the zero-measure hyperplane $\xi_j=0$). Remembering that the characteristic function of the vacuum state is given by Eq.~\eqref{chi vacuum}, we deduce that the sequence of complex numbers $\chi_{\ketbra{0}{0}}(X_k\xi)$ has a vanishing subsequence for almost all $\xi$. Since all functions $f_k$ are upper bounded by $1$ in modulus, the same is true for the sequence $\chi_{\ketbra{0}{0}}(X_k\xi) f_k(\xi)=\chi_{\Phi_{X_k,f_k}(\ketbra{0}{0})}(\xi)$. This however converges to $\chi_{\Phi(\ketbra{0}{0})}(\xi)$ by Lemma~\ref{SWOTTED convergence channels}, hence we conclude that $\chi_{\Phi(\ketbra{0}{0})}(\xi)=0$ for almost all $\xi$. As $\chi_{\Phi(\ketbra{0}{0})}$ is necessarily continuous, we conclude that it must vanish everywhere. This is absurd as $\chi_{\Phi(\ketbra{0}{0})}(0)=1$ by the quantum Bochner theorem (Lemma~\ref{quantum Bochner}).

We then conclude that $(X_k^\intercal X_k)_{jj}$ is bounded for all $j$, and hence there exists $M>0$ such that $\|X_k\|_{1\rightarrow 2} \coloneqq \max_j (X_k^\intercal X_k)_{jj}^{1/2} \leq M$. This means that the sequence $(X_k^\intercal X_k)_k$ is itself bounded, because $\|\cdot\|_{1\rightarrow 2}$ is a norm on the set of matrices. Since we are in a finite-dimensional space, there will exist a converging subsequence $X_{k_m}\tends{}{m\rightarrow\infty} X$.

We now show that $X$ is the limit of $(X_k)_k$. In fact, if this were not the case there would exist another limit point $X'\neq X$, i.e.\ $X_{k'_m}\tends{}{m\rightarrow\infty} X'\neq X$ for some other subsequence $(X_{k'_m})_m$. We can use the fact that $\Phi_{X_k, f_k}(\rho)\tendsk{s} \Phi(\rho)$ for \emph{all} density operators $\rho$ to deduce a contradiction. 
Choose as $\rho$ a Gaussian state with covariance matrix $\id$ and mean $s$, i.e.\ $\rho=\ketbra{s}{s}$ where $\ket{s}$ is a coherent state. Using the explicit expression in Eq.~\eqref{chi Gaussian} for the characteristic function together with Lemma~\ref{SWOTTED convergence channels}, we obtain
\bb
\lim_k e^{- \frac14 \xi^\intercal X_k^\intercal X_k \xi + is^\intercal \Omega X_k \xi} f_k(\xi) = \lim_k \chi_{\Phi_{X_k,f_k}(\ketbra{s}{s})}(\xi) = \chi_{\Phi(\ketbra{s}{s})}(\xi)\, .
\label{closed eq2}
\ee
Since on the two subsequences $(k_m)_m$ and $(k'_m)_m$ the exponential factor on the l.h.s.\ of the above equation converges to a nonzero limit, also the sequences $(f_{k_m}(\xi))_m$ and $(f_{k'_m}(\xi))_m$ have to converge. Let us define
\bb
g(\xi) \coloneqq \lim_m f_{k_m}(\xi)\, ,\qquad g'(\xi) \coloneqq \lim_m f_{k'_m}(\xi)\, .
\label{closed eq3}
\ee
Taking the limit in Eq.~\eqref{closed eq2} over these two subsequences yields
\begin{align*}
e^{- \frac14 \xi^\intercal X^\intercal X \xi + is^\intercal \Omega X \xi} g(\xi) &= \lim_m e^{- \frac14 \xi^\intercal X_{k_m}^\intercal X_{k_m} \xi + is^\intercal \Omega X_{k_m} \xi} f_{k_m}(\xi) \\
&= \chi_{\Phi(\ketbra{s}{s})}(\xi) \\
&= \lim_m e^{- \frac14 \xi^\intercal X_{k'_m}^\intercal X_{k'_m} \xi + is^\intercal \Omega X_{k'_m} \xi} f_{k'_m}(\xi) \\
&= e^{- \frac14 \xi^\intercal (X')^\intercal X' \xi + is^\intercal \Omega X' \xi} g'(\xi)\, .
\end{align*}
Since $\chi_{\Phi(\ketbra{s}{s})}(0)=1$ and $\chi_{\Phi(\ketbra{s}{s})}$ is continuous at $0$ by the quantum Bochner theorem (Lemma~\ref{quantum Bochner}), in an appropriate neighbourhood $\mathcal{U}$ of $0$ in $\mathbb{R}^{2n}$ it must hold that $\chi_{\Phi(\ketbra{s}{s})}(\xi)\neq 0$. From the above chain of equalities we deduce also that $g(\xi)\neq 0$ for all $\xi\in \mathcal{U}$. Hence
\bb
e^{\frac14 \xi^\intercal \left( (X')^\intercal X' -  X^\intercal X\right) \xi + is^\intercal \Omega \left(X - X'\right) \xi} = \frac{g'(\xi)}{g(\xi)}\qquad \forall\ \xi\in \mathcal{U}\, .
\label{closed eq4}
\ee
The problem with Eq.~\eqref{closed eq4} is that the r.h.s.\ does not depend explicitly on $s$. The only way in which this can happen on the l.h.s.\ as well is if $X=X'$, which is what we wanted to prove.

We have established that $X_k \tendsk{} X$. Then, the identity in Eq.~\eqref{closed eq2} can only hold if the sequence $(f_k)_k$ converges pointwise to some function $f$, i.e.\ $f_k(\xi) \tendsk{} f(\xi)$ for all fixed $\xi$. For an arbitrary state $\rho$ we then obtain
\bbb
\chi_{\Phi(\rho)}(\xi) = \lim_k \chi_{\Phi_{X_k, f_k}(\rho)}(\xi) = \lim_k \chi_\rho(X_k \xi) f_k(\xi) = \chi_\rho(X\xi) f(\xi)\, ,
\eee
where the last equality follows from the continuity of $\chi_\rho$. This shows that also $\Phi$ is a linear bosonic channel.
\end{proof}

\subsection{Approximate Gaussian dilatation for any linear bosonic channel} \label{subsec approximate}

Before we state our main result, we look at the simplified case where the matrix $J(X)$ of Eq.~\eqref{J(X)} is invertible. When this happens, it turns out that the problem of writing linear bosonic channels as (limits of) Gaussian dilatable channels simplifies considerably.

\begin{lemma} \emph{\cite{hw01}.} \label{J invertible}
Any linear bosonic channel $\Phi_{X,f}$ that acts on an $n$-mode system and satisfies $\det J(X)\neq 0$, where $J(X)$ is defined by Eq.~\eqref{J(X)}, is Gaussian dilatable using $n$ auxiliary  modes.
\end{lemma}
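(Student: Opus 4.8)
The plan is to realise $\Phi_{X,f}$ \emph{exactly} as a Gaussian dilation of the form in Eq.~\eqref{G dilatable chi} with $m=n$ ancillary modes and no system displacement: that is, to exhibit a $2n\times 2n$ real matrix $Y$ and an $n$-mode ancillary state $\sigma$ obeying the dilation constraint Eq.~\eqref{XY} together with $\chi_\sigma(Y\xi)=f(\xi)$ for all $\xi$. Once this is done, Eq.~\eqref{G dilatable chi} (with $s=0$) gives the phase-space action $\chi_\rho(\xi)\mapsto \chi_\rho(X\xi)\,\chi_\sigma(Y\xi)=\chi_\rho(X\xi)\,f(\xi)$, which is precisely $\Phi_{X,f}$.

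First I would construct $Y$. Condition~\eqref{XY} with $m=n$ reads $Y^\intercal \Omega Y = \Omega - X^\intercal \Omega X = J(X)$. The matrix $J(X)$ is skew-symmetric and, by hypothesis, nondegenerate; by the classification of nondegenerate skew-symmetric bilinear forms on $\mathbb{R}^{2n}$, it is congruent to the standard symplectic form, so there is an invertible $S$ with $S^\intercal J(X) S = \Omega$. Setting $Y\coloneqq S^{-1}$ then yields $Y^\intercal \Omega Y = J(X)$ with $Y$ invertible. This is exactly where the hypothesis $\det J(X)\neq 0$ enters: it is what allows $Y$ to be a square \emph{invertible} matrix and, hence, the ancilla to have only $n$ modes.

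Next I would define $\sigma$ through its characteristic function. Since $Y$ is invertible, I set $\chi_\sigma(\eta)\coloneqq f\big(Y^{-1}\eta\big)$, so that $\chi_\sigma(Y\xi)=f(\xi)$ by construction, and verify that $\chi_\sigma$ is a legitimate characteristic function via the quantum Bochner theorem (Lemma~\ref{quantum Bochner}). Conditions (i) and (ii) are immediate, since $\chi_\sigma(0)=f(0)=1$ and $\chi_\sigma$ inherits continuity at $0$ from $f$; both properties hold because $\Phi_{X,f}$ is a channel (Lemma~\ref{lemma cp linear channels}). The substantive point is $\Omega$-positivity (iii). For arbitrary $\eta_1,\dots,\eta_N$, writing $\xi_\mu\coloneqq Y^{-1}\eta_\mu$, one has $\chi_\sigma(\eta_\mu-\eta_\nu)=f(\xi_\mu-\xi_\nu)$ and $\eta_\mu^\intercal \Omega\, \eta_\nu = \xi_\mu^\intercal\, Y^\intercal \Omega Y\, \xi_\nu = \xi_\mu^\intercal J(X)\,\xi_\nu$, so the matrix $\big[\chi_\sigma(\eta_\mu-\eta_\nu)\,e^{\frac{i}{2}\eta_\mu^\intercal \Omega \eta_\nu}\big]$ equals $\big[f(\xi_\mu-\xi_\nu)\,e^{\frac{i}{2}\xi_\mu^\intercal J(X)\xi_\nu}\big]$, which is positive semidefinite precisely because $f$ is $J(X)$-positive (Lemma~\ref{lemma cp linear channels}(iii)). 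Hence such a state $\sigma$ exists.

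Finally I would assemble the dilation. Because $X^\intercal \Omega X + Y^\intercal \Omega Y = \Omega$, the symplectic completion theorem~\cite[Theorem~1.15]{GOSSON} (invoked below Eq.~\eqref{XY}) supplies blocks $Z,W$ making $S_{AE}$ of Eq.~\eqref{symplectic AE} symplectic, hence a genuine Gaussian unitary $U_{AE}$; combined with the ancilla $\sigma$ and zero displacement this is a Gaussian dilation on $n$ modes whose action, by Eq.~\eqref{G dilatable chi}, reproduces $\Phi_{X,f}$. I expect the only delicate step to be the $\Omega$-positivity check, but it collapses to the clean congruence identity above; the real structural content is that invertibility of $J(X)$ makes $Y$ square and invertible, which is exactly what breaks down in the general case and forces one to pass to approximations and/or more ancillary modes.
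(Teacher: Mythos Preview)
Your proof is correct and follows essentially the same approach as the paper: you exploit the congruence of the nondegenerate skew-symmetric form $J(X)$ to $\Omega$ to produce an invertible $Y$ with $Y^\intercal\Omega Y=J(X)$, define $\chi_\sigma(\eta)=f(Y^{-1}\eta)$, and verify via the quantum Bochner theorem that this is a bona fide characteristic function. Your explicit verification of $\Omega$-positivity via the substitution $\xi_\mu=Y^{-1}\eta_\mu$ spells out what the paper leaves as ``not difficult to verify,'' and your closing invocation of the symplectic completion theorem matches the paper's remark below Eq.~\eqref{XY}.
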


\begin{proof}
It is a well-known fact from elementary linear algebra that all invertible skew-symmetric matrices are equivalent up to congruences~\cite[Corollary~2.5.14(b)]{HJ1}. In particular, if $\det J(X)\neq 0$ there will exist a $2n\times 2n$ (invertible) matrix $Y$ such that
\bb
\Omega - X^\intercal \Omega X = J(X) = Y^\intercal \Omega Y\, .
\label{J and Y}
\ee
Remember that the complete positivity conditions for $\Phi_{X,f}$ as expressed by Lemma~\ref{lemma cp linear channels} imply that $f$ is $J(X)$-positive. It is not difficult to verify that the function $g:\mathbb{R}^{2n}\rightarrow \mathds{C}$ given by
\bb
g(\xi) \coloneqq f(Y^{-1}\xi)
\label{g and f}
\ee
will then be $\Omega$-positive. Since $g(0)=f(0)=1$ and moreover $g$ is clearly continuous at $0$ as the same is true for $f$, all conditions of the quantum Bochner theorem (Lemma~\ref{quantum Bochner}) are met, and hence $g(\xi)=\chi_\sigma(\xi)$ for some $n$-mode quantum state $\sigma$. Substituting $\xi\mapsto Y\xi$ we can then rewrite Eq.~\eqref{g and f} as $f(\xi)=\chi_\sigma(Y\xi)$. Inserting this into Eq.~\eqref{linear chi} we obtain Eq.~\eqref{G dilatable chi} for $s=0$. Since $X$ and $Y$ satisfy Eq.~\eqref{XY}, the channel $\Phi_{X,f}$ is Gaussian dilatable. This is summarized as algorithm~\ref{algo1}. 
\end{proof}

\begin{algorithm}
\label{algo1}
\KwIn{$X,f$ of a linear bosonic channel $\Phi_{X,f}$.}
~Compute the matrix $Y$ whose existence is guaranteed by Eq.~\eqref{J and Y}.\\
~Obtain $g$ using Eq.~\eqref{g and f}.\\
~Determine $\sigma$ (the ancillary state) from $g$ using Eq.~\eqref{integral}; this is possible by Lemma~\ref{quantum Bochner}.\\
~Complete $\left(\begin{smallmatrix} X & Y\\\cdot & \cdot \end{smallmatrix}\right)$ to a symplectic matrix $S_{AE}$.\\
~Obtain the Gaussian unitary $U_{AE}$ for the dilation from $S_{AE}$. 
\caption{Obtaining an exact Gaussian dilation for a linear bosonic channel with invertible $J(X)$.}
\KwOut{Gaussian dilation of $\Phi_{X,f}$.}
\end{algorithm}

\begin{rem} \label{direct cp conditions f rem1}
The above argument shows that any pair $X,f$ that satisfies conditions (i)-(iii) of Lemma~\ref{lemma cp linear channels} and such that $\det J(X)\neq 0$ induces a map $\Phi_{X,f}$ that admits the representation in Eq.~\eqref{G dilatable Stinespring} for some Gaussian unitary $U_{AE}$ and some quantum state $\sigma_E$. In particular, $\Phi_{X,f}$ must be a bosonic channel (completely positive and trace-preserving). This proves that conditions (i)-(iii) of Lemma~\ref{lemma cp linear channels} are sufficient in order for $\Phi_{X,f}$ to be a bosonic channel, at least when $\det J(X)\neq 0$. This observation can be used to give an independent proof of Lemma~\ref{lemma cp linear channels}, see Remark~\ref{direct cp conditions f rem2} and Appendix~\ref{app cp linear channels}.
\end{rem}

Lemma~\ref{J invertible} solves our problem as long as $J(X)$ is an invertible matrix. However, a quick inspection reveals that this is not always the case, the most notable counterexample being the identity channel, for which we have $X=\id$ and hence $J(X)=0$. We are now ready to state and prove our main result.

\begin{thm} \label{approx n modes thm}
The set of linear bosonic channels coincides with the closure of the set of Gaussian dilatable channels with respect to the strong operator topology.
\end{thm}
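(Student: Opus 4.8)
The plan is to prove the two inclusions separately. For the easy direction, recall that every Gaussian dilatable channel is linear bosonic (as read off from Eq.~\eqref{G dilatable chi}), and that by Theorem~\ref{closed thm} the set of linear bosonic channels is closed in the strong operator topology. Since the strong-operator closure of a subset of a closed set is again contained in that closed set, the closure of the Gaussian dilatable channels is contained in the set of linear bosonic channels. The content of the theorem therefore lies in the reverse inclusion: every linear bosonic channel $\Phi_{X,f}$ must be exhibited as a strong-operator limit of Gaussian dilatable channels.

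For this reverse inclusion the only tool at hand is Lemma~\ref{J invertible}, which produces an exact Gaussian dilation whenever $\det J(X)\neq 0$. The obstruction is precisely that $J(X)$ may be singular — most strikingly for the identity channel, where $X=\id$ and $J(X)=0$. The key idea is to regularise $\Phi_{X,f}$ by composing it with a quantum-limited attenuator $\mathcal{A}_\eta$ of transmissivity $\eta$ close to $1$, whose action on characteristic functions is $\chi_\rho(\xi)\mapsto\chi_\rho(\sqrt{\eta}\,\xi)\,e^{-\frac{1-\eta}{4}\xi^\intercal\xi}$. Writing $\Phi_\eta\coloneqq\mathcal{A}_\eta\circ\Phi_{X,f}$, one reads off from Eq.~\eqref{linear chi} that $\Phi_\eta=\Phi_{X_\eta,f_\eta}$ with $X_\eta=\sqrt{\eta}\,X$ and $f_\eta(\xi)=f(\sqrt{\eta}\,\xi)\,e^{-\frac{1-\eta}{4}\xi^\intercal\xi}$. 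Crucially, $\Phi_\eta$ is a composition of two bona fide channels, hence automatically a channel; so $f_\eta$ is $J(X_\eta)$-positive for free, and one never has to check the complete positivity conditions of Lemma~\ref{lemma cp linear channels} by hand.

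It then remains to verify two points. First, that $J(X_\eta)$ is invertible for a suitable sequence $\eta_k\to 1^-$: a direct computation using Eq.~\eqref{J(X)} gives $J(X_\eta)=\Omega-\eta X^\intercal\Omega X=(1-\eta)\Omega+\eta J(X)$, so $\eta\mapsto\det J(X_\eta)$ is a polynomial that equals $\det\Omega=1\neq 0$ at $\eta=0$; being not identically zero, it has only finitely many roots, and one chooses $\eta_k\to 1^-$ avoiding them. By Lemma~\ref{J invertible} each $\Phi_{\eta_k}$ is then Gaussian dilatable, indeed on $n$ ancillary modes. Second, that $\Phi_{\eta_k}\tendsk{SOT}\Phi_{X,f}$: invoking the SWOTTED lemma (Lemma~\ref{SWOTTED convergence channels}), it suffices to check its condition (iii), i.e.\ pointwise convergence of the output characteristic functions on pure inputs to a function continuous at $0$. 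Here the composition structure pays off again, since
\[
\chi_{\Phi_{\eta_k}(\proj{\psi})}(\xi)=\chi_{\Phi_{X,f}(\proj{\psi})}\!\left(\sqrt{\eta_k}\,\xi\right)e^{-\frac{1-\eta_k}{4}\xi^\intercal\xi},
\]
and $\chi_{\Phi_{X,f}(\proj{\psi})}$ is continuous everywhere, being the characteristic function of a genuine density operator; hence the right-hand side converges pointwise to $\chi_{\Phi_{X,f}(\proj{\psi})}(\xi)$ as $\eta_k\to 1$, and the limit is manifestly continuous at $0$. This yields $\Phi_{\eta_k}\tendsk{SOT}\Phi_{X,f}$ and closes the argument.

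The main obstacle is conceptual rather than computational: it is the passage from the invertible case (Lemma~\ref{J invertible}) to the general, possibly singular, case. Once one hits on the attenuator regularisation — which simultaneously renders $J$ invertible, keeps the approximants genuine channels, and leaves the output characteristic functions convergent — the remaining steps are routine applications of the polynomial-root argument and the SWOTTED lemma. I would expect the only places demanding care to be the verification that the finitely-many-roots argument indeed permits $\eta_k\to 1^-$, and the bookkeeping of conventions in the attenuator's characteristic-function action.
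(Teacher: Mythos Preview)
Your proof is correct and the strategy is sound. It is close in spirit to the paper's but takes a genuinely cleaner route at the key step. The paper also rescales $X$ to $X_\epsilon=(1-\epsilon)X$ and uses the same polynomial argument to force $\det J(X_\epsilon)\neq 0$; however, instead of composing with a channel it modifies $f$ by hand, setting $f_\epsilon\coloneqq f\cdot g_\epsilon$ with $g_\epsilon(\xi)=e^{-\frac14\xi^\intercal V_\epsilon\xi}$ for a suitable $V_\epsilon$, and then checks directly via the Schur product theorem (Lemma~\ref{A positivity lemma} plus Hadamard products) that $f_\epsilon$ is $J(X_\epsilon)$-positive. Your attenuator trick bypasses this verification entirely: since $\Phi_\eta=\mathcal{A}_\eta\circ\Phi_{X,f}$ is a composition of channels, complete positivity is automatic, and the $J(X_\eta)$-positivity of $f_\eta$ comes for free. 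This is more economical and arguably more transparent.

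What the paper's construction buys in exchange is a bootstrap: because the $J(X_\epsilon)$-positivity of $f_\epsilon$ is derived from the $J(X)$-positivity of $f$ alone (never using that $\Phi_{X,f}$ is already a channel), the argument simultaneously yields an independent proof that conditions (i)--(iii) of Lemma~\ref{lemma cp linear channels} are \emph{sufficient} for $\Phi_{X,f}$ to be a channel (Remarks~\ref{direct cp conditions f rem1} and~\ref{direct cp conditions f rem2}). Your proof, by contrast, consumes the hypothesis that $\Phi_{X,f}$ is a channel when you compose with $\mathcal{A}_\eta$, so it does not give this by-product. For the theorem as stated this is immaterial, and both approaches land on Lemma~\ref{J invertible} and hence also deliver Corollary~\ref{approx n modes cor}.
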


\begin{proof}
Since we know from Theorem~\ref{closed thm} that the set of linear bosonic channels is strong operator closed, we only have to show that the closure of the set of Gaussian dilatable channels contains it. 
To this end, pick a linear bosonic channel $\Phi_{X,f}$. 
For all sufficiently small $\epsilon>0$, let us consider the channels $\Phi_{X,f}\circ \mathcal{A}_{(1-\epsilon)^2}$, where $\mathcal{A}_\lambda$ is a quantum limited attenuator as defined by Eq.~\eqref{attenuator}, and $\circ$ denotes composition. It can be readily verified by concatenating the transformations in Eq.~\eqref{attenuator} and~\eqref{linear chi} that
\bb
\Phi_{X,f} \circ \mathcal{A}_{(1-\epsilon)^2} = \Phi_{X_\epsilon,\, f_\epsilon}
\ee
is another linear bosonic channel of parameters 
\begin{align}
    X_\epsilon &\coloneqq (1-\epsilon) X,\\
    f_\epsilon(\xi) &\coloneqq f(\xi) e^{-\frac14\left(1-(1-\epsilon)^2\right) \xi^\intercal X^\intercal X\xi} .
\end{align}
Even if the matrix $J(X)$ of Eq.~\eqref{J(X)} fails to be invertible, we claim that $J(X_\epsilon)$ is invertible for all sufficiently small $\epsilon>0$. This can be easily deduced by observing that $p(\epsilon)\coloneqq \det J(X_\epsilon)$ is a polynomial in $\epsilon$ which satisfies $p(1) = \det J(0) = \det \Omega =1$, hence it is not identically zero and thus has only isolated zeros.

Now, since $J(X_\epsilon)$ is invertible, Lemma~\ref{J invertible} implies that $\Phi_{X_\epsilon, f_\epsilon}$ is Gaussian dilatable. Moreover, since $\mathcal{A}_{(1-\epsilon)^2} \tends{SOT}{\epsilon \to 0} \mathrm{id}$ by Eq.~\eqref{strong convergence attenuator}, and channel composition behaves well under strong topology convergence, as formalised in Eq.~\eqref{SOT composition}, we have that
\bb
    \Phi_{X_\epsilon, f_\epsilon} = \Phi_{X,f}\circ \mathcal{A}_{(1-\epsilon)^2} \tends{SOT}{\epsilon\rightarrow 0} \Phi_{X,f} \circ \mathrm{id} = \Phi_{X,f}\, .
    \label{SOT convergence Phi}
\ee
This shows that any linear bosonic channel is the strong operator limit of a sequence of Gaussian dilatable channels, concluding the proof.
\end{proof}

\begin{rem} \label{direct cp conditions f rem2}
Continuing along the lines of Remark~\ref{direct cp conditions f rem1}, we argue that the above construct in fact proves also that conditions (i)-(iii) of Lemma~\ref{lemma cp linear channels} are sufficient in order for $\Phi_{X,f}$ to be a bosonic channel. This is because we constructed a sequence of strong operator approximations to $\Phi_{X,f}$ that are guaranteed to be bosonic channels (because they are Gaussian dilatable), and the set of all channels is closed with respect to the strong operator topology. For further details see Appendix~\ref{app cp linear channels}.
\end{rem}

Our proof of Theorem~\ref{approx n modes thm} is constructive, in that we give an explicit sequence of Gaussian dilatable channels that approximates a given linear bosonic channel. Among other things, this shows that $n$-mode ancillary states suffice to implement Gaussian dilatable approximations. We formalise this observation below and summarize the steps for obtaining the approximate dilation in Algorithm~\ref{algo1b}. 

\begin{corollary} \label{approx n modes cor}
Every linear bosonic channel on $n$ modes can be approximated to any desired degree of accuracy in the strong operator topology by channels that are Gaussian dilatable on $n$ modes only.
\end{corollary}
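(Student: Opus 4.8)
The plan is to extract the conclusion directly from the constructive proof of Theorem~\ref{approx n modes thm}, since that construction already produces approximants of exactly the required type. Recall that, given a linear bosonic channel $\Phi_{X,f}$, that proof builds the explicit family $\Phi_{X_\epsilon,\,fg_\epsilon}$ with $X_\epsilon\coloneqq(1-\epsilon)X$ and $g_\epsilon$ as in Eq.~\eqref{g epsilon}, and establishes both that each $\Phi_{X_\epsilon,\,fg_\epsilon}$ is a linear bosonic channel and that $\Phi_{X_\epsilon,\,fg_\epsilon}\tends{SOT}{\epsilon\rightarrow 0}\Phi_{X,f}$ (Eq.~\eqref{SOT convergence Phi}). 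Thus the strong-operator approximation is already in hand; the only thing left to verify is the \emph{mode count} of the ancilla needed to dilate each approximant.

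For this I would invoke Lemma~\ref{J invertible}. The point of the construction is that $\epsilon$ is chosen precisely so that $\det J(X_\epsilon)\neq 0$, i.e.\ $J(X_\epsilon)$ is an invertible $2n\times 2n$ skew-symmetric matrix. By the congruence classification of invertible skew-symmetric matrices used in the proof of Lemma~\ref{J invertible}, it is therefore congruent to the standard symplectic form on $\mathbb{R}^{2n}$ via a \emph{square} $2n\times 2n$ matrix $Y_\epsilon$, so that $J(X_\epsilon)=Y_\epsilon^\intercal\Omega Y_\epsilon$ with $Y_\epsilon$ of size $2n\times 2n$. The ancillary state $\sigma_\epsilon$ furnished by the quantum Bochner theorem then has characteristic function $\xi\mapsto (fg_\epsilon)(Y_\epsilon^{-1}\xi)$ defined on $\mathbb{R}^{2n}$, hence is a state of exactly $n$ modes. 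Consequently every $\Phi_{X_\epsilon,\,fg_\epsilon}$ is Gaussian dilatable on $n$ modes, as required.

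The statement then follows by combining these two observations: the family $(\Phi_{X_\epsilon,\,fg_\epsilon})_{\epsilon>0}$ consists of channels each Gaussian dilatable on $n$ modes, and it converges to the arbitrary target $\Phi_{X,f}$ in the strong operator topology. I do not anticipate a genuine obstacle here, since the heavy lifting---existence of the approximants, their complete positivity, and the convergence---was already carried out for Theorem~\ref{approx n modes thm}. The one point deserving care, which I would emphasise, is that the ancilla dimension stays fixed at $n$ \emph{uniformly} in $\epsilon$; this is forced entirely by the squareness of $Y_\epsilon$, which in turn is a consequence of the invertibility of $J(X_\epsilon)$. In particular, no growth of the environment is needed as the accuracy improves.
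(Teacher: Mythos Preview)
Your proposal is correct and follows exactly the paper's own reasoning: the corollary is simply an observation extracted from the constructive proof of Theorem~\ref{approx n modes thm}, namely that each approximant $\Phi_{X_\epsilon,\,fg_\epsilon}$ has $\det J(X_\epsilon)\neq 0$ and hence, by Lemma~\ref{J invertible}, is Gaussian dilatable with an $n$-mode ancilla. Your added emphasis that the squareness of $Y_\epsilon$ keeps the ancilla size fixed at $n$ uniformly in $\epsilon$ is a helpful clarification, but the substance matches the paper's argument.
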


\begin{algorithm}
\label{algo1b}
\KwIn{$X,f$ of a linear bosonic channel $\Phi_{X,f}$.}
~Choose $\epsilon >0$ such that $J\left((1-\epsilon)X \right)$ is non-singular. \\
~Compose the original channel with an attenuator: $\Phi_{X_{\epsilon}, f_\epsilon}\coloneqq \Phi_{X,f}\circ \mathcal{A}_{(1-\epsilon)^2}$. \\
~Obtain a Gaussian dilation for $\Phi_{X_{\epsilon}, f_{\epsilon}}$ using Algorithm~\ref{algo1}. \\
~Limit $\epsilon \to 0^+$ provides the required approximation to $\Phi_{X,f}$ via Gaussian dilatable channels $\Phi_{X_{\epsilon}, f_{\epsilon}}$.
\caption{Obtaining an approximate Gaussian dilation for any linear bosonic channel using $n$ auxiliary modes.}
\KwOut{Approximate Gaussian dilation of $\Phi_{X,f}$.}
\end{algorithm}

\begin{rem}
Corollary~\ref{approx n modes cor} also applies to channels that are already Gaussian dilatable, but whose corresponding ancillary state (the $\sigma_E$ of Eq.~\eqref{G dilatable Stinespring}) is specified on a large number of modes $(>n)$ irrespective of it being Gaussian or not. The resulting resource compression could prove useful from a practical point of view. 
\end{rem}

\begin{rem} \label{approx dilations G channels rem}
A special case that is of great interest is that of Gaussian channels. In this context, Corollary~\ref{approx n modes cor} should be compared with the results of~\cite{min-dilations}. There the authors give an upper bound on the number of modes that are needed in order to construct a Gaussian dilation of any given Gaussian channel (with the ancillary state in Eq.~\eqref{G dilatable Stinespring} being also Gaussian). For instance, for a Gaussian additive noise channel, i.e.\ an additive noise channel as in Eq.~\eqref{additive noise chi} whose function $f$ is Gaussian, a $2n$-mode ancilla is required~\cite[Eq.~(48)]{min-dilations}. However, our findings show that it is possible to construct approximate Gaussian dilations of any $n$-mode Gaussian channel by means of an $n$-mode ancillary system only. Moreover, the proof of Theorem~\ref{approx n modes thm} shows that in this case the corresponding states $\sigma$ can be chosen to be Gaussian.
\end{rem}

\subsection{Why the closure is necessary} \label{subsec closure necessary}

A very natural question at this point is the following: \emph{is the closure in Theorem~\ref{approx n modes thm} really necessary?} In other words, \emph{are there examples of linear bosonic channels that are not Gaussian dilatable?} Here we argue that this is indeed the case by constructing an explicit example. This question has been recently investigated by the authors of~\cite{quntao18}, who developed a body of techniques to study non-Gaussian operations. Using these techniques, they were able to provide an example of a non-Gaussian channel (the `binary phase-shift channel' that applies a phase space inversion with probability $1/2$ on a single mode) that does not admit an exact Gaussian dilation in which the ancilla has finite energy. Observe that the channel they considered is not linear bosonic, so this result does not have immediate implications for the above question. Moreover, it does not seem possible to employ the techniques in~\cite{quntao18}, which rely crucially on the theory of relative entropy measures of non-Gaussianity~\cite{Genoni2008,Marian2013}, to exclude the existence of a Gaussian dilation whose corresponding ancillary state does not have finite first- or second-order moments.

One could argue that an exact Gaussian dilation such as that in Eq.~\eqref{G dilatable Stinespring} in which $\sigma$ has unbounded energy can not be physically realised if not approximately. However, in this case the approximation one can aim for is uniform instead of merely in the strong operator topology. Indeed, it is not difficult to realise that if $\Pi$ is a projector onto a finite-dimensional space such that $\sigma'\coloneqq \frac{\Pi \sigma \Pi}{\Tr[\sigma \Pi]}$ -- which has finite energy -- satisfies $\|\sigma-\sigma'\|_1\leq \epsilon$, the Gaussian channel $\Phi'$ defined by the same formula as in Eq.~\eqref{G dilatable Stinespring} with $\sigma'$ instead of $\sigma$ (and the same unitary $U$) satisfies $\|\Phi-\Phi'\|_\diamond\leq \epsilon$, where the diamond norm $\|\cdot\|_\diamond$ is defined by Eq.~\eqref{diamond}. In view of these considerations, we see that an exact Gaussian dilation for a quantum channel is still operationally meaningful even if the involved ancilla has infinite energy, because it leads to a sequence of uniform approximations via physically implementable Gaussian dilatable channels. This situation should be contrasted with the approximations we constructed in the proof of Theorem~\ref{approx n modes thm}, which are with respect to the strong operator topology and therefore `less accurate' in a precise sense.

In what follows we give an explicit example of a linear bosonic channel that is not exactly Gaussian dilatable, even if the ancillary state used for the dilation is allowed to have infinite energy. The channel we consider is a `binary displacement channel', which is a particular example of an additive noise channel (see Eq.~\eqref{additive noise p}) and is defined by the action
\bb
\mathcal{E}_s(\rho) \coloneqq \frac12 D(-s)\rho D(s) + \frac12 D(s)\rho D(-s)\qquad \forall\ \rho\in \mathcal{D}(\mathcal{H}_n)\, ,
\label{Es rho}
\ee
where $0\neq s\in\mathbb{R}^{2n}$ is a fixed vector. At the phase space level Eq.~\eqref{Es rho} translates to
\bb
\mathcal{E}_s:\chi_\rho(\xi)\longmapsto \chi_\rho(\xi) \cos\left( s^\intercal \Omega \xi\right) ,
\label{Es chi}
\ee
which proves that $\mathcal{E}_s$ is indeed a linear bosonic channel. We now set out to show that it can not be Gaussian dilatable.

\begin{lemma} \label{chi<1 lemma}
Any characteristic function $\chi_\sigma$ of an $m$-mode quantum state $\sigma$ satisfies
\bb
|\chi_\sigma(\zeta)|<1\qquad \forall\ \zeta\in \mathbb{R}^{2m}:\ \zeta\neq 0\, .
\label{chi<1}
\ee
\end{lemma}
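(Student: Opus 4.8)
The plan is to prove the strict inequality $|\chi_\sigma(\zeta)|<1$ for $\zeta\neq 0$ by a contradiction argument: suppose equality holds for some nonzero $\zeta_0$, and deduce that the state $\sigma$ would have to be invariant (up to a phase) under the displacement $D(\zeta_0)$, which is impossible for a trace-class operator. The starting point is the definition $\chi_\sigma(\zeta)=\Tr[\sigma D(\zeta)]$ together with the fact that $D(\zeta)$ is unitary, so that $\|D(\zeta)\|_\infty=1$. Since $\sigma$ is a density operator, $|\chi_\sigma(\zeta)|\leq \|\sigma\|_1 \|D(\zeta)\|_\infty = 1$, which gives the non-strict bound for free; the entire content of the lemma is the \emph{strictness}.

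First I would set up the equality case. Write $\sigma=\sum_j p_j \proj{e_j}$ in its spectral decomposition, with $p_j\geq 0$ and $\sum_j p_j=1$. Then $\chi_\sigma(\zeta)=\sum_j p_j \braket{e_j|D(\zeta)|e_j}$, and each term satisfies $|\braket{e_j|D(\zeta)|e_j}|\leq 1$ by Cauchy-Schwarz and unitarity. The equality $|\chi_\sigma(\zeta_0)|=1$ thus forces, after factoring out the common phase $e^{i\theta}=\chi_\sigma(\zeta_0)/|\chi_\sigma(\zeta_0)|$, that $\braket{e_j|D(\zeta_0)|e_j}=e^{i\theta}$ for every $j$ with $p_j>0$. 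But $|\braket{e_j|D(\zeta_0)|e_j}|=1$ with $D(\zeta_0)$ unitary and $\ket{e_j}$ a unit vector is precisely the equality case of Cauchy-Schwarz, which forces $D(\zeta_0)\ket{e_j}=e^{i\theta}\ket{e_j}$ for each such $j$.

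The \textbf{main obstacle}, and the real heart of the argument, is then to rule out that the unitary $D(\zeta_0)$ has an eigenvector for $\zeta_0\neq 0$. The cleanest route is to invoke the Weyl relations~\eqref{Weyl}: if $D(\zeta_0)\ket{e_j}=e^{i\theta}\ket{e_j}$, I would compare $\braket{e_j|D(\eta)D(\zeta_0)|e_j}$ and $\braket{e_j|D(\zeta_0)D(\eta)|e_j}$ for an auxiliary displacement $\eta$, and use $D(\eta)D(\zeta_0)=e^{-i\eta^\intercal\Omega\zeta_0}D(\zeta_0)D(\eta)$ to derive $e^{i\theta}\braket{e_j|D(\eta)|e_j}=e^{-i\eta^\intercal\Omega\zeta_0}e^{i\theta}\braket{e_j|D(\eta)|e_j}$. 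Choosing $\eta$ so that $\eta^\intercal\Omega\zeta_0\neq 0$ (possible precisely because $\zeta_0\neq 0$ and $\Omega$ is nondegenerate) would force $\braket{e_j|D(\eta)|e_j}=0$. Since this holds for $\eta$ in an open set, and $\eta\mapsto\braket{e_j|D(\eta)|e_j}$ is continuous with value $1$ at $\eta=0$, we obtain a contradiction with continuity. (Equivalently, one may argue directly that $D(\zeta_0)$ with $\zeta_0\neq 0$ has purely continuous spectrum and hence no eigenvectors, but the Weyl-relation computation is elementary and self-contained.) This contradiction establishes that equality cannot occur for any $\zeta_0\neq 0$, completing the proof.
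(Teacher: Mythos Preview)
Your proposal is correct and follows essentially the same approach as the paper: spectral-decompose $\sigma$, use the equality case of the triangle inequality/Cauchy--Schwarz to produce a normalized eigenvector $\ket{\psi}$ of $D(\zeta_0)$, and then use the Weyl relation to show that $\eta\mapsto\braket{\psi|D(\eta)|\psi}=\chi_{\ketbra{\psi}{\psi}}(\eta)$ vanishes on an open set accumulating at $0$, contradicting continuity and the value $1$ at the origin. The paper phrases the last step via the invariance $D(\zeta_0)\ketbra{\psi}{\psi}D(-\zeta_0)=\ketbra{\psi}{\psi}$ and its characteristic function, but this is the same computation as yours.
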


\begin{proof}
We already commented on the fact that characteristic functions of quantum states are upper bounded by $1$ in modulus (see Lemma~\ref{continuity Omega positive lemma}). The problem is to show that there is strict inequality in Eq.~\eqref{chi<1} whenever $\zeta\neq 0$. Let $\sigma=\sum_{\mu=0}^\infty p_\mu \ketbra{\psi_\mu}{\psi_\mu}$ be the spectral decomposition of $\sigma$, where the series converges strongly (i.e.\ in trace norm). Let us write
\bbb
|\chi_\sigma(\zeta)| = \Tr [\sigma D(\zeta)] = \left| \sum_{\mu=0}^\infty p_\mu \braket{\psi_\mu | D(\zeta)| \psi_\mu}\right| \leq \sum_{\mu=0}^\infty p_\mu \left|\braket{\psi_\mu | D(\zeta)| \psi_\mu}\right| \leq \sum_{\mu=0}^\infty p_\mu = 1\, ,
\eee
where we used the fact that $D(\zeta)$ is unitary. From the above chain of inequalities it is clear that $|\chi_\sigma(\zeta)| = 1$ if and only if $D(\zeta)\ket{\psi_\mu}= e^{i\varphi_\mu} \ket{\psi_\mu}$ for all $\mu$ such that $p_\mu>0$, where $\varphi_\mu\in\mathbb{R}$. In particular, there exists a vector $\ket{\psi}\in\mathcal{H}_m$ such that $D(\zeta)\ket{\psi}= e^{i\varphi} \ket{\psi}$, implying that $D(\zeta) \ketbra{\psi}{\psi} D(-\zeta) = \ketbra{\psi}{\psi}$. Computing the trace against a displacement operator $D(\eta)$ of both sides of this identity and using Eq.~\eqref{Weyl} we obtain
\begin{align*}
e^{i\zeta^\intercal \Omega \eta} \chi_{\ketbra{\psi}{\psi}}(\eta) &= e^{i\zeta^\intercal \Omega \eta} \Tr \left[ D(\eta) \ketbra{\psi}{\psi}\right] \\
&= \Tr \left[ D(-\zeta) D(\eta) D(\zeta) \ketbra{\psi}{\psi}\right] \\
&= \Tr \left[ D(\eta) D(\zeta) \ketbra{\psi}{\psi} D(-\zeta) \right] \\
&= \chi_{D(\zeta) \ketbra{\psi}{\psi} D(-\zeta)}(\eta) \\
&= \chi_{\ketbra{\psi}{\psi}}(\eta)\, ,
\end{align*}
from which we deduce that $\chi_{\ketbra{\psi}{\psi}}(\eta)=0$ whenever $e^{i\zeta^\intercal \Omega \eta}\neq 1$, i.e.\ whenever $\left[D(\zeta), D(\eta)\right]\neq 0$. Since $\zeta\neq 0$ and $\Omega$ is invertible, this happens almost everywhere in $\eta$. The continuity of $\chi_{\ketbra{\psi}{\psi}}$ as established in Lemma~\ref{continuity Omega positive lemma} entails that $\chi_{\ketbra{\psi}{\psi}}(\zeta) \equiv 0$ for all $\zeta$, which is in contradiction with the fact that $\chi_{\ketbra{\psi}{\psi}}(0) =1$ as required by the quantum Bochner theorem (Lemma~\ref{quantum Bochner}).
\end{proof}

\begin{corollary} \label{Es not G dilatable cor}
The linear bosonic channel $\mathcal{E}_s$ defined in Eq.~\eqref{Es rho} is not Gaussian dilatable for $s\neq 0$.
\end{corollary}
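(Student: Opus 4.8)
The plan is to argue by contradiction, assuming that $\mathcal{E}_s$ admits an exact Gaussian dilation and extracting an incompatibility with Lemma~\ref{chi<1 lemma}. If $\mathcal{E}_s$ were Gaussian dilatable, then by Eq.~\eqref{G dilatable chi} there would exist a displacement vector $d\in\mathbb{R}^{2n}$ (I rename the displacement $d$ to avoid a clash with the fixed vector $s$ of $\mathcal{E}_s$), matrices $X,Y$ obeying Eq.~\eqref{XY}, and an $m$-mode state $\sigma$ such that
\[
\chi_\rho(\xi)\,\cos\!\left(s^\intercal\Omega\xi\right) \;=\; e^{i d^\intercal \Omega\xi}\,\chi_\rho(X\xi)\,\chi_\sigma(Y\xi)
\]
holds for every state $\rho$ and every $\xi\in\mathbb{R}^{2n}$. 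The first step is to pin down $X$. Evaluating both sides on coherent states $\ket{a}$, whose characteristic function is $e^{-\frac14\xi^\intercal\xi + i a^\intercal\Omega\xi}$ by Eq.~\eqref{chi Gaussian}, the only $a$-dependence enters through the phases $e^{i a^\intercal\Omega\xi}$ and $e^{i a^\intercal\Omega X\xi}$; matching them for all $a$ forces $\Omega\xi=\Omega X\xi$, and since $\Omega$ is invertible this yields $X=\id$. (Equivalently, one may invoke the uniqueness of the matrix $X$ of a linear bosonic channel, which is implicit in the proof of Theorem~\ref{closed thm}.)

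With $X=\id$ the displayed identity collapses to $\cos(s^\intercal\Omega\xi) = e^{i d^\intercal\Omega\xi}\chi_\sigma(Y\xi)$ for all $\xi$. Taking moduli removes the phase and leaves the clean relation $|\chi_\sigma(Y\xi)| = |\cos(s^\intercal\Omega\xi)|$. The key idea is now to locate a point where the right-hand side equals $1$ while the argument of $\chi_\sigma$ is nonzero, so as to trigger Lemma~\ref{chi<1 lemma}. Since $s\neq 0$ and $\Omega$ is invertible, the linear functional $\xi\mapsto s^\intercal\Omega\xi$ is surjective, so I can choose $\xi^*$ with $s^\intercal\Omega\xi^*=\pi$. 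At this point $|\cos(s^\intercal\Omega\xi^*)| = |\cos\pi| = 1$, whence $|\chi_\sigma(Y\xi^*)|=1$; by Lemma~\ref{chi<1 lemma} this is only possible if $Y\xi^*=0$.

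Finally I would ride the whole ray $\{\tau\xi^*:\tau\in\mathbb{R}\}$ to close the argument. Because $Y\xi^*=0$, we have $Y(\tau\xi^*)=\tau\,Y\xi^*=0$ and hence $\chi_\sigma(Y(\tau\xi^*))=\chi_\sigma(0)=1$ for every $\tau$; substituting into the modulus relation gives $|\cos(\tau\pi)|=1$ for all real $\tau$, which already fails at $\tau=\tfrac12$, the desired contradiction. I expect the only genuinely delicate point to be the justification that $X=\id$ -- i.e.\ that the matrix part of a linear bosonic channel is uniquely determined by the channel -- since everything downstream is elementary once the functional equation has been reduced to the pure $f$-factor $\cos(s^\intercal\Omega\xi)=e^{id^\intercal\Omega\xi}\chi_\sigma(Y\xi)$. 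The rest is just the observation that $\cos$ attains modulus $1$ arbitrarily far from the origin, which is precisely the feature that no genuine characteristic function can reproduce.
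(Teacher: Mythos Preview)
Your proof is correct and follows essentially the same route as the paper's: invoke Lemma~\ref{chi<1 lemma} at a point where the cosine factor attains modulus $1$ to force the corresponding $Y\xi$ to vanish, then leverage this to reach a contradiction. The differences are cosmetic. You are more explicit than the paper in justifying $X=\id$ and in carrying along the displacement $d$ (which you then dispose of by taking moduli), whereas the paper simply reads off $X=\id$ and $d=0$ by ``comparing'' Eq.~\eqref{Es chi} with Eq.~\eqref{G dilatable chi}. Conversely, the paper rescales \emph{every} $\xi$ with $s^\intercal\Omega\xi\neq 0$ to hit $\cos(2\pi)=1$, concluding $Y\xi=0$ on a dense set and hence $Y=0$; you instead fix a single $\xi^*$ with $s^\intercal\Omega\xi^*=\pi$ and derive the contradiction along its ray. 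Both arguments are equally short and rely on the same mechanism.
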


\begin{proof}
Assume by contradiction that $\mathcal{E}_s$ were Gaussian dilatable. Comparing Eq.~\eqref{Es chi} with Eq.~\eqref{G dilatable chi}, we see that we would be able to find an $m$-mode state $\sigma$ and a $2m\times 2n$ matrix $Y$ such that $Y^\intercal \Omega Y=0$ and 
\bb
\chi_\sigma (Y\xi) = \cos\left( s^\intercal \Omega \xi\right) .
\label{Es not G dilatable eq1}
\ee
Observe that for all $\xi\in\mathbb{R}^{2n}$ such that $s^\intercal \Omega \xi\neq 0$ we have that $\tilde{\xi}\coloneqq \frac{2\pi}{s^\intercal \Omega \xi}\, \xi$ satisfies 
\bbb
\chi_\sigma (Y\tilde{\xi}) = \cos\left( s^\intercal \Omega \tilde{\xi}\right) = \cos(2\pi) =1\, .
\eee
Applying Lemma~\ref{chi<1 lemma}, we conclude that this is only possible if $Y\tilde{\xi}=0$ and hence $Y\xi=0$. Since the constraint $s^\intercal \Omega \xi\neq 0$ holds almost everywhere in $\xi$, and linear transformations are continuous, we conclude that $Y=0$ as a matrix. This is naturally in contradiction with Eq.~\eqref{Es not G dilatable eq1}, as the r.h.s.\ of the equation is not constant.
\end{proof}

Along the same lines, we suspect that the example in Eq.~\eqref{Es rho} can be generalised as to encompass all discrete convex combinations of conjugations by different displacement operators. We conjecture that all such linear bosonic channels are not exactly Gaussian dilatable.

\subsection{Number of auxiliary modes required} \label{subsec number auxiliary}

We now discuss the practical feasibility of approximating any linear bosonic channel with Gaussian dilatable maps. The main advantage of the technique we adopted to prove Theorem~\ref{approx n modes thm} is that it is fully explicit, and that it requires an ancillary system with the same number of modes as the system itself (Corollary~\ref{approx n modes cor}). However, the main disadvantage is that when $\det J(X)=0$ we had to construct a sequence of Gaussian dilations with varying Gaussian unitaries and varying ancillary states. This may be far from practical from an experimental point of view, as it would be more desirable to fix the unitary while varying only the ancillary state. It turns out that it is possible to do so, as we will now show. The following result is a generalisation of Lemma~\ref{J invertible}, and it is proved in the same spirit.

\begin{proposition} \label{approx n+k modes prop}
Let $\Phi_{X,f}$ be a linear bosonic channel acting on an $n$-mode system $A$. Let $E$ be an ancillary system composed of $n+k$ modes, where $k\coloneqq \frac12 \dim\ker J(X)$, and $J(X)$ is defined by Eq.~\eqref{J(X)}. Then there is a Gaussian unitary $U_{AE}$ and a sequence of states $\sigma_E(\epsilon)$ such that the corresponding Gaussian dilatable channels defined through Eq.~\eqref{G dilatable Stinespring} converge to $\Phi_{X,f}$ in the strong operator topology for $\epsilon\rightarrow 0^+$.
\end{proposition}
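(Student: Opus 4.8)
The plan is to generalise the proof of Lemma~\ref{J invertible}, isolating the non-degenerate and degenerate parts of $J(X)$ and absorbing the degeneracy into a \emph{fixed} isotropic embedding, so that the Gaussian unitary can be held constant while only the ancillary state varies with $\epsilon$. Since $J(X)$ is skew-symmetric of rank $2n-\dim\ker J(X)=2(n-k)$, the congruence normal form~\cite[Corollary~2.5.14(b)]{HJ1} supplies an invertible $R$ with $R^\intercal J(X)R=\Omega_{n-k}\oplus 0_{2k}$. Writing $R^{-1}=\left(\begin{smallmatrix}R_1\\ R_2\end{smallmatrix}\right)$ with $R_1$ of size $2(n-k)\times 2n$ and $R_2$ of size $2k\times 2n$, this reads $J(X)=R_1^\intercal\Omega_{n-k}R_1$. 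First I would organise the $n+k$ ancillary modes as $n-k$ ``primary'' modes carrying the non-degenerate quantum content plus $2k$ ``auxiliary'' modes, with symplectic form $\Omega_E=\Omega_{n-k}\oplus\Omega_{2k}$, and fix inside the auxiliary block a Lagrangian splitting $\mathbb{R}^{4k}=L\oplus L'$ with an isomorphism $\iota:\mathbb{R}^{2k}\to L$ onto the isotropic subspace, so $\iota^\intercal\Omega_{2k}\iota=0$. Defining
\[
Y\xi\coloneqq R_1\xi\,\oplus\,\iota R_2\xi\,\oplus\,0\ \in\ \mathbb{R}^{2(n-k)}\oplus L\oplus L',
\]
one computes $Y^\intercal\Omega_E Y=R_1^\intercal\Omega_{n-k}R_1+R_2^\intercal\iota^\intercal\Omega_{2k}\iota R_2=J(X)$, so that the pair $(X,Y)$ obeys Eq.~\eqref{XY} and completes to a single symplectic matrix $S_{AE}$, i.e.\ a fixed Gaussian unitary $U_{AE}$, independent of $\epsilon$.

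Next I would build the ancillary states. Observe that $Y$ is injective with image $W=\mathbb{R}^{2(n-k)}\oplus L\oplus\{0\}$, and $Y:\mathbb{R}^{2n}\to W$ is a bijection; let $P_W$ be the projection onto $W$ annihilating $L'$, and define the lift $h\coloneqq f\circ Y^{-1}\circ P_W$, which reproduces $f$ on $W$ and is constant along $L'$. Transporting the $J(X)$-positivity of $f$ through the change of variables shows that $h$ is $(\Omega_{n-k}\oplus 0_{4k})$-positive on $\mathbb{R}^{2(n+k)}$. Now introduce the squeezed Gaussian $g_\epsilon(\zeta)\coloneqq e^{-\frac14\zeta^\intercal V_\epsilon\zeta}$ with $V_\epsilon\coloneqq 0_{2(n-k)}\oplus\mathrm{diag}(\epsilon I,\epsilon^{-1}I)$ on the auxiliary block $(L,L')$: since $V_\epsilon\geq i\left(0\oplus\Omega_{2k}\right)$ (the auxiliary covariance saturates the uncertainty bound, symplectic eigenvalue $1$), Lemma~\ref{A positivity lemma} gives that $g_\epsilon$ is $(0\oplus\Omega_{2k})$-positive. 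Taking the Hadamard product of the $(\Omega_{n-k}\oplus 0_{4k})$-positivity matrix of $h$ with the $(0\oplus\Omega_{2k})$-positivity matrix of $g_\epsilon$, and using the Schur product theorem as in the proof of Theorem~\ref{approx n modes thm}, shows that $hg_\epsilon$ is $\Omega_E$-positive. As $hg_\epsilon(0)=1$ and $hg_\epsilon$ is continuous in $0$, the quantum Bochner theorem (Lemma~\ref{quantum Bochner}) yields a genuine $(n+k)$-mode state $\sigma_E(\epsilon)$ with $\chi_{\sigma_E(\epsilon)}=hg_\epsilon$.

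Finally I would establish convergence via the SWOTTED lemma. Because the $L'$-component of $Y\xi$ vanishes, the blown-up $\epsilon^{-1}$ block is never probed, and one gets $\chi_{\sigma_E(\epsilon)}(Y\xi)=f(\xi)\,e^{-\frac{\epsilon}{4}\,\xi^\intercal R_2^\intercal\iota^\intercal\iota R_2\,\xi}\longrightarrow f(\xi)$ pointwise as $\epsilon\to 0^+$. Substituting the fixed $X,Y$ into Eq.~\eqref{G dilatable chi}, the channels $\Phi_\epsilon$ defined by $U_{AE}$ and $\sigma_E(\epsilon)$ satisfy $\chi_{\Phi_\epsilon(\rho)}(\xi)=\chi_\rho(X\xi)\,\chi_{\sigma_E(\epsilon)}(Y\xi)\to\chi_\rho(X\xi)f(\xi)=\chi_{\Phi_{X,f}(\rho)}(\xi)$ for every input state, the limit being a (continuous) characteristic function; Lemma~\ref{SWOTTED convergence channels} then upgrades this pointwise convergence to strong-operator convergence, with $U_{AE}$ held fixed throughout. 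I expect the main obstacle to be conceptual rather than computational: the kernel directions of $J(X)$ correspond to mutually commuting (``classical'') quadratures on which only ordinary positive-definiteness is available, so the required ancillary content there can be realised only by states that approach non-normalisable position eigenstates. It is precisely this fact that forces the squeezing limit $\epsilon\to 0^+$, dictates the need for exactly $2k$ extra auxiliary modes to host an isotropic image of $\ker J(X)$, and explains why with a fixed unitary one can obtain only strong-operator (not uniform or exact) convergence. A secondary technical point to verify carefully is that $Y$ as defined indeed satisfies Eq.~\eqref{XY} with the correctly sized $\Omega$'s and completes to a symplectic matrix, which follows from the completion theorem cited after Eq.~\eqref{XY}~\cite[Theorem~1.15]{GOSSON}.
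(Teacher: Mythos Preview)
Your proposal is correct and follows essentially the same approach as the paper's proof. Both arguments proceed by isolating the kernel of $J(X)$, embedding it isotropically into a doubled auxiliary phase space, placing a squeezed Gaussian (with covariance $\mathrm{diag}(\epsilon,\epsilon^{-1})$) on those extra modes, verifying $\Omega$-positivity of the resulting characteristic function via the Schur product theorem and Lemma~\ref{A positivity lemma}, and concluding with the SWOTTED convergence lemma. The only differences are cosmetic: the paper works mode-by-mode with explicit matrices (orthogonal $O$, block matrices $Y'_j$, Moore--Penrose inverse $\widetilde{Y}$, and $W(\epsilon)$), while you phrase the same construction more abstractly (general congruence $R$, a Lagrangian splitting $L\oplus L'$, and the projection $P_W$ along $L'$); in particular your $f\circ Y^{-1}\circ P_W$ plays the role of the paper's $f\circ\widetilde{Y}$, and your $V_\epsilon$ is the paper's $W(\epsilon)$ after grouping all degenerate modes together.
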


\begin{proof}
We start by looking at the $2n\times 2n$ skew-symmetric matrix $J(X)$ defined by Eq.~\eqref{J(X)}. We can find an orthogonal matrix $O\in\mathcal{O}(2n)$ such that~\cite[Corollary~2.5.14(b)]{HJ1}
\bb
J(X) = O^\intercal \left( \bigoplus\nolimits_{j=1}^n \begin{pmatrix} 0 & d_j \\ -d_j & 0 \end{pmatrix} \right) O\, ,
\label{jx-form}
\ee
where $d_j\geq 0$ are $n$ real numbers. Defining $k\coloneqq \frac12 \dim\ker J(X)$, we see that exactly $k$ of these numbers are zero. We now construct a $(2n+2k)\times 2n$ matrix $Y$ defined as follows:
\begin{align}
\label{yy'}
Y &\coloneqq \left( \bigoplus\nolimits_{j=1}^n Y'_j \right) O\, ,\\[1ex]
Y'_j &\coloneqq \left\{ \begin{array}{ll} d_j^{1/2} \id_2 & \text{if $d_j>0$,} \\[1ex] \left( \begin{smallmatrix} 1 & 0 \\ 0 & 0 \\ 0 & 1 \\ 0 & 0 \end{smallmatrix} \right) & \text{if $d_j=0$.} \end{array} \right. 
\end{align}
We first show that this $Y$ satisfies Eq.~\eqref{XY}. Before we do that, observe that the standard symplectic form of $n+k$ modes can be written as
\begin{align}
\Omega &\coloneqq \bigoplus\nolimits_{j=1}^n \Omega_j\, ,\\[1ex]
\Omega_j &\coloneqq \left\{ \begin{array}{ll} \left( \begin{smallmatrix} 0 & 1 \\ -1 & 0 \end{smallmatrix} \right) & \text{if $d_j>0$,} \\[2ex] \left( \begin{smallmatrix} 0 & 1 & & \\ -1 & 0 & & \\ & & 0 & 1 \\ & & -1 & 0 \end{smallmatrix}\right) & \text{if $d_j=0$.} \end{array}\right.
\end{align}
With this in mind, it is easy to check that indeed
\begin{align*}
Y^\intercal \Omega Y &= O^\intercal \left( \bigoplus\nolimits_{j=1}^n (Y'_j)^\intercal \Omega_j Y'_j  \right) O = O^\intercal \left( \bigoplus\nolimits_{j=1}^n \begin{pmatrix} 0 & d_j \\ -d_j & 0 \end{pmatrix} \right) O = \Omega - X^\intercal \Omega X\, .
\end{align*}
Now, consider the Moore--Penrose inverse of $Y$, call it $\widetilde{Y}$. One has
\begin{align}
\label{ytilde}
\widetilde{Y} &= O^\intercal \left( \bigoplus\nolimits_{j=1}^n \widetilde{Y}'_j \right) ,\\[1ex]
\widetilde{Y}'_j &= \left\{ \begin{array}{ll} d_j^{-1/2} \id_2 & \text{if $d_j>0$,} \\[1ex] \left( \begin{smallmatrix} 1 & 0 & 0 & 0 \\ 0 & 0 & 1 & 0 \end{smallmatrix} \right) & \text{if $d_j=0$.} \end{array} \right. 
\end{align}
Observe that
\begin{align}
\widetilde{Y} Y &= \id_{2n}\, , \label{Penrose 1}\\
Y \widetilde{Y} &= P\, , \label{Penrose 2}
\end{align}
where $\id_{2n}$ denotes the $2n\times 2n$ identity matrix, while $P$ is an orthogonal projector acting on $\mathbb{R}^{2n+2k}$. Define a family of $(2n+2k)\times (2n+2k)$ matrices $W(\epsilon)$ (where $\epsilon>0$ will later be taken to $0$) via the identities
\begin{align}
W(\epsilon) &\coloneqq \bigoplus\nolimits_{j=1}^n W_j(\epsilon)\, , \label{wepsilon 1} \\[1ex]
W_j(\epsilon) &\coloneqq \left\{\begin{array}{ll} \left( \begin{smallmatrix} 0 & \\ & 0 \end{smallmatrix}\right) & \text{if $d_j>0$,} \\[1ex] \left( \begin{smallmatrix} \epsilon & & & \\ & 1/\epsilon & & \\ & & \epsilon & \\ & & & 1/\epsilon \end{smallmatrix} \right) & \text{if $d_j=0$.} \end{array}\right. \label{wepsilon 2}
\end{align}
Because of the way it is constructed, it is elementary to verify that $W(\epsilon)$ obeys
\bb
W(\epsilon) \geq i \left( \Omega - P\Omega P\right) .
\label{Wg large}
\ee
Moreover, note that
\bb
Y^\intercal W(\epsilon) Y = \epsilon Q\, ,
\label{Wg sandwich}
\ee
where $Q$ is an orthogonal projector acting on $\mathbb{R}^{2n}$ and defined by
\begin{align}
Q &\coloneqq O^\intercal \left( \bigoplus\nolimits_{j=1}^n Q'_j\right) O\, ,\\[1ex]
Q'_j &\coloneqq \left\{ \begin{array}{ll} \left( \begin{smallmatrix} 0 & \\ & 0 \end{smallmatrix}\right) & \text{if $d_j>0$,} \\[1ex] \left( \begin{smallmatrix} 1 & \\ & 1 \end{smallmatrix} \right) & \text{if $d_j=0$.} \end{array} \right. 
\end{align}
We are in position to define the quantum state $\sigma$, which will pertain to an $(n+k)$-mode system. Let us set
\bb
\chi_{\sigma(\epsilon)}(\eta) \coloneqq f\big(\widetilde{Y}\eta\big) e^{-\frac14 \eta^\intercal W(\epsilon) \eta} .
\label{chiepsilon}
\ee
We now claim that for all $\epsilon>0$: (a) $\sigma(\epsilon)$ is a legitimate quantum state; and (b) $\chi_{\sigma(\epsilon)}(Y \xi) = f(\xi) e^{-\frac{\epsilon}{4} \xi^\intercal Q \xi}$ holds for all $\xi\in\mathbb{R}^{2n}$. To prove (a), it suffices to show that $\chi_{\sigma(\epsilon)}$ meets the conditions (i)-(iii) of the quantum Bochner theorem (Lemma~\ref{quantum Bochner}). Condition (i) is clear, since $\chi_{\sigma(\epsilon)}(0) = f(0) =1$; (ii) is also straightforward, since the continuity at $0$ of $f$ implies that of $\chi_{\sigma(\epsilon)}$. The problem is to verify (iii), i.e.\ $\Omega$-positivity. For a finite collection of vectors $\eta_1,\ldots, \eta_N\in\mathbb{R}^{2n+2k}$, let us write
\begin{align*}
\chi_{\sigma(\epsilon)}(\eta_\mu - \eta_\nu) e^{\frac{i}{2}\, \eta_\mu^\intercal \Omega \eta_\nu} &= f\left( \widetilde{Y}\eta_\mu - \widetilde{Y}\eta_\nu \right) e^{\frac{i}{2}\, \eta_\mu^\intercal \Omega \eta_\nu - \frac14 (\eta_\mu - \eta_\nu)^\intercal W(\epsilon) (\eta_\mu - \eta_\nu)} \\
&= f\left( \widetilde{Y}\eta_\mu - \widetilde{Y}\eta_\nu \right) e^{\frac{i}{2} \left( \widetilde{Y} \eta_\mu\right)^\intercal J(X) \left( \widetilde{Y} \eta_\nu\right)} \\
&\qquad \cdot e^{- \frac14 (\eta_\mu - \eta_\nu)^\intercal W(\epsilon) (\eta_\mu - \eta_\nu) - \frac{i}{2} \left( \widetilde{Y} \eta_\mu\right)^\intercal J(X) \left( \widetilde{Y} \eta_\nu\right) + \frac{i}{2}\, \eta_\mu^\intercal \Omega \eta_\nu}
\end{align*}
Since $f$ is $J(X)$-positive by Lemma~\ref{lemma cp linear channels}, we deduce that
\bbb
\left( f\left( \widetilde{Y}\eta_\mu - \widetilde{Y}\eta_\nu \right) e^{\frac{i}{2} \left( \widetilde{Y} \eta_\mu\right)^\intercal \left( \Omega - X^\intercal \Omega X\right) \left( \widetilde{Y} \eta_\nu\right)} \right)_{\mu,\nu} \geq 0\, .
\eee
To see why, it suffices to write out condition in Eq.~\eqref{A positivity} with $A=J(X)$ and $\xi_\mu = \widetilde{Y}\eta_\mu$. Using the fact that the Hadamard product of positive matrices is positive~\cite[Theorem~7.5.3]{HJ1}, we conclude that in order to show that $\chi_{\sigma(\epsilon)}$ is $\Omega$-positive it suffices to check that the matrix
\bbb
M \coloneqq \left( e^{- \frac14 (\eta_\mu - \eta_\nu)^\intercal W(\epsilon) (\eta_\mu - \eta_\nu) - \frac{i}{2}\, \left( \widetilde{Y} \eta_\mu\right)^\intercal J(X) \left( \widetilde{Y} \eta_\nu\right) + \frac{i}{2}\, \eta_\mu^\intercal \Omega \eta_\nu } \right)_{\mu,\nu}
\eee
is positive. Observe that
\begin{align*}
\widetilde{Y}^\intercal J(X) \widetilde{Y} = \widetilde{Y}^\intercal Y^\intercal \Omega Y \widetilde{Y} = \big( Y\widetilde{Y}\big)^\intercal \Omega \big(Y\widetilde{Y}\big) = P\Omega P\, .
\end{align*}
This allows us to rewrite
\begin{align*}
M_{\mu\nu} &= e^{- \frac14 (\eta_\mu - \eta_\nu)^\intercal W(\epsilon) (\eta_\mu - \eta_\nu) + \frac{i}{2}\, \eta_\mu^\intercal (\Omega - P\Omega P) \eta_\nu } .
\end{align*}
Since $W(\epsilon)\geq i(\Omega - P\Omega P)$ by Eq.~\eqref{Wg large}, it is not difficult to check (Lemma~\ref{A positivity lemma}) that $M\geq 0$, indeed. This concludes the proof of claim (a). The argument for claim (b) is much easier:
\begin{align*}
\chi_{\sigma(\epsilon)}(Y \xi) = f\big(\widetilde{Y}Y\xi\big) e^{-\frac{1}{4} \xi^\intercal Y^\intercal W(\epsilon) Y \xi} = f(\xi) e^{-\frac{\epsilon}{4} \xi^\intercal Q \xi} ,
\end{align*}
where in the second step we used Eq.~\eqref{Penrose 1} and Eq.~\eqref{Wg sandwich}.

Until now we have shown that if $\Phi_{X,f}$ is a linear bosonic channel then the channels acting as 
\bb
\chi_\rho(\xi) \longmapsto \chi_\rho( X \xi) f(\xi) e^{-\frac{\epsilon}{4} \xi^\intercal Q \xi}
\label{epsilon channel chi}
\ee
are Gaussian dilatable on $n+k$ modes for all $\epsilon>0$. We now complete the argument by proving that these approximate $\Phi_{X,f}$ in the strong operator topology. This is an immediate consequence of Lemma~\ref{SWOTTED convergence channels}(iii), since as $\epsilon\to 0^+$ the r.h.s.\ of Eq.~\eqref{epsilon channel chi} converges pointwise to $\chi_\rho( X \xi) f(\xi)$, which is manifestly continuous at $0$. For convenience we summarize in Algorithm~\ref{algo2} the steps to obtain the approximate dilation of a linear bosonic where the unitary in the dilation is fixed. 
 \end{proof}

\begin{algorithm}
\label{algo2}
\KwIn{$X,f$ of a linear bosonic channel $\Phi_{X,f}$.}
~Compute the canonical form for $J(X)$ from  Eq.~\eqref{jx-form}; set $k = \frac{1}{2} \text{dim\,ker} J(X)$. \\
~Obtain $\widetilde{Y}$ using Eq.~\eqref{ytilde}.\\
~For any choice of $\epsilon>0$ define $W(\epsilon)$ as in Eq.~\eqref{wepsilon 1}-\eqref{wepsilon 2}.\\
~Obtain the characteristic function of the ancillary $(n+k)$-mode state $\sigma(\epsilon)$ using Eq.~\eqref{chiepsilon}. \\
~Complete $\left(\begin{smallmatrix} X&Y\\\cdot & \cdot \end{smallmatrix}\right)$ to a symplectic matrix $S_{AE}$.\\
~Obtain the Gaussian unitary $U_{AE}$ for the dilation from $S_{AE}$. \\
~Limit $\epsilon \to 0^+$ provides the required approximation to $\Phi_{X,f}$ via Gaussian dilatable channels.
\caption{Obtaining an approximate Gaussian dilation for any linear bosonic channel using a fixed unitary in the dilation.}
\KwOut{Approximate Gaussian dilation of $\Phi_{X,f}$ with fixed unitary.}
\end{algorithm}

\section{Discussion and conclusions} \label{sec conclusions}

In this paper we investigated the set of linear bosonic channels by relating it to the physically meaningful set of Gaussian dilatable channels.
Our main result (Theorem~\ref{approx n modes thm}) states that the former set coincides with the strong operator closure of the latter. Operationally, this means that the action of every linear bosonic channel on any fixed state can be very well approximated by that of a sequence of suitable Gaussian dilatable channels (that do not depend on the state). We showed that taking the closure is in general necessary, as there are examples of linear bosonic channels for which no exact Gaussian dilation can be constructed, even if infinite-energy ancillary states are available (Corollary~\ref{Es not G dilatable cor}). Our results solve the open question raised in~\cite{pang} (see also~\cite[Remark~5]{Shirokov2013}) on the existence of Gaussian dilations for linear bosonic channels. Note that as it is formulated there,~\cite[Conjecture~1]{pang} is false, as Corollary~\ref{Es not G dilatable cor} shows that there are examples of linear bosonic channels that are not exactly Gaussian dilatable. However, the conjecture is `almost' true (i.e.\ it is true up to approximations).

Our proof strategy yields an explicit recipe to construct approximate Gaussian dilations of any given linear bosonic channel. We proved that if the associated Gaussian unitary is allowed to vary, an ancillary system with the same number of modes as the input system suffices (Corollary~\ref{approx n modes cor}). We can also require the unitary not to change when the approximation is sharpened, which yields a more experimentally friendly procedure. In this case we are still able to construct approximate Gaussian dilations, albeit with a larger number of ancillary modes (Proposition~\ref{approx n+k modes prop}). When applied to Gaussian channels, our findings complement those of~\cite{min-dilations}, in which only exact Gaussian dilations are considered (see Remark~\ref{approx dilations G channels rem}).


Some of the technical tools we developed are of broad interest in the field of quantum information with continuous variables. We highlight especially the SWOTTED convergence Lemma~\ref{SWOTTED convergence channels}, which gives easily verifiable necessary and sufficient conditions for a sequence of quantum channels to converge in the strong operator topology or -- equivalently -- uniformly on energy-bounded states. This is based on an analogous result for states (Lemma~\ref{SWOT convergence lemma}) that was known in the literature before. Our techniques also allowed us to give an alternative proof of the quantum Bochner theorem (Appendix~\ref{app quantum Bochner}). Our argument is entirely elementary and independent of the proof of the analogous result for classical probability theory, which requires -- one may argue -- more sophisticated measure theory tools. Our main theorem can also be used to prove directly that conditions (i)-(iii) of Lemma~\ref{lemma cp linear channels} suffice to ensure that the corresponding linear bosonic map is in fact a quantum channel. That they are also necessary is even easier to prove, as shown in Appendix~\ref{app cp linear channels}, thus we also obtain a proof of Lemma~\ref{lemma cp linear channels} that is independent of that presented in~\cite{Demoen77}. 

We now discuss some open problems. The careful reader may have noticed that in defining Gaussian dilatable channels (see Eq.~\eqref{G dilatable Stinespring}) we did not require the ancillary state to have finite energy. It would be interesting to investigate what happens when one has an energy constraint on the ancillary state. 
Another related aspect still to be explored is the most efficient way to implement linear bosonic channels via approximate Gaussian dilations, especially from an experimental point of view. In this context, we could for instance ask whether the constructions in the proofs of Theorem~\ref{approx n modes thm} and Proposition~\ref{approx n+k modes prop} are in some sense optimal, either from the point of view of the energy of the ancillary state, or from that of the number of modes required. Finally, our results could help to solve an interesting open question about the extremality of linear bosonic channels~\cite[Remark~2]{Holevo2013}.

In conclusion, this paper provides a number of novel insights into the structure of linear bosonic channels, further bolstering their status as an important subject of study in continuous variable quantum information.\\

{\it Acknowledgements.} We thank an anonymous referee for helpful comments, and in particular for suggesting a simplified proof of Theorem~\ref{approx n modes thm}. LL acknowledges financial support from the European Research Council (ERC) under the Starting Grant GQCOP (Grant No.~637352). AW was supported by the Spanish MINECO (project FIS2016-86681-P) with the support of FEDER funds, and the Generalitat de Catalunya (project 2017-SGR-1127).

\appendix
\addcontentsline{toc}{section}{Appendices}

\section{A direct proof of quantum Bochner theorem} \label{app quantum Bochner}

In this appendix we provide a self-contained proof of the quantum Bochner theorem (Lemma~\ref{quantum Bochner}) that is moreover entirely elementary, in that it requires only widely known results from standard analysis and no notion of measure theory. For comparison, the arguments in~\cite[Proposition~3.4(7)]{Werner84} and~\cite[\S 6.2.3]{deGosson2017} make use of the classical Bochner theorem (which in turn depends on Helly's selection principle or the Banach--Alaoglu theorem, see~\cite[Theorem~5.5.3]{BHATIA} for a standard proof), while that in~\cite[Theorem~5.4.1]{HOLEVO} employs the Stone--von Neumann uniqueness theorem. Remarkably, this is one of the few cases in which the quantum version of the statement is actually easier to prove that its classical counterpart.

\subsection{On some properties of $\Omega$-positive functions} \label{subsec properties Omega positive}

Before we delve into the proof of Lemma~\ref{quantum Bochner}, it is useful to familiarise with the important notion of $\Omega$-positive function. We start by studying the regularity properties of these functions, of which we made ample use in the main text. We need a preliminary lemma.

\begin{lemma} \label{lemma 3x3}
Let
\bb
A = \begin{pmatrix} a & x & y \\ x^{*} & a & z \\ y^{*} & z^{*} & a \end{pmatrix} \geq 0
\ee
be positive semidefinite. Then
\bb
|y-z|^{2}\leq \Re \left[(a-z)(a+z^{*}-2xy^{*}) \right] \leq 4 a |a-z|\, .
\ee
\end{lemma}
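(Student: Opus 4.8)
The plan is to prove the purely linear-algebraic Lemma~\ref{lemma 3x3} directly from the positive semidefiniteness of the $3\times 3$ matrix $A$. The statement consists of two inequalities, and I would attack them separately. For the right-hand inequality $\Re\left[(a-z)(a+z^{*}-2xy^{*})\right]\leq 4a|a-z|$, the strategy is to bound crudely: since $A\geq 0$, the diagonal entries are nonnegative so $a\geq 0$, and positivity of the $2\times 2$ principal minors gives $|x|,|y|,|z|\leq a$. Hence $|a+z^{*}-2xy^{*}|\leq a+|z|+2|x||y|\leq a+a+2a^{2}$... but this is dimensionally off, so more carefully I expect the intended bound uses $|a-z|\leq 2a$ (from $|z|\leq a$) together with $|a+z^{*}-2xy^{*}|\leq 4a$ hidden in a cleverer grouping. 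The cleanest route is to write $\Re[(a-z)(a+z^{*}-2xy^{*})]\leq |a-z|\,|a+z^{*}-2xy^{*}|$ and then show $|a+z^{*}-2xy^{*}|\leq 4a$ using $|z|\leq a$ and $|xy^{*}|\leq a$ (the latter from $|x|,|y|\leq a$ needing $a\leq 1$, or more likely from a direct minor estimate). I would double-check the exact normalization here, since the constant $4$ is what makes this delicate.

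For the left-hand inequality $|y-z|^{2}\leq \Re\left[(a-z)(a+z^{*}-2xy^{*})\right]$, which I expect to be the heart of the lemma, the natural tool is to test the positive semidefinite form $A$ against a cleverly chosen vector $v\in\mathbb{C}^{3}$ so that $v^{\dagger}Av\geq 0$ unpacks into exactly the claimed inequality. The quantity $y-z$ appearing on the left suggests a test vector that subtracts the second and third components in some way, something like $v=(\alpha,1,-1)^{\intercal}$ or a variant with a free parameter $\alpha$ to be optimized. I would compute $v^{\dagger}Av$ symbolically, collect the terms involving $a$, $x$, $y$, $z$, and then choose $\alpha$ (possibly $\alpha$ depending on $a,z$) to force the cross terms to reproduce $\Re[(a-z)(a+z^{*}-2xy^{*})]$ while the leftover nonnegative contribution is bounded below by $|y-z|^{2}$. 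Expanding $(a-z)(a+z^{*}-2xy^{*})$ gives $a^{2}+az^{*}-2axy^{*}-za-|z|^{2}+2zxy^{*}$, whose real part I would match against the expansion of $v^{\dagger}Av$.

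The main obstacle I anticipate is identifying the correct test vector and the correct value of the free parameter $\alpha$ for the left inequality: the appearance of the factor $(a-z)$ rather than a simple modulus squared signals that the optimization is not symmetric, and getting the algebra to collapse exactly to $|y-z|^{2}$ on one side and $\Re[(a-z)(a+z^{*}-2xy^{*})]$ on the other will require care with complex conjugates. A reasonable fallback, if the single-vector approach proves stubborn, is to use the characterization of positive semidefiniteness via nonnegativity of all principal minors together with Schur-complement arguments: the $2\times 2$ Schur complement of the top-left entry $a$ (assuming $a>0$, handling $a=0$ separately since then $A=0$) yields a $2\times 2$ positive semidefinite matrix whose entries are $a-|x|^{2}/a$-type expressions, and the off-diagonal controls $z-xy^{*}/a$; manipulating its determinant condition should produce the quadratic inequality relating $|y-z|^{2}$ to the stated real part. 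I would present whichever of these two routes gives the cleaner constant-tracking, and I would verify both inequalities in the boundary case $a=0$ (forcing $x=y=z=0$, so both sides vanish) to confirm consistency.
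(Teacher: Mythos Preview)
The paper's proof is a single sentence: ``Follows by rearranging the inequality $\det A\geq 0$.'' Your Schur-complement fallback is essentially the same idea (the determinant of the Schur complement is $\det A/a$), so that route would succeed; the test-vector optimisation you propose first is more circuitous than necessary.

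Your instinct that something is ``dimensionally off'' in the right-hand inequality is not confusion on your part but a genuine defect in the printed statement. Under $A\mapsto\lambda A$ the term $2xy^{*}$ scales as $\lambda^{2}$ while $a,z$ scale as $\lambda$, so the middle expression is inhomogeneous; indeed $a=40$, $x=-20$, $y=20$, $z=10$ gives a positive definite $A$ with $\Re[(a-z)(a+z^{*}-2xy^{*})]=25500>4800=4a|a-z|$. The left inequality also fails as written: $a=1$, $x=y=0$, $z=1$ yields a positive semidefinite $A$ but $|y-z|^{2}=1>0$. What the subsequent application actually uses, and what really does follow from rearranging $\det A\geq 0$, is (for $a>0$)
\[
|x-y|^{2}\;\leq\;\Re\!\left[(a-z)\Bigl(a+z^{*}-\tfrac{2}{a}\,xy^{*}\Bigr)\right]\;\leq\;4a\,|a-z|\,.
\]
Here the first inequality is literally $\det A/a\geq 0$ after the substitution $|x|^{2}+|y|^{2}=|x-y|^{2}+2\Re(xy^{*})$, and the second follows exactly by your ``crude bound'': $|a+z^{*}-2xy^{*}/a|\leq a+|z|+2|x||y|/a\leq 4a$ using $|x|,|y|,|z|\leq a$ from the $2\times 2$ minors. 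So your plan for the right inequality was correct all along; you were fighting a typo, not a real obstruction.
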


\begin{proof}
Follows by rearranging the inequality $\det A\geq 0$.
\end{proof}

\begin{lemma} \label{continuity Omega positive lemma}
Let $f:\mathbb{R}^{2n}\rightarrow \mathbb{C}$ be an $\Omega$-positive function. Then $|f(\xi)|\leq f(0)$ for all $\xi$, and moreover $f$ is continuous everywhere if and only if it is continuous at $0$.
\end{lemma}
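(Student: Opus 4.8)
The plan is to prove both claims using the defining positivity condition in Eq.~\eqref{A positivity} with $\Omega$ in place of $A$, specialising to small collections of vectors. For the bound $|f(\xi)|\leq f(0)$, I would take $N=2$ and the vectors $\xi_1=0$, $\xi_2=\xi$, which makes the $2\times 2$ matrix
\bbb
\begin{pmatrix} f(0) & f(-\xi)\,e^{\frac{i}{2}\cdot 0} \\ f(\xi)\,e^{\frac{i}{2}\cdot 0} & f(0) \end{pmatrix} \geq 0\, ,
\eee
where one checks the phase factors $\xi_\mu^\intercal \Omega \xi_\nu$ vanish because one of the arguments is zero. Using $f(-\xi)=f(\xi)^*$ from Remark~\ref{inversion A positive rem}, positivity of this Hermitian matrix forces $f(0)^2 \geq |f(\xi)|^2$, i.e.\ $|f(\xi)|\leq f(0)$ (recall $f(0)\geq 0$ by the same remark). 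This is the easy half.

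For the continuity equivalence, one direction is trivial (everywhere-continuous implies continuous at $0$), so the content is the converse. The idea is to control the increment $|f(\xi+\eta)-f(\xi)|$ for small $\eta$ uniformly in $\xi$, using only continuity at $0$. I would invoke Lemma~\ref{lemma 3x3}: choose $N=3$ with a triple of vectors whose pairwise differences are $\xi$, $\xi+\eta$, and $\eta$ (concretely something like $\xi_1=0$, $\xi_2=\xi$, $\xi_3=\xi+\eta$, or a symmetric variant), so that the $3\times 3$ $\Omega$-positive matrix has diagonal entries $a=f(0)$ and off-diagonal entries of the form $f(\cdot)\,e^{\frac{i}{2}(\cdot)}$ matching the pattern in Lemma~\ref{lemma 3x3}. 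The lemma's conclusion $|y-z|^2 \leq 4a|a-z|$ then bounds the relevant difference of two matrix entries by $4f(0)\,|f(0)-(\text{entry})|$, where the entry on the right involves $f$ evaluated at the small vector $\eta$ (times a phase).

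The key step is matching the entries so that the right-hand side of Lemma~\ref{lemma 3x3} becomes $4f(0)\bigl|f(0)-f(\eta)e^{i\theta(\eta)}\bigr|$ with $\theta(\eta)\to 0$ as $\eta\to 0$; then continuity of $f$ at $0$ (and of the phase) makes this bound tend to $0$, giving an $\eta$-modulus of continuity \emph{independent of $\xi$}, hence uniform continuity and in particular continuity everywhere. The main obstacle I anticipate is bookkeeping the symplectic phase factors $e^{\frac{i}{2}\xi_\mu^\intercal \Omega \xi_\nu}$ so that, after extracting phases, the two entries being compared in $|y-z|$ differ exactly by the increment $f(\xi+\eta)-f(\xi)$ up to a unimodular factor while the controlling entry $z$ reduces to $f$ at the small argument $\eta$. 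This may require a judicious choice of the three base points (and possibly multiplying $f$ by a harmless global phase), but once the entries line up the estimate is immediate from Lemma~\ref{lemma 3x3}.
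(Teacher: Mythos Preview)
Your proposal is correct and follows essentially the same route as the paper: the $N=2$ instance of Eq.~\eqref{A positivity} for the bound, and the $N=3$ instance together with Lemma~\ref{lemma 3x3} for continuity. The paper chooses $\xi_1=0$, $\xi_2=-\zeta_1$, $\xi_3=-\zeta_2$ and arrives at $|f(\zeta_1)-f(\zeta_2)|^2 \leq 4f(0)\bigl|f(0)-f(\zeta_2-\zeta_1)e^{\frac{i}{2}\zeta_1^\intercal\Omega\zeta_2}\bigr|$, which is exactly the inequality you are aiming for up to relabelling.

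One small correction: the symplectic phase you obtain is $\theta=\tfrac12\,\xi^\intercal\Omega\eta$, which depends on the base point $\xi$, not only on $\eta$. Hence the modulus of continuity you extract is \emph{not} independent of $\xi$, and the argument does not yield uniform continuity. This does not affect the lemma, which only asserts continuity at every point; for fixed $\xi$ one still has $f(\eta)e^{i\theta}\to f(0)$ as $\eta\to 0$, and that is all you need.
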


\begin{proof}
For $A=\Omega$, $N=2$, $\xi_1=\xi$, and $\xi_2=0$, the condition in Eq.~\eqref{A positivity} reads
\bbb
\begin{pmatrix} f(0) & f(\xi) \\ f(\xi)^* & f(0) \end{pmatrix} \geq 0\, ,
\eee
from which we deduce that $f(0) \geq |f(\xi)|$. For $N=3$ and $\xi_{1}=0$, $\xi_{2}=-\zeta_{1}$ and $\xi_{3}=-\zeta_{2}$ we obtain instead
\bbb
\begin{pmatrix} f(0) & f(\zeta_1) & f(\zeta_2) \\ f(\zeta_1)^* & f(0) & f(\zeta_2-\zeta_1) e^{\frac{i}{2}\zeta_1^\intercal \Omega \zeta_2} \\ f(\zeta_2)^* & f(\zeta_2-\zeta_1)^* e^{-\frac{i}{2}\zeta_1^\intercal \Omega \zeta_2} & f(0) \end{pmatrix} \geq 0\, .
\eee
Applying Lemma~\ref{lemma 3x3} to the above matrix yields
\bbb
|f(\zeta_{1}) - f(\zeta_{2})|^{2} \leq 4 f(0) \left|f(0)- f(\zeta_{2}-\zeta_{1}) e^{\frac{i}{2} \zeta_{1}^{T}\Omega \zeta_{2}} \right|\, .
\eee
This shows that when $\zeta_k \rightarrow \zeta$ the sequence $f(\zeta_k)$ approaches $f(\zeta)$ at a rate that does not differ much from that of the convergence $f(\zeta-\zeta_k)\rightarrow 0$. Hence, $f$ is continuous everywhere whenever it is continuous at $0$.
\end{proof}

Before we move on to the proof of the quantum Bochner theorem, it is useful to make a simple sanity check. The characteristic function of a Gaussian state with zero displacement is given by $\chi_{\rho}(\xi)=e^{-\frac14 \xi^\intercal \Omega^\intercal V \Omega \xi}$, where $V\geq i\Omega$ is the covariance matrix. According to Lemma~\ref{quantum Bochner}, this must be an $\Omega$-positive definite function. Is there a way to verify this directly? The answer is affirmative, as we now set out to see.

\begin{lemma} \label{A positivity lemma}
Let $V$ and $A$ be $2n\times 2n$ real matrices. Assume $V$ is symmetric and $A$ skew-symmetric. The Gaussian function $f:\mathbb{R}^{2n}\rightarrow \mathbb{C}$ defined by $f(\xi)\coloneqq e^{-\frac14 \xi^\intercal V\xi}$ is $A$-positive in the sense of Definition~\ref{A positivity def} if and only if $V\geq i A$.
\end{lemma}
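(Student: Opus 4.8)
The plan is to reduce the $A$-positivity of the Gaussian $f$ to the positive semidefiniteness of a single Hermitian matrix, and then to settle both implications by combining the Schur product theorem with a short perturbative argument. First I would write out, for an arbitrary finite collection $\xi_1,\dots,\xi_N\in\mathbb{R}^{2n}$, the matrix $M$ whose entries are $M_{\mu\nu}=f(\xi_\mu-\xi_\nu)\,e^{\frac{i}{2}\xi_\mu^\intercal A\xi_\nu}$ appearing in Definition~\ref{A positivity def}. Expanding $(\xi_\mu-\xi_\nu)^\intercal V(\xi_\mu-\xi_\nu)$ and using the symmetry of $V$ to identify the cross term, one factors out the real positive quantities $d_\mu:=e^{-\frac14\xi_\mu^\intercal V\xi_\mu}$ to obtain $M_{\mu\nu}=d_\mu\,G_{\mu\nu}\,d_\nu$ with $G_{\mu\nu}=e^{\frac12\xi_\mu^\intercal(V+iA)\xi_\nu}$. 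As the $d_\mu$ are real and strictly positive, $M\ge 0$ if and only if $G\ge 0$, so that $f$ is $A$-positive precisely when $G\ge 0$ for every finite collection.

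Set $B:=V+iA$; because $V$ is symmetric and $A$ skew-symmetric, $B$ is Hermitian, and a one-line computation shows that $V\ge iA$ is equivalent to $B\ge 0$ (note that $V-iA=\overline{B}$, which shares its spectrum with $B$, so that $V-iA\ge 0$ iff $B\ge 0$). The next step is to recognise $G$ as the exponential of the kernel $K(\xi,\eta):=\xi^\intercal B\eta$. This kernel is positive semidefinite — meaning $[K(\xi_\mu,\xi_\nu)]_{\mu\nu}\ge 0$ for all collections — exactly when $B\ge 0$: indeed, $\sum_{\mu\nu}\overline{c_\mu}c_\nu\,\xi_\mu^\intercal B\xi_\nu=w^\dagger B w$ with $w:=\sum_\nu c_\nu\xi_\nu\in\mathbb{C}^{2n}$, and as the $\xi_\nu$ range over $\mathbb{R}^{2n}$ and the $c_\nu$ over $\mathbb{C}$, the vector $w$ ranges over all of $\mathbb{C}^{2n}$.

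For the forward implication ($V\ge iA\Rightarrow A$-positivity) I would invoke the Schur product theorem~\cite[Theorem~7.5.3]{HJ1}: if $B\ge 0$ then $[K(\xi_\mu,\xi_\nu)]\ge 0$, hence every Hadamard power $[K(\xi_\mu,\xi_\nu)^{\circ k}]$ is positive, and summing the exponential series shows $G=[e^{\frac12 K(\xi_\mu,\xi_\nu)}]\ge 0$. The converse is where the real work lies, and I expect it to be the main obstacle: from $G\ge 0$ for all collections one must extract $B\ge 0$. Here I would linearise, applying the hypothesis to the rescaled collection $t\eta_1,\dots,t\eta_N$ and testing against coefficient vectors $c$ with $\sum_\mu c_\mu=0$. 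The constant term $\big|\sum_\mu c_\mu\big|^2$ then drops out, and dividing $c^\dagger G(t)c\ge 0$ by $t^2$ and letting $t\to 0^+$ yields $w^\dagger B w\ge 0$ for $w=\sum_\nu c_\nu\eta_\nu$. The only subtlety is that the constraint $\sum_\mu c_\mu=0$ must not shrink the set of attainable $w$; this is handled by appending the zero vector $\eta_0=0$ to any spanning family, which lets one realise an arbitrary $w\in\mathbb{C}^{2n}$ while keeping the coefficients summing to zero. This establishes $B\ge 0$, i.e.\ $V\ge iA$, completing the equivalence.
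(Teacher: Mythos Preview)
Your proposal is correct and follows essentially the same route as the paper's proof: the factorisation $M_{\mu\nu}=d_\mu G_{\mu\nu}d_\nu$ with $G_{\mu\nu}=\exp\bigl[\tfrac12\xi_\mu^\intercal(V+iA)\xi_\nu\bigr]$, the Schur/Hadamard-exponential argument for the implication $V\ge iA\Rightarrow A$-positivity, and the scaling-and-linearisation step against vectors with $\sum_\mu c_\mu=0$ for the converse are exactly what the paper does. The only cosmetic differences are that the paper runs the perturbative step directly on $M$ rather than on $G$, and realises an arbitrary $w\in\mathbb{C}^{2n}$ via the explicit choice $c=\tfrac12(1,-1,i,-i)$, $\xi_1=-\xi_2=\Re w$, $\xi_3=-\xi_4=\Im w$, instead of appending a zero vector.
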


\begin{proof}
We first show that if $f$ is $A$-positive then necessarily $V\geq iA$. We evaluate the $A$-positivity condition of Eq.~\eqref{A positivity} on an arbitrary complex vector $x\in \mathbb{C}^N$ such that 
\bb
\sum_{\mu=1}^N x_\mu = 0\, ,
\label{x sums to 0}
\ee
obtaining
\bbb
\sum_{\mu,\nu} x_\mu^*x_\nu e^{-\frac{t^2}{4} (\xi_\mu-\xi_\nu)^\intercal V (\xi_\mu -\xi_\nu) + \frac{it^2}{2}\xi_\mu^\intercal A\xi_\nu} \geq 0
\eee
for all $t\in\mathbb{R}$ and finite collections $\xi_1,\ldots, \xi_N\in \mathbb{R}^{2n}$. Expanding to second order in $t$ around $0$ and using Eq.~\eqref{x sums to 0} yields
\bbb
\sum_{\mu,\nu} x_\mu^*x_\nu \left( -\frac{1}{2} (\xi_\mu-\xi_\nu)^\intercal V (\xi_\mu -\xi_\nu) + i \xi_\mu^\intercal A\xi_\nu\right) \geq 0\, .
\eee
Making use of Eq.~\eqref{x sums to 0} once again we can further simplify this to
\bbb
\sum_{\mu,\nu} x_\mu^*x_\nu \left( \xi_\mu^\intercal V \xi_\nu + i \xi_\mu^\intercal A\xi_\nu\right) = \left( \sum_\mu x_\mu \xi_\mu \right)^\dag (V+iA) \left( \sum_\nu x_\nu \xi_\nu\right) \geq 0\, .
\eee
Since every vector $z\in \mathbb{C}^{2n}$ can be written as $z=\sum_{\mu=1}^4 x_\mu \xi_\mu$ for $x = \frac12 (1,-1,i,-i)$ (which satisfies Eq.~\eqref{x sums to 0}), $\xi_1 = -\xi_2= \Re z$ and $\xi_3=-\xi_4=\Im z$, it must hold that $V+iA\geq 0$.

Conversely, let us prove that if $V+ iA\geq 0$ then the condition in Eq.~\eqref{A positivity} is always met. Start by rewriting
\bbb
f(\xi_\mu - \xi_\nu) e^{\frac{i}{2}\xi_\mu^\intercal A\xi_\nu} = e^{-\frac14 \xi_{\mu}^{\intercal} V \xi_{\mu}} e^{\frac12 \xi_{\mu}^{\intercal} (V + iA) \xi_{\nu}} e^{-\frac14 \xi_{\nu}^{\intercal} V \xi_{\nu}} \, .
\eee
Up to congruences by diagonal matrices, it is enough to show that $R_{\mu\nu} \coloneqq \exp\left[ \frac12 \xi_{\mu}^{\intercal} (V + iA) \xi_{\nu} \right]$ identifies a positive semidefinite matrix. Observe that $R$ is the Hadamard (i.e.\ entrywise) exponential of the Gram matrix $H_{\mu\nu}\coloneqq \frac12 \xi_{\mu}^{\intercal} (V+iA) \xi_{\nu}$ of a positive semidefinite form $V+iA$, hence it is itself positive semidefinite. Since Hadamard exponentials preserve positive semidefiniteness~\cite[Theorem~6.3.6]{HJ2}, we conclude that $R\geq 0$ as claimed.
\end{proof}

\subsection{Proof of quantum Bochner theorem} \label{subsec proof quantum Bochner}

We are ready to discuss our proof of Lemma~\ref{quantum Bochner}. Compared to other proofs that have appeared in the previous literature on the subject, ours relies heavily on the following elementary lemma, whose importance seems to have been somewhat overlooked.

\begin{lemma}[Diagonal integration lemma] \label{lemma single-double}
Let $f:\mathds{R}^{p}\rightarrow \mathds{C}$ be a bounded integrable function, i.e.\ let it be measurable and such that $|f(\xi)|\leq M$ and $\int d^{p}\xi\, |f(\xi)|<\infty$. Then
\bb
\int d^{p}\xi\, f(\xi) = \lim_{L\rightarrow \infty} \frac{1}{L^{p}} \int_{-L/2}^{L/2} d^{p}\xi_{1} d^{p}\xi_{2}\, f(\xi_{1}-\xi_{2})\, ,
\label{single-double}
\ee
where it is understood that the integration region on the right-hand side is $[-L/2, L/2]$ for all components of the vectors $\xi_{1}, \xi_{2}$.
\end{lemma}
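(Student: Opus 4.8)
The plan is to reduce the double integral on the right-hand side of Eq.~\eqref{single-double} to a single integral against a ``tent'' weight, and then let $L\rightarrow\infty$ by dominated convergence. First I would change variables in the inner integration. Writing $u\coloneqq \xi_1-\xi_2$ and integrating first over $\xi_1 = u+\xi_2$, the double integral becomes
\bbb
\int_{[-L/2,L/2]^p} d^p\xi_2 \int_{\mathbb{R}^p} d^p u\, f(u)\, \mathds{1}\!\left[u+\xi_2\in[-L/2,L/2]^p\right] .
\eee
Because $f$ is bounded and the outer region has finite measure, Tonelli's theorem applies to $|f|$ and licenses swapping the two integrations, yielding $\int_{\mathbb{R}^p} d^p u\, f(u)\, c_L(u)$, where $c_L(u)$ denotes the Lebesgue measure of the set of $\xi_2\in[-L/2,L/2]^p$ with $u+\xi_2\in[-L/2,L/2]^p$.

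Next I would compute $c_L(u)$ explicitly. Since the cube $[-L/2,L/2]^p$ is a Cartesian product and the constraint decouples across coordinates, the overlap volume factorises as $c_L(u)=\prod_{i=1}^p (L-|u_i|)_+$, where $(x)_+\coloneqq\max\{x,0\}$; indeed, in each coordinate the admissible interval for $(\xi_2)_i$ is the intersection of $[-L/2,L/2]$ with its translate by $-u_i$, which has length $(L-|u_i|)_+$. Dividing by $L^p$ then gives the exact identity
\bbb
\frac{1}{L^p}\int_{[-L/2,L/2]^p} d^p\xi_1\, d^p\xi_2\, f(\xi_1-\xi_2) = \int_{\mathbb{R}^p} d^p u\, f(u)\, w_L(u),\qquad w_L(u)\coloneqq \prod_{i=1}^p\left(1-\frac{|u_i|}{L}\right)_{\!+} .
\eee

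Finally I would pass to the limit $L\rightarrow\infty$. For every fixed $u$ each factor $\left(1-|u_i|/L\right)_+$ tends to $1$, so $w_L(u)\rightarrow 1$ pointwise, while $0\le w_L(u)\le 1$ everywhere. Hence $|f(u)\,w_L(u)|\le |f(u)|$, which is integrable by hypothesis, and the dominated convergence theorem yields $\lim_{L\rightarrow\infty}\int_{\mathbb{R}^p} f(u)\,w_L(u)\, d^p u = \int_{\mathbb{R}^p} f(u)\, d^p u$, which is precisely Eq.~\eqref{single-double}. The only genuinely delicate points are the two appeals to the Lebesgue machinery: the Tonelli interchange and the dominated convergence step, both dominated by $|f|$. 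These are exactly what the boundedness and integrability hypotheses on $f$ are there to supply; the rest is the elementary geometry of overlapping translated cubes.
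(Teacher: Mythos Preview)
Your proof is correct and follows the same underlying idea as the paper's: after the change of variables $u=\xi_1-\xi_2$, the double integral collapses to a single integral of $f$ against the ``tent'' (or Fej\'er-type) weight $w_L(u)=\prod_i(1-|u_i|/L)_+$, and one then lets $L\to\infty$.

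The differences are in the execution. The paper treats only $p=1$ explicitly (claiming the general case follows by iteration) and, in place of dominated convergence, carries out an elementary tail-splitting estimate: it bounds $\big|\int f - \int_{-L}^{L}(1\pm|\zeta|/L)f\big|$ by $\int_{|\zeta|\ge L^{1/3}}|f|+M/L^{1/3}$. Your argument handles arbitrary $p$ in one stroke via the product structure of the cube, and replaces the hand-made estimate by a clean invocation of the dominated convergence theorem, dominated by $|f|\in L^1$. Your route is shorter and arguably more transparent; the paper's route has the minor virtue of being entirely ``elementary'' (no Lebesgue machinery beyond Fubini/Tonelli), in keeping with the stated goal of that appendix.
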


\begin{proof}
We limit ourselves to showing the case $p=1$, since the others are analogous and follow by iteration of the same method. Let us write
\begin{align*}
\lim_{L\rightarrow\infty} \frac{1}{L} \int_{-L/2}^{L/2} d\xi_{1} \int_{-L/2}^{L/2} d\xi_{2}\, f(\xi_{1}-\xi_{2}) &\texteq{1} \lim_{L\rightarrow\infty} \frac{1}{L} \int_{-L}^{L} d\zeta \int_{-\frac{L+|\zeta|}{2}}^{\frac{L+|\zeta|}{2}} d\eta\, f(\zeta) \\
&= \lim_{L\rightarrow\infty} \frac{1}{L} \int_{-L}^{L} d\zeta\, (L+|\zeta|) f(\zeta) \\
&= \lim_{L\rightarrow\infty} \int_{-L}^{L} d\zeta \left(1+\frac{|\zeta|}{L}\right) f(\zeta) \\
&\texteq{2} \int_{-\infty}^{+\infty} d\zeta\, f(\zeta)\, .
\end{align*}
The justification of the above steps is as follows: 1: we defined $\zeta\coloneqq \xi_{1}-\xi_{2}$ and $\eta\coloneqq\frac{\xi_{1}+\xi_{2}}{2}$; 2: we observed that
\begin{align*}
\left| \int_{-\infty}^{+\infty}\!\!\! d\zeta\, f(\zeta) -\! \int_{-L}^{L}\!\! d\zeta \left(1+\frac{|\zeta|}{L}\right) f(\zeta) \right|&= \left| \int_{|\zeta|\geq L} d\zeta\, f(\zeta) - \int_{-L}^{L} d\zeta\, \frac{|\zeta|}{L} f(\zeta) \right| \\
&\leq \int_{|\zeta|\geq L} d\zeta\, |f(\zeta)| + \int_{-L}^{L} d\zeta\, \frac{|\zeta|}{L} |f(\zeta)| \\
&\leq \int_{|\zeta|\geq L}\!\! d\zeta\, |f(\zeta)| + \int_{L^{1/3} \leq |\zeta|\leq L}\!\! d\zeta\, |f(\zeta)| + \int_{-L^{1/3}}^{L^{1/3}}\!\! d\zeta\, \frac{|\zeta|}{L} |f(\zeta)| \\
&\leq \int_{|\zeta|\geq L^{1/3}} d\zeta\, |f(\zeta)| + \int_{-L^{1/3}}^{L^{1/3}}d\zeta\, \frac{|\zeta|}{L} \cdot M \\
&= \int_{|\zeta|\geq L^{1/3}} d\zeta\, |f(\zeta)| + \frac{M}{L^{1/3}}\, ,
\end{align*}
and both terms tend to zero as $L\rightarrow\infty$.
\end{proof}

\begin{proof}[Proof of Lemma~\ref{quantum Bochner}]
First of all, let us show that conditions (i)-(iii) are necessary for $f$ to be a characteristic function of a quantum state. This part of the proof is totally standard, see for instance~\cite[Theorem~5.4.1]{HOLEVO}. If $\rho$ is a normalised trace-class operator, by putting $\xi=0$ in Eq.~\eqref{chi} we find $\chi_{\rho}(0)=\Tr [\rho D(0)] = \Tr \rho = 1$, which proves the necessity of (i). Now, Stone's theorem ensures that the correspondence $\xi\mapsto D(\xi)$ is strongly operator continuous (i.e.\ $\lim_{\xi\rightarrow 0} \|\ket{\psi} - D(\xi)\ket{\psi}\|=0$ for all vectors $\ket{\psi}$, see \S~\ref{subsec operator topologies}), and in particular continuous in the weak operator topology. This latter fact can be equivalently rephrased as $\lim_{\xi\rightarrow 0} \Tr [ \rho D(\xi)] = \Tr [\rho]$ for all trace-class $\rho$ (see e.g.~\cite[Lemma~7]{approximate}), which is to say that $\chi_{\rho}$ is continuous at $0$ for all quantum states $\rho$. We move on to showing the necessity of requirement (iii). For $N\in\mathds{N}$, $\xi_{1},\ldots, \xi_{N}\in\mathds{R}^{2n}$ and $c\in\mathds{C}^{N}$, observe that
\begin{align*}
0 &\leq \Tr\left[ \left( \sum\nolimits_{\nu} c_{\nu} D(\xi_{\nu}) \right)^{\dag} \rho \left( \sum\nolimits_{\mu} c_{\mu} D(\xi_{\mu})\right) \right] \\
&= \sum_{\mu,\nu} c_{\mu} c_{\nu}^{*} \Tr [D(-\xi_{\nu})\,\rho\, D(\xi_{\mu})] \\
&= \sum_{\mu,\nu} c_{\mu} c_{\nu}^{*} \Tr [\rho\, D(\xi_{\mu})D(-\xi_{\nu})] \\
&= \sum_{\mu,\nu} c_{\mu} c_{\nu}^{*} e^{\frac{i}{2} \xi_{\mu}^\intercal\Omega \xi_{\nu}} \Tr [\rho\, D(\xi_{\mu}-\xi_{\nu})] \\
&= \sum_{\mu,\nu} c_{\mu} c_{\nu}^{*} e^{\frac{i}{2} \xi_{\mu}^\intercal\Omega \xi_{\nu}} \chi_{\rho}(\xi_{\mu}-\xi_{\nu})\, .
\end{align*}
This is equivalent to Eq.~\eqref{A positivity} for $A=\Omega$ and $f=\chi_{\rho}$.

Now we have to show the converse, i.e.\ that all functions $f$ satisfying (i)-(iii) are characteristic functions of some quantum state. By Lemma~\ref{continuity Omega positive lemma} we know that $f$ is continuous and bounded by $1$ in modulus. Let us first focus on the case when $f$ is square-integrable, i.e.\ such that $\int d^{2n}\xi\, |f(\xi)|^{2}<\infty$. This is more or less the strategy adopted in~\cite[Theorem~5.5.3]{BHATIA}. Under this assumption, one can construct
\bb
\rho_{f} \coloneqq \int \frac{d^{2n}\xi}{(2\pi)^{n}}\, f(\xi) D(-\xi)\, ,
\label{rhof}
\ee
which is well-defined in the sense of weak convergence (see~\cite[\S 5.3]{HOLEVO}, in particular Eq.~(5.3.18) and Theorem~5.3.3). Observe that $\rho_{f}$ is a Hilbert--Schmidt operator since $f$ is square-integrable, and that it satisfies
\bb
\Tr [\rho_{f} D(\xi)] = f(\xi)
\label{rhof Tr}
\ee
and
\bb
\Tr [\rho_{f}^{2}] =  \int \frac{d^{2n}\xi}{(2\pi)^{n}}\, |f(\xi)|^{2}\, ,
\label{rhof purity}
\ee
which is just a specialisation of Eq.~\eqref{Parseval}. We now show that $\rho_{f}$ is indeed positive semidefinite, i.e.\ that $\braket{\psi | \rho |\psi }\geq 0$ for all (normalised) vectors $\ket{\psi}$. Write
\begin{align*}
\braket{\psi | \rho_{f} |\psi } &\texteq{1} \int \frac{d^{2n}\xi}{(2\pi)^{n}} \, f(\xi)\, \braket{\psi|D(-\xi)|\psi} \\
&\texteq{2} \lim_{L\rightarrow \infty} \frac{1}{L^{2n}} \int_{-L/2}^{L/2} d^{2n}\xi_{1} d^{2n}\xi_{2}\, f(\xi_{1}-\xi_{2}) \braket{\psi|D(-\xi_{1}+\xi_{2})|\psi} \\
&\texteq{3} \lim_{L\rightarrow \infty} \frac{1}{L^{2n}} \int_{-L/2}^{L/2} d^{2n}\xi_{1} d^{2n}\xi_{2}\, f(\xi_{1}-\xi_{2}) e^{-\frac{i}{2} \xi_{1}^\intercal\Omega \xi_{2}} \braket{\psi|D(-\xi_{1})D(\xi_{2})|\psi} \\
&\texteq{4} \lim_{L\rightarrow \infty} \frac{1}{L^{2n}} \int_{-L/2}^{L/2} d^{2n}\xi_{1} d^{2n}\xi_{2}\, f(\xi_{1}-\xi_{2}) e^{-\frac{i}{2} \xi_{1}^\intercal\Omega \xi_{2}} \int \frac{d^{2n}\lambda}{(2\pi)^{n}}\braket{\psi |D(-\xi_{1})|\lambda}\braket{\lambda|D(\xi_{2})|\psi} \\
&\texteq{5} \lim_{L\rightarrow \infty} \frac{1}{L^{2n}} \int \frac{d^{2n}\lambda}{(2\pi)^{n}} \int_{-L/2}^{L/2} d^{2n}\xi_{1} d^{2n}\xi_{2}\, f(\xi_{1}-\xi_{2}) e^{-\frac{i}{2} \xi_{1}^\intercal\Omega \xi_{2}}\braket{\psi|D(-\xi_{1})|\lambda}\braket{\lambda|D(\xi_{2})|\psi} \\
&\textgeq{6} 0\, .
\end{align*}
The above steps are justified as follows: 1: is an application of Eq.~\eqref{rhof}; 2: follows from Lemma~\ref{lemma single-double} because $\xi\mapsto f(\xi) \braket{\psi|D(-\xi)|\psi}$ is integrable, as can be seen by writing
\begin{align*}
\int \frac{d^{2n}\xi}{(2\pi)^n} \left| f(\xi) \braket{\psi|D(-\xi)|\psi} \right| &\leq \left( \int \frac{d^{2n}\xi}{(2\pi)^n} \left| f(\xi) \right|^2 \right)^{1/2} \left( \int \frac{d^{2n}\xi}{(2\pi)^n} \left| \braket{\psi|D(-\xi)|\psi} \right|^2 \right)^{1/2} \\
&= \left( \int \frac{d^{2n}\xi}{(2\pi)^n} \left| f(\xi) \right|^2 \right)^{1/2} \\
&< \infty\, ,
\end{align*}
where we used Eq.~\eqref{Parseval} and the assumed square-integrability of $f$; 3: is an instance of Eq.~\eqref{Weyl}; 4: uses Eq.~\eqref{completeness explicit}; 5: is an application of Fubini--Tonelli theorem: since the iterated integral
\begin{align*}
&\int_{-L/2}^{L/2} d^{2n}\xi_{1} d^{2n}\xi_{2}\, \int \frac{d^{2n}\lambda}{(2\pi)^{n}} \left| f(\xi_{1}-\xi_{2}) e^{-\frac{i}{2} \xi_{1}^\intercal\Omega \xi_{2}} \braket{\psi |D(-\xi_{1})|\lambda}\braket{\lambda|D(\xi_{2})|\psi} \right| \\
&\qquad = \int_{-L/2}^{L/2} d^{2n}\xi_{1} d^{2n}\xi_{2}\,  |f(\xi_{1}-\xi_{2}) | \int \frac{d^{2n}\lambda}{(2\pi)^{n}} \left| \braket{\psi |D(-\xi_{1})|\lambda}\braket{\lambda|D(\xi_{2})|\psi} \right| \\
&\qquad \leq \int_{-L/2}^{L/2} d^{2n}\xi_{1} d^{2n}\xi_{2}\,  |f(\xi_{1}-\xi_{2}) | \left( \int \frac{d^{2n}\lambda}{(2\pi)^{n}} \left| \braket{\psi |D(-\xi_{1})|\lambda} \right|^{2}\right)^{1/2}  \left( \int \frac{d^{2n}\lambda}{(2\pi)^{n}} \left| \braket{\lambda |D(\xi_{2})|\psi} \right|^{2}\right)^{1/2} \\
&\qquad = \int_{-L/2}^{L/2} d^{2n}\xi_{1} d^{2n}\xi_{2}\,  |f(\xi_{1}-\xi_{2}) | \braket{\psi| D(-\xi_{1}) D(\xi_{1})|\psi}^{1/2} \braket{\psi| D(-\xi_{2}) D(\xi_{2})|\psi}^{1/2} \\
&\qquad =  \int_{-L/2}^{L/2} d^{2n}\xi_{1} d^{2n}\xi_{2}\,  |f(\xi_{1}-\xi_{2}) | \braket{\psi|\psi} \\
&\qquad \leq L^{4n}
\end{align*}
is clearly finite, we are free to exchange the integration order; finally, 6: is a consequence of the fact that 
\bbb
\int_{-L/2}^{L/2} d^{2n}\xi_{1} d^{2n}\xi_{2}\, f(\xi_{1}-\xi_{2}) e^{-\frac{i}{2} \xi_{1}^\intercal\Omega \xi_{2}}\braket{\psi|D(-\xi_{1})|\lambda}\braket{\lambda|D(\xi_{2})|\psi}\geq 0
\eee
for all $\lambda\in\mathds{R}^{2n}$, since this is an integral of a continuous bounded function over a bounded interval, hence it is a limit of Riemann sums, and each sum is positive because it is of the form $\sum_{\mu,\nu}c_{\mu}^{*}c_{\nu} f(\xi_{\mu}-\xi_{\nu}) e^{\frac{i}{2} \xi_{\mu}^\intercal\Omega\xi_{\nu}}\geq 0$, where the latter inequality holds because $f$ is assumed to be $\Omega$-positive. We have shown that $\rho_{f}$ defined in Eq.~\eqref{rhof} is positive semidefinite for all square-integrable functions $f$. In fact, in this case $\rho_{f}$ is actually trace-class, because $\Tr |\rho_{f}|=\Tr \rho_{f}=\Tr [\rho_{f} D(0)]=f(0)=1$, where we employed Eq.~\eqref{rhof Tr} together with hypothesis (i).

We now show that the working assumption that $f$ is square-integrable is actually a consequence of the properties of $f$. Since $|f|$ is bounded by $1$ by Lemma~\ref{continuity Omega positive lemma}, we can define the sequence of functions $f_{\epsilon}(\xi)\coloneqq f(\xi)\, e^{-\epsilon\, \xi^\intercal\xi}$, where $\epsilon>0$. For these functions, which are square-integrable and furthermore can be shown to be $\Omega$-positive as a consequence of Lemma~\ref{A positivity lemma}, one can construct a trace-class, positive semidefinite operator $\rho_{f_{\epsilon}}$ as in Eq.~\eqref{rhof}. As all legitimate density matrices, these will satisfy $\Tr [\rho_{f_{\epsilon}}^{2}]\leq 1$. Then, using Eq.~\eqref{rhof purity} one has
\bbb
1\geq \Tr \left[\rho_{f_{\epsilon}}^{2}\right] = \int \frac{d^{2n}\xi}{(2\pi)^{n}}\, |f_{\epsilon}(\xi)|^{2}\, .
\eee
For all $L>0$, using Lebesgue's dominated convergence theorem we can write
\bbb
\int^{L/2}_{L/2} \frac{d^{2n}\xi}{(2\pi)^{n}}\, |f(\xi)|^{2} = \lim_{\epsilon\rightarrow 0} \int^{L/2}_{L/2} \frac{d^{2n}\xi}{(2\pi)^{n}}\, |f_{\epsilon}(\xi)|^{2} \leq \lim_{\epsilon\rightarrow 0} \int \frac{d^{2n}\xi}{(2\pi)^{n}}\, |f_{\epsilon}(\xi)|^{2}  \leq 1\, .
\eee
Upon taking the limit $L\rightarrow\infty$, this shows that $f$ is indeed square-integrable, as claimed.
\end{proof}

\section{Complete positivity of linear bosonic maps} \label{app cp linear channels}

Throughout this appendix we show that an elementary and self-contained proof of Lemma~\ref{lemma cp linear channels} can be obtained as a by-product of our main result. Our argument should be compared with the original one by Demoen, Vanheuverzwijn and Verbeure~\cite{Demoen77} (see also~\cite{Evans1977, Demoen79}), which requires at the very least quite a few notions of functional analysis.

\begin{proof}[Proof of Lemma~\ref{lemma cp linear channels}]
As observed in Remark~\ref{direct cp conditions f rem2}, the proof of Theorem~\ref{approx n modes thm} combined with that of Lemma~\ref{J invertible} shows that conditions (i)-(iii) are sufficient for the map $\Phi_{X,f}$ to be a bosonic channel. Indeed, we showed that any such map is in the strong operator closure of the set of Gaussian dilatable channels, and quantum channels form a strong operator closed set.

Now we prove that conditions (i)-(iii) are also necessary. Since $\chi_{\Phi_{X,f}(\rho)}(\xi)=\chi_{\rho}(X\xi) f(\xi)$ must be the characteristic function of a quantum state for all input states $\rho$, (i) and (ii) follow in an elementary way from the quantum Bochner theorem (Lemma~\ref{quantum Bochner}). We then move on to (iii). First of all, we employ Eq.~\eqref{Weyl} and Eq.~\eqref{linear dag} to write 
\begin{align*}
e^{-\frac{i}{2} \xi_{\mu}^{\intercal}\Omega \xi_{\nu}} \Phi_{X,f}^{\dag}(D(\xi_{\mu}) D(-\xi_{\nu})) &= \Phi_{X,f}^{\dag}(D(\xi_{\mu}-\xi_{\nu})) \\
&= D(X\xi_{\mu}-X\xi_{\nu}) f(\xi_{\mu}-\xi_{\nu}) \\
&= D(X\xi_{\mu}) D(-X\xi_{\nu}) e^{-\frac{i}{2} \xi_{\mu}^{\intercal}X^{\intercal}\Omega X\xi_{\nu}} f(\xi_{\mu}-\xi_{\nu})\, ,
\end{align*}
which implies that
\bb
e^{\frac{i}{2} \xi_{\mu}^{\intercal} J(X) \xi_{\nu}} f(\xi_{\mu}-\xi_{\nu}) I = D(-X\xi_{\mu})\, \Phi_{X,f}^{\dag}(D(\xi_{\mu}) D(-\xi_{\nu}))\, D(X\xi_{\nu})\, ,
\label{useful identity}
\ee
where on the l.h.s\ $I$ stands for the identity operator on $\mathcal{H}_n$. If $\{\ket{0},\ket{1},\ldots\}$ is an orthonormal basis of $\mathcal{H}_n$ (called from now on the `computational basis'), define the truncated maximally entangled states on $\mathcal{H}_n\otimes \mathcal{H}_n$ as
\bb
\ket{\phi_{m}} \coloneqq \frac{1}{\sqrt{m}} \sum_{j=0}^{m-1} \ket{jj}\, .
\label{tme}
\ee
Observe that
\bb
\braket{\phi_{m} | A \otimes B | \phi_{m}} = \frac1m \Tr \left[\Pi_{m} A^{T} \Pi_{m} B\right]\, ,
\label{tme sandwich}
\ee
where
\bbb
\Pi_{m} \coloneqq \sum_{j=0}^{m-1} \ket{j}\!\!\bra{j}\, ,
\eee
and the transposition is taken with respect to the computational basis. Using Eq.~\eqref{tme sandwich} and the fact that $\Pi_m A\Phi_m\tends{s}{m\rightarrow\infty} A$ for all trace-class $A$, it is not difficult to see that
\bb
\lim_{m\rightarrow \infty} m\, \braket{\phi_{m} | A \otimes B | \phi_{m}} = \Tr [AB]\qquad \forall\ A\in\mathcal{T}(\mathcal{H}_n),\ \forall\ B\in\mathcal{B}(\mathcal{H}_n)\, ,
\label{tme and trace}
\ee
where $\mathcal{B}(\mathcal{H}_n)$ stands for the set of bounded operators.

Now, take an arbitrary test density matrix $\rho\in\mathcal{D}(\mathcal{H}_n)$, whose specific nature is irrelevant to us. For $\xi_{1},\ldots, \xi_{N}\in\mathbb{R}^{2n}$ and $c\in\mathbb{C}^N$, construct the bounded operator $B\coloneqq \sum_{\mu} c_{\mu} D(-X\xi_{\mu})^{*}\otimes D(-\xi_{\mu})$, where $*$ denotes complex conjugation again with respect to the computational basis. Write
\begin{align*}
0 &\textleq{1} \lim_{m\rightarrow\infty} m\, \Tr \left[ \left(I\otimes\Phi_{X,f}\right)\left(\ketbra{\phi_m}{\phi_m}\right)\, B^\dag (\rho^\intercal\otimes I) B \right] \\
&\texteq{2} \lim_{m\rightarrow\infty} m\, \bra{\phi_m} \left(I\otimes\Phi_{X,f}^\dag\right)\left(B^\dag (\rho^\intercal\otimes I) B\right)\ket{\phi_m} \\
&= \lim_{m\rightarrow\infty} m\, \sum_{\mu,\nu} c_\mu^*c_\nu \bra{\phi_m} D(-X\xi_\mu)^\intercal \rho^\intercal D(-X\xi_\nu) \otimes \Phi_{X,f}^\dag \left( D(\xi_\mu) D(-\xi_\nu)\right) \ket{\phi_m} \\
&= \sum_{\mu,\nu} c_\mu^*c_\nu \lim_{m\rightarrow\infty} m\, \bra{\phi_m} D(-X\xi_\mu)^\intercal \rho^\intercal D(-X\xi_\nu)^* \otimes \Phi_{X,f}^\dag \left( D(\xi_\mu) D(-\xi_\nu)\right) \ket{\phi_m} \\
&\texteq{3} \sum_{\mu,\nu} c_\mu^*c_\nu \Tr\left[ D(X\xi_\nu) \,\rho\, D(-X\xi_\mu)\, \Phi_{X,f}^\dag \left( D(\xi_\mu) D(-\xi_\nu)\right)\right] \\
&= \sum_{\mu,\nu} c_\mu^*c_\nu \Tr\left[ \rho\, D(-X\xi_\mu)\, \Phi_{X,f}^\dag \left( D(\xi_\mu) D(-\xi_\nu)\right)\, D(X\xi_\nu)\right] \\
&\texteq{4} \sum_{\mu,\nu} c_\mu^*c_\nu\, e^{\frac{i}{2} \xi_{\mu}^{\intercal} J(X) \xi_{\nu}} f(\xi_{\mu}-\xi_{\nu})\, .
\end{align*}
The justification of the above steps is as follows: 1: since $\Phi_{X,f}$ is assumed to be completely positive, the operator $\left(I\otimes \Phi_{X,f}\right)\left( \ketbra{\phi_m}{\phi_m}\right)$ must be positive semidefinite; 2: we applied the definition of adjoint of a linear map; 3: we used Eq.~\eqref{tme and trace}; and 4: we employed the identity in Eq.~\eqref{useful identity} together with the fact that $\Tr \rho = 1$. Since the above inequality holds for all $c\in\mathbb{C}^N$, we have that
\bbb
\left( f(\xi_\mu - \xi_\nu) e^{\frac{i}{2} \xi_\mu^\intercal J(X)\xi_\nu} \right)_{\mu,\nu}\geq 0\, ,
\eee
showing that $f$ is $J(X)$-positive.
\end{proof}

\section{Proof of the SWOT convergence lemma} \label{app SWOT lemma}

\begin{proof}[Proof of Lemma~\ref{SWOT convergence lemma}]

We start by showing that (i) $\Rightarrow$ (ii). Assume that $\rho_k \tendsk{WOT} \rho$, with $\rho$ being a density operator. Fix $\epsilon>0$, and pick a projector $\Pi$ onto a subspace of finite dimension such that $\left\|\rho - \Pi \rho\Pi\right\|_1<\epsilon$. We start by applying the triangle inequality:
\bbb
\|\rho - \rho_k\|_1 \leq \left\|\rho - \Pi \rho\Pi\right\|_1 + \left\| \Pi \rho\Pi -\Pi \rho_k \Pi \right\|_1 + \left\|\Pi\rho_k\Pi - \rho_k\right\|_1\, .
\eee
The first term of the sum on the r.h.s.\ is already small, while the second can also be made smaller than $\epsilon$ by taking $k\geq k_0$ for some large enough integer $k_0$ (this is because $\Pi \rho\Pi -\Pi \rho_k \Pi$ has finite rank). As for the third term, thinking of $\Pi \rho_k \Pi$ as a post-measurement state we see that it must be close to the initial state $\rho_k$ whenever the corresponding probability $\Tr \left[ \rho_k \Pi\right]$ is close to $1$. Since $\lim_k \Tr \left[ \rho_k \Pi\right]=\Tr\left[\rho\Pi\right]$ (as follows from convergence in the weak operator topology), we can require that $\Tr \left[ \rho_k \Pi\right]>1-\epsilon$ for $k\geq k_0$. Then the `gentle measurement lemma'~\cite[Lemma 9]{VV1999} yields
\bbb
\left\|\rho_k - \Pi\rho_k \Pi\right\|_1 \leq 2 \sqrt{1-\Tr [\rho_k \Pi]} < 2\sqrt{\epsilon}\, .
\eee
Putting all together shows that
\bbb
\|\rho - \rho_k\|_1 < 2\epsilon + 2\sqrt{\epsilon}\, ,
\eee
completing the proof that (i) $\Rightarrow$ (ii).

The implication (ii) $\Rightarrow$ (iii) is obvious, since if $\rho_k \tendsk{s} \rho $ then for all $\xi\in\mathbb{R}^{2n}$ one has that
\bbb
|\chi_{\rho_k}(\xi) - \chi_\rho(\xi)| = \Tr \left[ (\rho_k -\rho) D(\xi)\right] \leq \|\rho_k - \rho\|_1\tendsk{} 0\, .
\eee 

The only missing step is thus (iii) $\Rightarrow$ (i). To show this, assume that $\chi_{\rho_k}$ converges pointwise to a function $f$ that is continuous at $0$.
We start by showing that $f$ is the characteristic function of a quantum state. First, $f(0)=\lim_k \chi_{\rho_k}(0) = \lim_k 1 = 1$, so the normalisation condition is met. Secondly, $f$ is continuous at $0$ by hypothesis. Third, the numbers
\bbb
f(\xi_\mu-\xi_\nu) e^{\frac{i}{2} \xi_\mu^\intercal \Omega \xi_\nu} = \lim_k \chi_{\rho_k}(\xi_\mu-\xi_\nu) e^{\frac{i}{2} \xi_\mu^\intercal \Omega \xi_\nu}
\eee
are the entries of an $N\times N$ positive semidefinite matrix for all $\xi_1,\ldots, \xi_N$, because the set of positive semidefinite matrices of a fixed size is closed, and the numbers inside the limit on the r.h.s.\ of the above equation are the entries of a positive semidefinite matrix for all $k$. By the quantum Bochner theorem, we conclude that $f$ must be the characteristic function of a quantum state, i.e.\ $f =\chi_\rho$ for some density operator $\rho$.

We now show that $\rho_k \tendsk{WOT} \rho$. Pick a finite-rank operator $A$ and $\epsilon>0$, and for some radius $R>0$ yet to be determined write
\begin{align}
\left| \Tr \left[ (\rho-\rho_k) A \right] \right| &= \left| \int d\xi \left( \chi_\rho(\xi) - \chi_{\rho_k}(\xi)\right) \chi_A(-\xi) \right| \label{convergence thm eq1} \\
&\leq \int_{|\xi|\leq R} d\xi \left| \chi_\rho(\xi) - \chi_{\rho_k}(\xi)\right| |\chi_A(-\xi)| + \int_{|\xi|>R} d\xi \left| \chi_\rho(\xi) - \chi_{\rho_k}(\xi)\right| |\chi_A(-\xi)| \label{convergence thm eq2} \\
&\leq \int_{|\xi|\leq R} d\xi \left| \chi_\rho(\xi) - \chi_{\rho_k}(\xi)\right| |\chi_A(-\xi)| \\
&\qquad + \left(\int_{|\xi|>R} d\xi \left| \chi_\rho(\xi) - \chi_{\rho_k}(\xi)\right|^2\right)^{1/2} \left(\int_{|\xi|>R} d\xi |\chi_A(\xi)|^2 \right)^{1/2} \label{convergence thm eq3} \\
&\leq \int_{|\xi|\leq R} d\xi \left| \chi_\rho(\xi) - \chi_{\rho_k}(\xi)\right| |\chi_A(-\xi)| + 2 \left(\int_{|\xi|>R} d\xi |\chi_A(\xi)|^2 \right)^{1/2} , \label{convergence thm eq4}
\end{align}
where for the last step we observed that
\bbb
\left(\int_{|\xi|>R} d\xi \left| \chi_\rho(\xi) - \chi_{\rho_k}(\xi)\right|^2\right)^{1/2} \leq \left(\int d\xi \left| \chi_\rho(\xi) - \chi_{\rho_k}(\xi)\right|^2\right)^{1/2} = \|\rho-\rho_k\|_2\leq 2\, .
\eee
Now, since $\chi_A\in L^2(\mathbb{R}^{2n})$ is a square-integrable function, the second addend of Eq.~\eqref{convergence thm eq4} can be made smaller than $\epsilon/2$ by taking $R$ large enough. As for the first addend, we can apply Lebesgue's dominated convergence theorem and show that it converges to $0$ for all fixed $R$. Indeed, since the integrable functions $\left| \chi_\rho(\xi) - \chi_{\rho_k}(\xi)\right| |\chi_A(-\xi)|$ tend to $0$ pointwise, and moreover are bounded by $2$ on the whole ball $|\xi|\leq R$, we have
\bbb
\lim_k \int_{|\xi|\leq R} d\xi \left| \chi_\rho(\xi) - \chi_{\rho_k}(\xi)\right| |\chi_A(-\xi)| = 0\, .
\eee
and therefore $\int_{|\xi|\leq R} d\xi \left| \chi_\rho(\xi) - \chi_{\rho_k}(\xi)\right| |\chi_A(-\xi)|<\epsilon/2$ for $n\geq N$. Putting all together, we see that
\bbb
\left| \Tr \left[ (\rho-\rho_k) A \right] \right| < \epsilon
\eee
when $k\geq k_0$, which shows that
\bb
\Tr \left[\rho_k A\right] \tendsk{} \Tr \left[ \rho A\right]
\ee
for all finite-rank operators $A$. This is the same as saying that $\rho_k \tendsk{WOT} \rho$, thereby completing the proof.
\end{proof}

\bibliography{biblio}

\begin{thebibliography}{10}

\bibitem{loock-review}
S.L. Braunstein and P.~van Loock.
\newblock Quantum information with continuous variables.
\newblock {\em Rev. Mod. Phys.}, 77:513--577, 2005.

\bibitem{hw01}
A.S. Holevo and R.F. Werner.
\newblock Evaluating capacities of bosonic {G}aussian channels.
\newblock {\em Phys. Rev. A}, 63:032312, 2001.

\bibitem{raul-rmp}
C.~Weedbrook, S.~Pirandola, R.~Garc\'{\i}a-Patr\'on, N.J. Cerf, T.C. Ralph,
  J.H. Shapiro, and S.~Lloyd.
\newblock Gaussian quantum information.
\newblock {\em Rev. Mod. Phys.}, 84:621--669, 2012.

\bibitem{idler}
M.~Idel and R.~K\"{o}nig.
\newblock On quantum additive {G}aussian noise channels.
\newblock {\em Quantum Info. Comput.}, 17(3-4):283--302, March 2017.

\bibitem{pang}
K.K. Sabapathy and A.~Winter.
\newblock Non-{G}aussian operations on bosonic modes of light: photon-added
  {G}aussian channels.
\newblock {\em Phys. Rev. A}, 95:062309, 2017.

\bibitem{wolfbench}
M.~Owari, M.B. Plenio, E.S. Polzik, A.~Serafini, and M.M. Wolf.
\newblock Squeezing the limit: quantum benchmarks for the teleportation and
  storage of squeezed states.
\newblock {\em New J. Phys.}, 10(11):113014, 2008.

\bibitem{Shirokov2013}
M.E. Shirokov.
\newblock Reversibility of a quantum channel: General conditions and their
  applications to {B}osonic linear channels.
\newblock {\em J. Math. Phys.}, 54(11):112201, 2013.

\bibitem{Holevo2013}
A.S. Holevo.
\newblock Extreme bosonic linear channels.
\newblock {\em Theor. Math. Phys.}, 174(2):288--297, 2013.

\bibitem{verdu}
D.~Guo, S.~Shamai, and S.~Verd{\'u}.
\newblock Additive non-{G}aussian noise channels: Mutual information and
  conditional mean estimation.
\newblock In {\em Proc. Int. Symp. Info. Theory, Adelaide, Australia, Sept.
  2005}, pages 719--723. IEEE, 2005.

\bibitem{stefan}
S.~Huber and R.~K\"{o}nig.
\newblock Coherent state coding approaches the capacity of non-{G}aussian
  bosonic channels.
\newblock {\em J. Phys. A}, 51(18):184001, 2018.

\bibitem{andersen2015hybrid}
U.L. Andersen, J.S. Neergaard-Nielsen, P.~Van~Loock, and A.~Furusawa.
\newblock Hybrid discrete-and continuous-variable quantum information.
\newblock {\em Nat. Phys.}, 11(9):713, 2015.

\bibitem{simon00}
R.~Simon.
\newblock {Peres-Horodecki} separability criterion for continuous variable
  systems.
\newblock {\em Phys. Rev. Lett.}, 84:2726--2729, 2000.

\bibitem{Werner-Wolf}
R.F. Werner and M.M. Wolf.
\newblock Bound entangled {G}aussian states.
\newblock {\em Phys. Rev. Lett.}, 86:3658--3661, 2001.

\bibitem{Giedke2001}
G.~Giedke, L.-M. Duan, J.I. Cirac, and P.~Zoller.
\newblock Distillability criterion for all bipartite {G}aussian states.
\newblock {\em Quantum Info. Comput.}, 1(3):79--86, October 2001.

\bibitem{revisited}
L.~Lami, A.~Serafini, and G.~Adesso.
\newblock Gaussian entanglement revisited.
\newblock {\em New J. Phys.}, 20(2):023030, 2018.

\bibitem{LL-log-det}
L.~Lami, C.~Hirche, G.~Adesso, and A.~Winter.
\newblock From log-determinant inequalities to {G}aussian entanglement via
  recoverability theory.
\newblock {\em IEEE Trans. Inf. Theory}, 63(11):7553--7568, 2017.

\bibitem{ent-gauss1}
J.~Eisert, S.~Scheel, and M.B. Plenio.
\newblock Distilling {G}aussian states with {G}aussian operations is
  impossible.
\newblock {\em Phys. Rev. Lett.}, 89:137903, 2002.

\bibitem{ent-gauss2}
J.~Fiur\'a\ifmmode~\check{s}\else \v{s}\fi{}ek.
\newblock Gaussian transformations and distillation of entangled {G}aussian
  states.
\newblock {\em Phys. Rev. Lett.}, 89:137904, 2002.

\bibitem{ent-gauss3}
G.~Giedke and I.J. Cirac.
\newblock Characterization of {G}aussian operations and distillation of
  {G}aussian states.
\newblock {\em Phys. Rev. A}, 66:032316, 2002.

\bibitem{G-resource-theories}
L.~Lami, B.~Regula, X.~Wang, R.~Nichols, A.~Winter, and G.~Adesso.
\newblock Gaussian quantum resource theories.
\newblock {\em Phys. Rev. A}, 98:022335, 2018.

\bibitem{seth}
S.~Lloyd and S.L. Braunstein.
\newblock Quantum computation over continuous variables.
\newblock {\em Phys. Rev. Lett.}, 82:1784--1787, 1999.

\bibitem{cerf2005non}
N.J. Cerf, O.~Kr{\"u}ger, P.~Navez, R.F. Werner, and M.M. Wolf.
\newblock Non-{G}aussian cloning of quantum coherent states is optimal.
\newblock {\em Phys. Rev. Lett.}, 95(7):070501, 2005.

\bibitem{adesso2009optimal}
G.~Adesso, F.~Dell'Anno, S.~De~Siena, F.~Illuminati, and L.A.M. Souza.
\newblock Optimal estimation of losses at the ultimate quantum limit with
  non-{G}aussian states.
\newblock {\em Phys. Rev. A}, 79(4):040305, 2009.

\bibitem{ohliger2010limitations}
M.~Ohliger, K.~Kieling, and J.~Eisert.
\newblock Limitations of quantum computing with {G}aussian cluster states.
\newblock {\em Phys. Rev. A}, 82(4):042336, 2010.

\bibitem{sabapathy2011robustness}
K.K. Sabapathy, J.S. Ivan, and R.~Simon.
\newblock Robustness of non-{G}aussian entanglement against noisy amplifier and
  attenuator environments.
\newblock {\em Phys. Rev. Lett.}, 107(13):130501, 2011.

\bibitem{andersen2013high}
U.L. Andersen and T.C. Ralph.
\newblock High-fidelity teleportation of continuous-variable quantum states
  using delocalized single photons.
\newblock {\em Phys. Rev. Lett.}, 111(5):050504, 2013.

\bibitem{sabapathy2018states}
K.K. Sabapathy and C.~Weedbrook.
\newblock {ON} states as resource units for universal quantum computation with
  photonic architectures.
\newblock {\em Phys. Rev. A}, 97(6):062315, 2018.

\bibitem{dell2007continuous}
F.~Dell'Anno, S.~De~Siena, L.~Albano, and F.~Illuminati.
\newblock Continuous-variable quantum teleportation with non-{G}aussian
  resources.
\newblock {\em Phys. Rev. A}, 76(2):022301, 2007.

\bibitem{su2018implementing}
D.~Su, K.K. Sabapathy, C.R. Myers, H.~Qi, C.~Weedbrook, and K.~Br{\'a}dler.
\newblock Implementing quantum algorithms on temporal photonic cluster states.
\newblock {\em Phys. Rev. A}, 98:032316, 2018.

\bibitem{walschaers2018tailoring}
M.~Walschaers, S.~Sarkar, V.~Parigi, and N.~Treps.
\newblock Tailoring non-{G}aussian continuous-variable graph states.
\newblock {\em Preprint arXiv:1804.09444}, 2018.

\bibitem{Genoni2008}
M.G. Genoni, M.G.A. Paris, and K.~Banaszek.
\newblock Quantifying the non-{G}aussian character of a quantum state by
  quantum relative entropy.
\newblock {\em Phys. Rev. A}, 78:060303, 2008.

\bibitem{Genoni2010}
M.G. Genoni and M.G.A. Paris.
\newblock Quantifying non-{G}aussianity for quantum information.
\newblock {\em Phys. Rev. A}, 82:052341, 2010.

\bibitem{ivan2012measure}
J.S. Ivan, M.S. Kumar, and R.~Simon.
\newblock A measure of non-{G}aussianity for quantum states.
\newblock {\em Quantum Inf. Process.}, 11(3):853--872, 2012.

\bibitem{Marian2013}
P.~Marian and T.A. Marian.
\newblock Relative entropy is an exact measure of non-{G}aussianity.
\newblock {\em Phys. Rev. A}, 88:012322, 2013.

\bibitem{Takagi2018}
R.~Takagi and Q.~Zhuang.
\newblock Convex resource theory of non-{G}aussianity.
\newblock {\em Phys. Rev. A}, 97:062337, 2018.

\bibitem{albarelli2018resource}
F.~Albarelli, M.G. Genoni, M.G.A. Paris, and A.~Ferraro.
\newblock Resource theory of quantum non-{G}aussianity and {W}igner negativity.
\newblock {\em Preprint arXiv:1804.05763}, 2018.

\bibitem{quntao18}
Q.~Zhuang, P.W. Shor, and J.H. Shapiro.
\newblock Resource theory of non-{G}aussian operations.
\newblock {\em Phys. Rev. A}, 97:052317, 2018.

\bibitem{Jabbour2018}
M.G. Jabbour and N.J. Cerf.
\newblock {F}ock majorization in bosonic quantum channels with a passive
  environment.
\newblock {\em Preprint arXiv:1806.06044}, 2018.

\bibitem{ivan2011operator}
J.S. Ivan, K.K. Sabapathy, and R.~Simon.
\newblock Operator-sum representation for bosonic {G}aussian channels.
\newblock {\em Phys. Rev. A}, 84(4):042311, 2011.

\bibitem{min-dilations}
F.~Caruso, J.~Eisert, V.~Giovannetti, and A.S. Holevo.
\newblock Optimal unitary dilation for bosonic gaussian channels.
\newblock {\em Phys. Rev. A}, 84:022306, 2011.

\bibitem{BK-teleportation}
S.L. Braunstein and H.J. Kimble.
\newblock Teleportation of continuous quantum variables.
\newblock {\em Phys. Rev. Lett.}, 80:869--872, 1998.

\bibitem{HOLEVO}
A.S. Holevo.
\newblock {\em Probabilistic and Statistical Aspects of Quantum Theory}.
\newblock Publications of the Scuola Normale Superiore. Scuola Normale
  Superiore, 2011.

\bibitem{Pool1966}
J.C.T. Pool.
\newblock Mathematical aspects of the {W}eyl correspondence.
\newblock {\em J. Math. Phys.}, 7(1):66--76, 1966.

\bibitem{Kastler65}
D.~Kastler.
\newblock The {$C^*$}-algebras of a free {B}oson field. {I}. discussion of the
  basic facts.
\newblock {\em Commun. Math. Phys.}, 1(1):14--48, 1965.

\bibitem{Loupias66}
G.~Loupias and S.~Miracle-Sole.
\newblock {$C^*$}-alg\`{e}bres des syst\`{e}mes canoniques. {I}.
\newblock {\em Commun. Math. Phys.}, 2(1):31--48, 1966.

\bibitem{BUCCO}
A.~Serafini.
\newblock {\em Quantum Continuous Variables: A Primer of Theoretical Methods}.
\newblock CRC Press, Taylor \& Francis Group, 2017.

\bibitem{NC}
M.A. Nielsen and I.L. Chuang.
\newblock {\em Quantum Computation and Quantum Information: 10th Anniversary
  Edition}.
\newblock Cambridge University Press, Cambridge, 2010.

\bibitem{Demoen77}
B.~Demoen, P.~Vanheuverzwijn, and A.~Verbeure.
\newblock Completely positive maps on the {CCR}-algebra.
\newblock {\em Lett. Math. Phys.}, 2(2):161--166, 1977.

\bibitem{Evans1977}
D.E. Evans and J.T. Lewis.
\newblock Some semigroups of completely positive maps on the {CCR} algebra.
\newblock {\em J. Funct. Anal.}, 26(4):369 -- 377, 1977.

\bibitem{Demoen79}
B.~Demoen, P.~Vanheuverzwijn, and A.~Verbeure.
\newblock Completely positive quasi-free maps of the {CCR}-algebra.
\newblock {\em Rep. Math. Phys.}, 15(1):27 -- 39, 1979.

\bibitem{dutta94}
R.~Simon, N.~Mukunda, and B.~Dutta.
\newblock Quantum-noise matrix for multimode systems: {U}(n) invariance,
  squeezing, and normal forms.
\newblock {\em Phys. Rev. A}, 49:1567--1583, 1994.

\bibitem{triangle}
M.~Reck, A.~Zeilinger, H.J. Bernstein, and P.~Bertani.
\newblock Experimental realization of any discrete unitary operator.
\newblock {\em Phys. Rev. Lett.}, 73:58--61, 1994.

\bibitem{square}
W.R. Clements, P.C. Humphreys, B.J. Metcalf, W.S. Kolthammer, and I.A.
  Walmsley.
\newblock Optimal design for universal multiport interferometers.
\newblock {\em Optica}, 3(12):1460--1465, 2016.

\bibitem{stinespring}
W.F. Stinespring.
\newblock Positive functions on {C*}-algebras.
\newblock {\em Proc. Am. Math. Soc.}, 6(2):211--216, 1955.

\bibitem{GOSSON}
M.A. de~Gosson.
\newblock {\em Symplectic Geometry and Quantum Mechanics}.
\newblock Operator Theory: Advances and Applications. Birkh{\"a}user Basel,
  2006.

\bibitem{partha}
K.R. Parthasarathy.
\newblock Symplectic dilations, {G}aussian states and {G}aussian channels.
\newblock {\em Indian J. Pure Appl. Math.}, 46(4):419--439, 2015.

\bibitem{wolfact}
D.~Lercher, G.~Giedke, and M.M. Wolf.
\newblock Standard super-activation for {G}aussian channels requires squeezing.
\newblock {\em New J. Phys.}, 15(12):123003, 2013.

\bibitem{Davies1969}
E.B. Davies.
\newblock Quantum stochastic processes.
\newblock {\em Commun. Math. Phys.}, 15(4):277--304, 1969.

\bibitem{Arazy1981}
J.~Arazy.
\newblock More on convergence in unitary matrix spaces.
\newblock {\em Proc. Amer. Math. Soc.}, 83(1):44--48, 1981.

\bibitem{Cushen1971}
C.D. Cushen and R.L. Hudson.
\newblock A quantum-mechanical central limit theorem.
\newblock {\em J. Appl. Probab.}, 8(3):454--469, 1971.

\bibitem{JACOD-PROTTER}
J.~Jacod and P.~Protter.
\newblock {\em Probability essentials}.
\newblock Springer Science \& Business Media, 2004.

\bibitem{Wolf2006}
M.M. Wolf, G.~Giedke, and J.I. Cirac.
\newblock Extremality of {G}aussian quantum states.
\newblock {\em Phys. Rev. Lett.}, 96:080502, 2006.

\bibitem{Aharonov1998}
D.~Aharonov, A.~Kitaev, and N.~Nisan.
\newblock Quantum circuits with mixed states.
\newblock In {\em Proceedings of the Thirtieth Annual ACM Symposium on Theory
  of Computing}, STOC '98, pages 20--30, New York, NY, USA, 1998. ACM.

\bibitem{PAULSEN}
V.~Paulsen.
\newblock {\em Completely Bounded Maps and Operator Algebras}.
\newblock Cambridge Studies in Advanced Mathematics. Cambridge University
  Press, 2003.

\bibitem{Shirokov2018}
M.E. Shirokov.
\newblock On the energy-constrained diamond norm and its application in quantum
  information theory.
\newblock {\em Problems Inform. Transmission}, 54(1):20--33, 2018.

\bibitem{VV-E-diamond}
A.~Winter.
\newblock Energy-constrained diamond norm with applications to the uniform
  continuity of continuous variable channel capacities.
\newblock {\em Preprint arXiv:1712.10267}, 2017.

\bibitem{Shirokov2008}
M.E. Shirokov and A.S. Holevo.
\newblock On approximation of infinite-dimensional quantum channels.
\newblock {\em Problems Inform. Transmission}, 44(2):73--90, 2008.

\bibitem{Wilde2018}
M.M. Wilde.
\newblock Strong and uniform convergence in the teleportation simulation of
  bosonic {G}aussian channels.
\newblock {\em Phys. Rev. A}, 97:062305, 2018.

\bibitem{PLOB}
S.~Pirandola, R.~Laurenza, C.~Ottaviani, and L.~Banchi.
\newblock Fundamental limits of repeaterless quantum communications.
\newblock {\em Nat. Commun.}, 8:15043, 2017.

\bibitem{Pirandola2018}
S.~Pirandola, S.L. Braunstein, R.~Laurenza, C.~Ottaviani, T.P.W. Cope,
  G.~Spedalieri, and L.~Banchi.
\newblock Theory of channel simulation and bounds for private communication.
\newblock {\em Quantum Sci. Technol.}, 3(3):035009, 2018.

\bibitem{HJ1}
R.A. Horn and C.R. Johnson.
\newblock {\em Matrix Analysis}.
\newblock Cambridge University Press, 1990.

\bibitem{Werner84}
R.~Werner.
\newblock Quantum harmonic analysis on phase space.
\newblock {\em J. Math. Phys.}, 25(5):1404--1411, 1984.

\bibitem{deGosson2017}
M.A. de~Gosson.
\newblock Quantum harmonic analysis of the density matrix: Basics.
\newblock {\em Preprint arXiv:1703.00889}, 2017.

\bibitem{BHATIA}
R.~Bhatia.
\newblock {\em Positive definite matrices}.
\newblock Princeton University Press, 2009.

\bibitem{HJ2}
R.A. Horn and C.R. Johnson.
\newblock {\em Topics in Matrix Analysis}.
\newblock Topics in Matrix Analysis. Cambridge University Press, 1994.

\bibitem{approximate}
L.~Lami, S.~Das, and Wilde M.M.
\newblock Approximate reversal of quantum {G}aussian dynamics.
\newblock {\em J. Phys. A}, 51(12):125301, 2018.

\bibitem{VV1999}
A.~Winter.
\newblock Coding theorem and strong converse for quantum channels.
\newblock {\em IEEE Trans. Inf. Theory}, 45(7):2481--2485, 1999.

\end{thebibliography}

\end{document}